\documentclass[
 reprint,]{revtex4-2}

\usepackage{comment}
\usepackage{graphicx}
\usepackage{dcolumn}
\usepackage{bm}
\usepackage{xcolor}
\usepackage{cancel}
\newcommand{\ket}[1]{| {#1} \rangle} 
\newcommand{\bra}[1]{\langle {#1} |} 

\usepackage[utf8]{inputenc}
\usepackage[T1]{fontenc}
\usepackage[english,serbian]{babel}
\usepackage[OT1]{fontenc}
\usepackage{mathrsfs}
\usepackage{graphicx}
\usepackage{amssymb}
\usepackage{amsbsy}
\usepackage{latexsym}
\usepackage{mathtools}
\usepackage{titlesec}
\usepackage{amsmath}
\usepackage{amsthm}
\usepackage{color}
\usepackage[utf8]{inputenc}
\usepackage{lipsum}
\usepackage[normalem]{ulem}
\usepackage{fontenc}
\newtheorem{theorem}{Theorem}[section]
\newtheorem{lemma}[theorem]{Lemma}
\newtheorem{corollary}{Corollary}[theorem]

\usepackage{chngcntr}

\def\nn{\nonumber}

\def\de{\textrm{d}}

\def\dj{d\kern-0.4em\char"16\kern-0.1em}
\def \Dj {\mbox{\raise0.3ex\hbox{-}\kern-0.4em D}}

\begin{document}

\preprint{APS/123-QED}

\title{Collapse Scenario and Final State of Evaporation for Schwarzschild Black Hole in Dimensionally-Reduced Model of Dilaton Gravity}

\author{Stefan \Dj or\dj evi\'{c} and Voja Radovanovi\'{c}}
\affiliation{Faculty of Physics, University of Belgrade, Studentski Trg 12-16, 11000 Belgrade, Serbia}

\selectlanguage{english}

\date{\today}

\begin{abstract}

We study a model of (1+1)-dimensional dilaton gravity derived from the four-dimensional Einstein-Hilbert action by dimensional reduction in a semiclassical approximation including back-reaction. The reduced action involves the cosmological constant and admits black hole solutions; among these, the solutions of interest are the evaporating black holes. We solve the equations of motion perturbatively by demanding that the initial state geometry is a Minkowski space-time. When the infalling matter intersects the space-time boundary, the black hole forms and begins to evaporate. We find that as the black hole evaporates, its horizon shrinks and at a finite space-time point, it meets the singularity and a shockwave occurs. Along this hypersurface, the metric can be continuously matched to a static end-state geometry. This end-state geometry is Minkowski space-time within the first-order of perturbation theory.

\end{abstract}

\maketitle

\selectlanguage{english}

\section{Introduction}

\par Various models of (1+1)-dimensional dilaton gravity, such as Jackiw-Teitelboim (JT) \cite{JT1,JT2} and Callan–Giddings–Harvey–Strominger (CGHS) \cite{CGHS}, have proved to be very useful for analytical investigation of the black hole formation and evaporation, as well as the eternal black holes. The utility of these models is that after integration of the fluctuations of matter fields and including $1$-loop quantum corrections, field equations can be made exactly solvable by introducing suitable correction terms as in Russo-Susskind-Thorlacius (RST), Bose-Parker-Peleg (BPP) and CGHS model \cite{RST1,RST2,BPP,QCGHS}. Even when not completely solvable, the two-dimensional equations are less complicated and can be solved perturbatively, as in the case of the DREH (Dimensionally-Reduced Einstein-Hilbert) model. The DREH model was studied in \cite{DREH1,Dimred} in case of the eternal black hole solution. Dimensional reduction of the Einstein-Hilbert action coupled with electrodynamics has also been analyzed in \cite{Frolov,MVM,MV}. A more comprehensive account of dilaton gravity models can be found in \cite{Fabri, DWV}.
\par Two-dimensional dilaton gravity models are mostly used as toy models for the purpose of resolving the information loss paradox \cite{Hawking1}. It is widely believed that the resolution of this paradox could be an important milestone at understanding the quantum nature of gravity, as well as the black hole entropy \cite{Bekenstein1,Bekenstein2,Hawking2}. Hawking's calculation \cite{Hawking3} predicts that the entanglement entropy of the Hawking radiation monotonically increases with time, even beyond the entropy  limit set up by the Bekenstein-Hawking (BH) formula: $S_{\text{BH}}=A(\text{horizon})/4G^{(4)}_{\text{N}}$ (we set $c=\hbar=k_{\text{B}}=1$, and $G^{(4)}_{\text{N}}$ stands for the Newton's constant in four space-time dimensions). This behavior of the entanglement entropy leads to the violation of the unitarity principle of quantum mechanics, and, in turn, to the loss of information, hence the name - information loss paradox.
\par The results derived from many of the two-dimensional gravity models indeed suggest that information does get lost in the process of black hole evaporation\cite{RST1,RST2,QCGHS,BPP}. Recently, a new approach has been studied in which the so called "island formula" for calculation of the entanglement entropy in gravitational systems is used to reproduce the Page curve \cite{Page_1993,Page_2013}; the curve that the entanglement entropy of the radiation should follow so that the unitary evolution is achieved.
\par The Page curve has recently been reproduced in variety of two-dimensional dilaton gravity models \cite{MaldecenaJT,RSTislands,BPPislands,Schislands,notes,DREH1,islands1,islands2,islands3,islands4}. Another important step in the analysis of unitarity is to find the end-state geometry, after the evaporation of the black hole. This is the problem that we shell address in this paper.
\par This paper is organized in the following manner. In the subsequent section we study a model of (1+1)-dimensional dilaton gravity derived from Einstein-Hilbert action by dimensional reduction. The main result of this section is a classical gravitational collapse scenario. Section III analyses the contribution of $1$-loop quantum corrections to the energy-momentum tensor of matter fields. The metric for an evaporating black hole is computed in section IV, up to first-order in $\hbar$, with the initial condition being that the space-time before the formation of the black hole is Minkowski space-time. Finally, in Section V, the end-state of the radiation is discussed. A conclusion and some proposals for the future work are given in Section VI.    
\section{DREH model}
\par The DREH model is a (1+1)-dimensional model of dilaton gravity obtained from the usual four-dimensional Einstein-Hilbert (EH) action by using a spherically symmetric ansatz and integrating out the angles. It is similar to the CGHS model of dilaton gravity \cite{CGHS}. In particular, it admits Schwarzschild black hole solution. 
\par Dimensional reduction is a well-known procedure, and we only give a brief overview. More technical details can be found in \cite{MVM}. We begin with the EH action in four dimensions,  
\begin{equation}
S_{\text{EH}}=\frac{1}{16\pi G_{\text{N}}^{(4)}}\int \de^{4}x\sqrt{-g^{(4)}}R^{(4)},
\end{equation}
where $G_{\text{N}}^{(4)}$ is the four-dimensional Newton's constant, $g^{(4)}_{AB}$ $(A,B=0,1,2,3)$ is the four-dimensional metric and $R^{(4)}$ is the four-dimensional curvature scalar.

Consider the following spherically symmetric ansatz for the metric, 
\begin{equation}\label{ansatz}
\rm{d}s^{2}=g_{\mu\nu}\mathrm{d} x^{\mu}\mathrm{d} x^{\nu}+\lambda^{-2}e^{-2\phi}\left[\rm{d}\theta^{2}
+\sin^{2}\theta \rm{d}\varphi^{2}\right],     
\end{equation}
where $\mu,\nu=0,1$, and $g_{\mu\nu}$ depends only on $x^{0}$ and $x^{1}$,the dilaton field $\phi$ is related to the radial coordinate $r=\lambda^{-1}e^{-\phi}$, and $\lambda^{2}$ is a constant that plays the role of the cosmological constant in the reduced theory. Using the ansatz (\ref{ansatz}), we can derive the relation between the curvature scalar $R$ of the reduced (1+1)-dimensional theory, and the curvature scalar $R^{(4)}$ of the four-dimensional theory,
\begin{align}
R^{(4)}=R+2(\nabla\phi)^{2}+2\lambda^{2}e^{2\phi}-2e^{2\phi}\Box e^{-2\phi}.
\end{align}
Also, we have
\begin{align}
\de^{4}x\sqrt{-g^{(4)}}=\de^{2}x\de\theta\de\varphi\sqrt{-g}\frac{e^{-2\phi}}{\lambda^{2}}\sin^{2}{\theta}.
\end{align}
The EH gravity action, reduces to the dilaton gravity action $S_{\phi}$  (up to a surface term):
\begin{equation}\label{Dilaton_action}
S_{\phi}=\frac{1}{4G}\int \de^{2}x\sqrt{-g}\Big[e^{-2\phi}\left(R+2(\nabla\phi)^{2}\right)+2\lambda^{2}\Big],    
\end{equation}
where $G\equiv\lambda^{2}G_{\text{N}}^{(4)}$ is the Newton's constant of the reduced theory. 
\par To take the quantum corrections into account, we introduce a massless scalar field $f$ minimally coupled to gravity. The corresponding action is denoted by $S_{m}$, so the classical DREH action is given by:
\begin{align}
&S_{\text{DREH}}=S_{\phi}+S_{m}\nn\\
&=\frac{1}{4G}\int \de^{2}x\sqrt{-g}\Big[e^{-2\phi}\left(R+2(\nabla\phi)^{2}\right)+2\lambda^{2}\Big]\nn\\
&-\frac{1}{2}\int \de^{2}x\sqrt{-g}\left(\nabla f\right)^{2}.\label{dejstvo}
\end{align}
The classical field equations are obtained by varying $S_{\rm{DREH}}$ with respect to $g_{\mu\nu}$, $\phi$ and $f$, respectively, resulting in:
\begin{widetext}
\begin{align} 
\left[2\nabla_{\mu}\nabla_{\nu}\phi-2\nabla_{\mu}\phi\nabla_{\nu}\phi +g_{\mu\nu}\left(3(\nabla\phi)^{2}-2\Box\phi-\lambda^{2}e^{2\phi}\right)\right] e^{-2\phi}&=2GT^{(f)}_{\mu\nu,\text{class}}, \label{g-equation}\\
            (\nabla\phi)^{2}-\Box\phi&=\frac{R}{2},\label{phi-equation}\\
            \Box f&=0\label{box_f},
\end{align}
\end{widetext}
with classical energy-momentum tensor of field $f$, given by:
\begin{equation}
T^{(f)}_{\mu\nu,\text{class}}=\frac{-2}{\sqrt{-g}}\frac{\delta S_{m}}{\delta g^{\mu\nu}}=\nabla_{\mu}f\nabla_{\nu}f-\frac{1}{2}g_{\mu\nu}(\nabla f)^{2}.
\end{equation}
Equations (\ref{g-equation}-\ref{phi-equation}) can be simplified. By contracting equation (\ref{g-equation}) with the metric tensor $g^{\mu\nu}$ we obtain: $\Box e^{-2\phi}=2\lambda^{2}+2GT$, where $T=g^{\mu\nu}T_{\mu\nu}$. Introducing a new field $\varphi=e^{-\phi}$, after a few manipulations we obtain:
\begin{align}
    \nabla_{\mu}\nabla_{\nu}\varphi-\frac{1}{2}R_{\mu\nu}\varphi&=-G\left(T_{\mu\nu}-\frac{1}{2}g_{\mu\nu}T\right)\varphi^{-1},\label{chi}\\
    \Box\varphi^{2}&=2\lambda^{2}+2GT.\label{chi+box}
\end{align}
Equations (\ref{chi}-\ref{chi+box}), along with (\ref{box_f}) is the set of quations that we will solve.
\subsection{Vacuum solution of classical equations of motion}

Now we will find the general solution to the classical vacuum equations of motion in the conformal gauge: $\de s^{2}=-e^{2\rho}\de x^{+}\de x^{-}$. In this gauge, the non-vanishing components of the connection, Ricci curvature and scalar curvature are:
\begin{equation}
    \Gamma^{\pm}_{\pm\pm}=2\partial_{\pm}\rho,\hspace{2mm}R_{+-}=-2\partial_{+}\partial_{-}\rho,\hspace{2mm}R=8e^{-2\rho}\partial_{+}\partial_{-}\rho.\label{konf_gauge}
\end{equation}
\noindent Equations of motion (\ref{chi}) and (\ref{chi+box}) become:
\begin{align}
    \partial_{\pm}\left(e^{-2\rho}\partial_{\pm}\varphi\right)&=-GT_{\pm\pm}\varphi^{-1}e^{-2\rho},\label{ppmm}\\
    \partial_{+}\partial_{-}\varphi+\varphi\partial_{+}\partial_{-}\rho&=0,\label{pm}\\
    \partial_{+}\partial_{-}\varphi^{2}+\frac{\lambda^{2}}{2}e^{2\rho}&=0.\label{chi2}
\end{align}
\noindent We first solve these equations in the vacuum case $T_{\pm\pm}=0$. By integrating equation (\ref{ppmm}), we get:
\begin{equation}
    \partial_{\pm}\varphi=\frac{\lambda}{2}\mathrm{F}^{\mp}(x^{\mp})e^{2\rho},\label{chi_izvod}
\end{equation}
\noindent where $\mathrm{F}^{\mp}(x^{\mp})$ are arbitrary functions; we can think of them as components of a vector field. Functions $\partial_{\pm}\varphi$ satisfy the following equation $\partial_{+}\partial_{-}\varphi=\partial_{-}\partial_{+}\varphi$, which gives:
\begin{align}
    \nabla_{+}\mathrm{F}^{+}&=\nabla_{-}\mathrm{F}^{-},\label{konzistentnost_1}\\
    \partial_{+}\partial_{-}\varphi&=\frac{\lambda}{2}\nabla_{+}\mathrm{F}^{+}e^{2\rho}.\label{chi_drugi}
\end{align}
\noindent Combining this result with equations (\ref{chi2}) and (\ref{chi_izvod}) we obtain $\varphi$ in terms of $\mathrm{F}^{\pm}$ and $\rho$:
\begin{equation}
    \varphi=-\frac{\lambda}{2}\frac{1+\mathrm{F}^{+}\mathrm{F}^{-}e^{2\rho}}{\nabla_{+}\mathrm{F}^{+}}.\label{chi_calc}
\end{equation}
\noindent Next, we calculate $\partial_{+}\partial_{-}\rho$ by substituting equations (\ref{chi_calc}) and (\ref{chi_drugi}) into equation (\ref{pm}):
\begin{equation}
    \partial_{+}\partial_{-}\rho=\frac{\left(\nabla_{+}\mathrm{F}^{+}\right)^{2}e^{2\rho}}{1+\mathrm{F}^{+}\mathrm{F}^{-}e^{2\rho}}.\label{rhooo}
\end{equation}
\noindent It is easy to see that $\partial_{-}\left(1+\mathrm{F}^{+}\mathrm{F}^{-}e^{2\rho}\right)=\mathrm{F}^{+}\nabla_{+}\mathrm{F}^{+}e^{2\rho}$ and $\partial_{+}\partial_{-}\rho=\frac{\partial_{-}\left(\nabla_{+}\mathrm{F}^{+}\right)}{2\mathrm{F}^{+}}$. This, in combination with equation (\ref{rhooo}) yields:
\begin{equation}
    \nabla_{+}\mathrm{F}^{+}=\mathrm{G}^{+}(x^{+})\left(1+\mathrm{F}^{+}\mathrm{F}^{-}e^{2\rho}\right)^{2},
\end{equation}
\noindent where $\mathrm{G}^{+}(x^{+})$ is an arbitrary function. Differentiating with respect to $x^{+}$ we get the same equation, which means we also have:
\begin{equation}
    \nabla_{+}\mathrm{F}^{+}=\mathrm{G}^{-}(x^{-})\left(1+\mathrm{F}^{+}\mathrm{F}^{-}e^{2\rho}\right)^{2},
\end{equation}
\noindent where $\mathrm{G}^{-}(x^{-})$ is also an arbitrary function. Since these two equations are the same, the functions $G^{\pm}$ must be the same constant: $\mathrm{G}^{+}(x^{+})=\mathrm{G}^{-}(x^{-})=-\frac{1}{2a}$, where the meaning of $a$ will soon become apparent. Thus, with the help of the equation (\ref{chi_calc}) we have the following result:
\begin{equation}
    \nabla_{+}\mathrm{F}^{+}=-\frac{1}{2a}\left(1+\mathrm{F}^{+}\mathrm{F}^{-}e^{2\rho}\right)^{2}=-\frac{\lambda}{2}\frac{\lambda a}{\varphi^{2}}.\label{nablaFjna}
\end{equation}
\noindent We can verify that $\nabla_{+}\mathrm{F}^{+}$ also satisfies: $\nabla_{+}\mathrm{F}^{+}=\mathrm{F}^{\pm}\partial_{\pm}\ln{(\mathrm{F}^{+}\mathrm{F}^{-}e^{2\rho})}$. Together with equations (\ref{nablaFjna}) and (\ref{chi_calc}) the derivatives of $\varphi$ may be expressed as:
\begin{equation}
    \partial_{\pm}\varphi=-\frac{\lambda}{2\mathrm{F}^{\pm}}\frac{\varphi-\lambda a}{\varphi}.\label{chi_izvodi}
\end{equation}
\noindent Now, we can integrate this equation. The final result is given by the following two equations:
\begin{align}
    \varphi+\lambda a\ln{\left(\frac{\varphi}{\lambda a}-1\right)}&=-\frac{\lambda}{2}\left(\int\frac{\de x^{+}}{\mathrm{F}^{+}}+\int\frac{\de x^{-}}{\mathrm{F}^{-}}\right),\label{chi_resenje}\\
    \mathrm{F}^{+}\mathrm{F}^{-}e^{2\rho}&=\frac{\lambda a}{\varphi}-1.\label{rho_resenje}
\end{align}
The solution to the classical vacuum equations is defined by one constant and two functions. We write these as: $(\mathrm{F}^{+},\mathrm{F}^{-},a)$. The meaning of arbitrary functions $\mathrm{F}^{\pm}$ is now obvious; they represent the conformal transformations of the coordinates. We can define new coordinates: 
\begin{equation}
    \frac{\de\sigma^{\pm}}{\de x^{\pm}}=\mp\frac{1}{\mathrm{F}^{\pm}}.\label{koord_transf}
\end{equation}
\noindent These $\sigma^{\pm}$ coordinates are interpreted as Eddington-Finkelstein coordinates. Then equation (\ref{chi_resenje}) represents the formula for the tortoise coordinate:
\begin{equation}
    r_{*}\equiv\sigma=\frac{1}{2}(\sigma^{+}-\sigma^{-}).
\end{equation} 
Equation (\ref{rho_resenje}) is the equation for the metric tensor in the Eddington-Finkelstein coordinates. We can directly check this by calculating the interval: \begin{equation}
    \de s^{2}=-e^{2\rho}\de x^{+}\de x^{-}=\mathrm{F}^{+}\mathrm{F}^{-}e^{2\rho}\de\sigma^{+}\de\sigma^{-}.
\end{equation} 
We conclude that the solution defined by $(-1,1,a)$ amounts to choosing Eddington-Finkelstein coordinates. This choice we call the Eddington-Finkelstein gauge. The constant $a$ is the Schwarzschild radius. The dilation can be chosen to be proportional to the radial coordinate: $\varphi=e^{-\phi}=\lambda r$. So we see that we have reproduced the Schwarzschild solution in the same way as in \cite{DREH1}.


\subsection{Classical gravitational collapse}\label{grav_kolaps}

To formulate the classical gravitational collapse, we need to introduce the matter in the model. The easiest way to do this is to define the energy-momentum tensor as an incoming shockwave: $T_{++}=s\delta(x^{+}-x_{0}^{+})$ and $T_{--}=0$. The corresponding Penrose diagram is shown in Figure \ref{fig1}. There are two parts of space-time (I and II in Figure \ref{fig1}), which we need to connect continuously. Metric and dilaton need to be continuous functions of $x^{+}$ and $x^{-}$ at the $x^{+}=x_{0}^{+}$ hypersurface: 
\begin{align}
    e^{2\rho_{\mathrm{I}}}\bigg{|}_{x^{+}=x_{0}^{+}-\epsilon}&=e^{2\rho_{\mathrm{II}}}\bigg{|}_{x^{+}=x_{0}^{+}+\epsilon},\label{krpljenje_metrika}\\
    \varphi_{I}\bigg{|}_{x^{+}=x_{0}^{+}-\epsilon}&=\varphi_{II}\bigg{|}_{x^{+}=x_{0}^{+}+\epsilon},\label{krpljenje_polje}
\end{align}
\noindent when $\epsilon\to0$, and indices I and II correspond to part I and part II of space-time, respectively (see Figure \ref{fig1}). In part I, we have the vacuum solution defined by $(\mathrm{F}_{\mathrm{I}}^{+},\mathrm{F}_{\mathrm{I}}^{-},a_{\mathrm{I}})$, while in part II, the solution is defined by $(\mathrm{F}_{\mathrm{II}}^{+},\mathrm{F}_{\mathrm{II}}^{-},a_{\mathrm{II}})$.
\begin{figure}[h]
    \begin{center}
    \includegraphics[width=7cm, height=8.5cm]{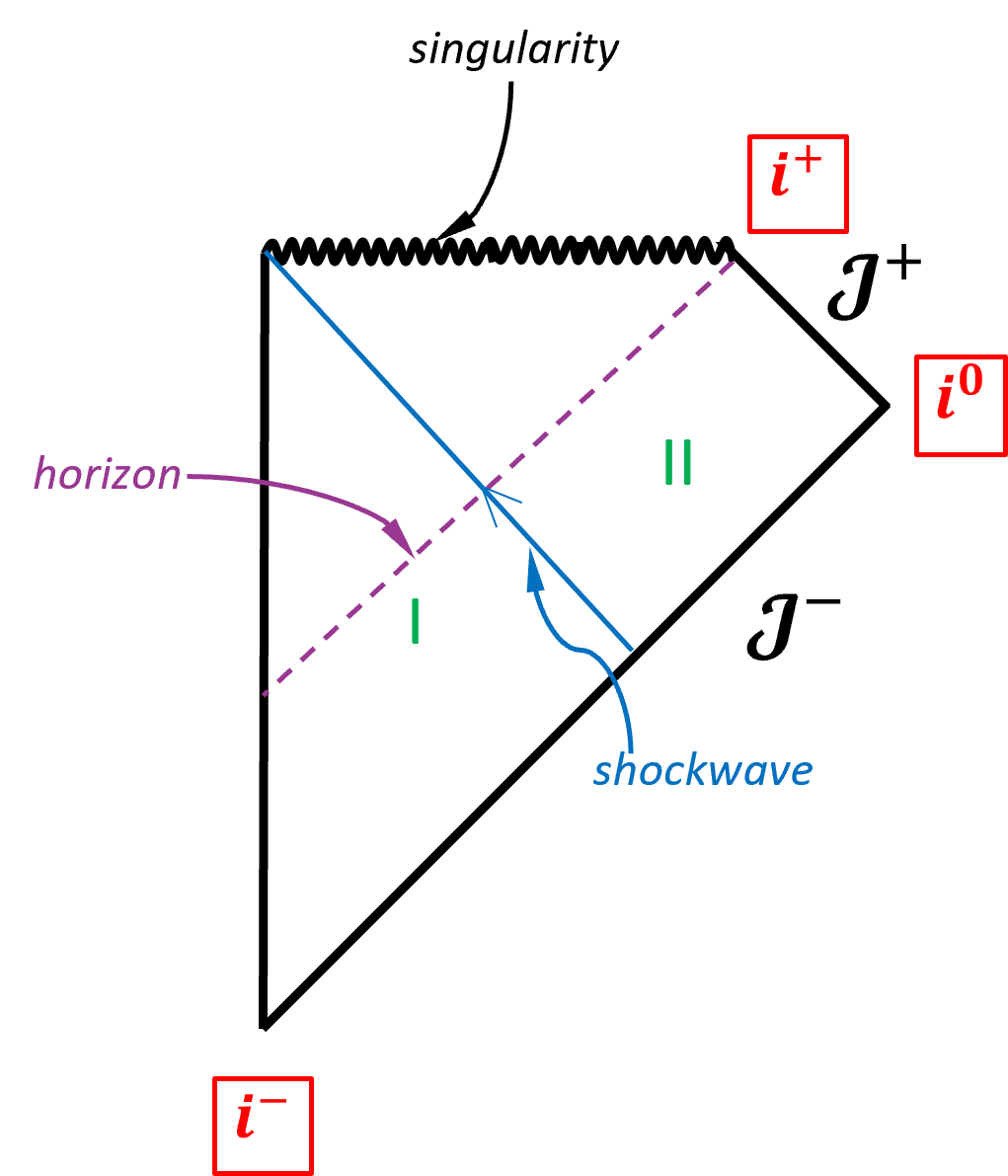}
    \caption{Penrose diagram of a gravitational collapse}\label{fig1}
    \end{center}
\end{figure}
\noindent Integrating equation (\ref{ppmm}), we obtain:
\begin{align}
    e^{-2\rho_{\mathrm{II}}}\partial_{+}\varphi_{II}\bigg{|}_{x^{+}=x^{+}_{0}+\epsilon}&-e^{-2\rho_{\mathrm{I}}}\partial_{+}\varphi_{I}\bigg{|}_{x^{+}=x^{+}_{0}-\epsilon}\nonumber\\
    &=-Gse^{-2\rho}\varphi^{-1}\bigg{|}_{x^{+}=x^{+}_{0}}.\label{krp}
\end{align}
\noindent Equation (\ref{krp}), together with equations (\ref{chi_izvod}) and (\ref{krpljenje_metrika}), implies the jump in the derivatives of the dilaton field, which, in turn, implies the jump in $\mathrm{F}^{-}$ given by:
\begin{equation}
    \mathrm{F}^{-}_{\mathrm{II}}=\mathrm{F}^{-}_{\mathrm{I}}-\frac{2sG}{\lambda}e^{-2\rho}\varphi^{-1}\bigg{|}_{x^{+}=x^{+}_{0}}.\label{skok_Fm}
\end{equation}
\noindent The derivative $\partial_{-}\varphi$ is continuous, implying $\mathrm{F}^{+}_{\mathrm{II}}=\mathrm{F}^{+}_{\mathrm{I}}$. Using equation (\ref{krpljenje_metrika}) and solution (\ref{rho_resenje}) together with equation (\ref{krpljenje_polje}) we get that the constant $a$ changes at $x^{+}=x^{+}_{0}$, as well:
\begin{equation}
    a_{\mathrm{II}}=a_{\mathrm{I}}-\frac{2sG}{\lambda^{2}}\mathrm{F}^{+}_{\mathrm{I}}(x^{+}_{0}).\label{skok_a}
\end{equation}
For the correct interpretation, we have to connect the constant $s$ with the mass of the shockwave. This can be done with the help of the time-translation Killing vector $\xi$ (on part I of space-time):
\begin{equation}
    M=\int_{\Sigma}\de\sigma\sqrt{\gamma}n^{\mu}\xi^{\nu}T_{\mu\nu},\label{masa}
\end{equation}
\noindent where $\Sigma$ is a space-like hypersurface, $n$ is an orthogonal vector to the surface $\Sigma$ normalized as $n_{\mu}n^{\mu}=-1$, $\gamma$ is the induced metric on $\Sigma$ and $T_{\mu\nu}$ is the energy-momentum tensor. Solving the Killing equation, we find that the time-translation Killing vector is given by:
\begin{equation}
    \xi=\mathrm{F}^{-}\partial_{-}-\mathrm{F}^{+}\partial_{+},\label{Kiling}
\end{equation}
\noindent i.e., $\mathrm{F}^{\pm}$ are components of $\xi$. Since $\xi$ is orthogonal to $\Sigma$, we can choose $n$ to be proportional to $\xi$: $n^{\mu}=\frac{\xi^{\mu}}{\sqrt{-\xi\cdot\xi}}$. The induced metric on the hypersurface $\Sigma$ is then: $\de s^{2}=-\mathrm{F}^{+}\mathrm{F}^{-}e^{2\rho}\de\sigma^{2}$. Placing all of this into equation (\ref{masa}) we obtain:
\begin{equation}
    M=-s\mathrm{F}^{+}(x^{+}_{0}).\label{masa_izracunata}
\end{equation}
\noindent If the solution in part I of space-time is given by $(\mathrm{F^{+}_{\mathrm{I}}},\mathrm{F^{-}_{\mathrm{I}}},a_{\mathrm{I}})$, then the solution in  part II of space-time is given by:
\begin{equation}
    \left(\mathrm{F^{+}_{\mathrm{I}}},\mathrm{F^{-}_{\mathrm{I}}}\left(1+\frac{2MG/\lambda}{\lambda a_{\mathrm{I}}-\varphi(x^{+}_{0},x^{-})}\right),a_{\mathrm{I}}+\frac{2MG}{\lambda^{2}}\right).\label{resenje_II}
\end{equation}
\noindent Equation (\ref{resenje_II}) tells us what form the solution in part II of space-time takes if we already have a black hole solution in part I of space-time and that black hole consumes matter of mass $M$. Without loss of generality, we can choose the Eddington-Finkelstein gauge in part I of space-time, which amounts to $\mathrm{F}^{+}_{\mathrm{I}}=-1$ and $\mathrm{F}^{-}_{\mathrm{I}}=1$. To analyze gravitational collapse, we can further choose that the metric of part I is a Minkowski space-time by setting $a_{\mathrm{I}}=0$. The corresponding dilaton is given by $\varphi_{\mathrm{I}}=\frac{\lambda}{2}\left(\sigma^{+}-\sigma^{-}\right)$. Then, the equation (\ref{resenje_II}) implies that part II is a black hole of mass $M$:
\begin{equation}
    \left(-1,1-\frac{4MG/\lambda^{2}}{\sigma_{0}^{+}-\sigma^{-}},\frac{2MG}{\lambda^{2}}\right).
\end{equation} 
Let us define the notation that we will use from now on. First, we assume that the solution in part I of space-time is the Minkowski vacuum solution, defined by: $(-1,1,0)$. Next, we define $\hat{\varphi}(\sigma^{-})=\frac{\lambda}{2}\left(\sigma^{+}_{0}-\sigma^{-}\right)$. Using this function, we can express the coordinate transformations in part II as:
\begin{equation}
    \mathrm{F}^{+}=-1,\hspace{2mm}\hspace{2mm}\mathrm{F}^{-}=\frac{\hat{\varphi}-\lambda a}{\hat{\varphi}}.\label{FpFm}
\end{equation}
\noindent Notice that we have dropped the index II on functions $\mathrm{F}^{\pm}_{\mathrm{II}}$ and constant $a$ as well, so we have $a=\frac{2MG}{\lambda^{2}}$. In part II of space-time the relation (\ref{chi_resenje}) still holds, so we can introduce new Eddington-Finkelstein coordinates $\hat{\sigma}^{\pm}$. In terms of $\sigma^{\pm}$, they are given by:
\begin{align}
    \hat{\sigma}^{+}&=\sigma^{+},\label{koord_smena_p}\\
    \hat{\sigma}^{-}&=\sigma^{+}_{0}-\frac{2}{\lambda}\left[\hat{\varphi}+\lambda a\ln{\left(\frac{\hat{\varphi}}{\lambda a}-1\right)}\right].\label{koord_smena_m}
\end{align}
\noindent Additionally, we define a new coordinate $\delta$: 
\begin{equation}
    \delta(\sigma^{+})=e^{-\frac{\sigma^{+}-\sigma^{+}_{0}}{2a}},\label{delta_def}
\end{equation}
\noindent which satisfies $\delta<1$ in part II of space-time ($\sigma^{+}>\sigma^{+}_{0}$). This property will be used extensively later on. Solutions (\ref{chi_resenje}) and (\ref{rho_resenje}) in part II can now be written in the following form (in terms of functions $\delta$ and $\hat{\varphi}$):
\begin{align}
    &\varphi+\lambda a\ln{\left(\frac{\varphi}{\lambda a}-1\right)}=-\lambda a\ln{\delta}+\hat{\varphi}+\lambda a\ln{\left(\frac{\hat{\varphi}}{\lambda a}-1\right)},\nonumber\\
    &\mathrm{F}^{+}\mathrm{F}^{-}e^{2\rho}=\frac{\lambda a}{\varphi}-1.\label{resenje_II_}
\end{align}
Now we can analyse solution (\ref{resenje_II_}). The equation for the metric tells us that we have a black hole with the singularity at $\varphi=0$ and the horizon at $\varphi=\lambda a$. Keep in mind that $\varphi=\lambda \hat{r}$, where $\hat{r}$ is the radial coordinate in part II of space-time. Looking again at solution (\ref{resenje_II_}), we see that $\varphi=\lambda a$ corresponds to $\hat{\varphi}=\lambda a$. This means that the horizon and the singularity are respectively given by: 
\begin{align}
    \sigma^{-}_{\mathrm{H}}&=\sigma^{+}_{0}-2a\label{horizont},\\
    \ln{\delta}_{S}&=\frac{\hat{\varphi}_{\mathrm{S}}}{\lambda a}+\ln{\left|\frac{\hat{\varphi}_{\mathrm{S}}}{\lambda a}-1\right|}\label{singularnost}
\end{align}
\noindent When $t\to\infty$, we have $\sigma^{+}\to\infty$, which is equivalent to $\delta\to0$. Using equation (\ref{singularnost}) we deduce that the horizon and the singularity intersect at future time infinity $i^{+}$, which is the expected result. Both lines are represented in the conformal diagram (see Figure \ref{fig1}).
\section{Quantum corrections in the DREH model}

Having established the classical DREH model, we consider quantization of matter fields (a single massless scalar field) on the classical background of the Schwarzschild black hole. Quantum corrections come in the form of Polyakov-Liouville (PL) action \cite{Polyakov}, 
\begin{align}\label{SPL}
S_{\text{PL}}=-\frac{\hbar}{96\pi}\int\de^{2}x\int&\de^{2}x'\sqrt{-g(x)}\sqrt{-g(x')}\nonumber \\
&\times R(x)G(x-x')R(x'),
\end{align}
where $G(x-x')$ stands for the Green's function for the massless Klein-Gordon equation in curved (1+1)-dimensional space-time. This action represents the $1$-loop effective action obtained by integrating out fluctuations of the massless scalar field, 
\begin{equation}
e^{\frac{i}{\hbar}S_{\text{PL}}}=\int\mathcal{D}\chi e^{\frac{i}{\hbar}\int\de^{2}x\sqrt{-g}[-\frac{1}{2}(\nabla \chi)^{2}]},   
\end{equation}
and it can be converted into a local form by introducing an auxiliary field $\psi$,
\begin{equation}
S_{\text{PL}}=-\frac{\hbar}{96\pi}\int\de^{2}x\sqrt{-g}\left[2R\psi+\left(\nabla\psi\right)^{2}\right].\label{aa1111} 
\end{equation}
Action (\ref{aa1111}) is on-shell equivalent to $(\ref{SPL})$. The field equation for the auxiliary field is 
\begin{equation}
\Box\psi=R.\label{psi_jna}
\end{equation}
The full action for the $1$-loop quantum DREH model is given by
\begin{equation}
S=S_{\rm{DREH}}+S_{\text{PL}}.    
\end{equation}
Variation of $S_{\text{PL}}$ with respect to $g_{\mu\nu}$ gives the quantum correction to the energy-momentum tensor for the scalar field $f$,
\begin{align} \label{T_1loop}
\langle\Delta T_{\mu\nu}^{(f)}\rangle&=\frac{-2}{\sqrt{-g}}\frac{\delta S_{\text{PL}}}{\delta g^{\mu\nu}}
=\frac{\hbar}{48\pi} \Bigg[-2\nabla_{\mu}\nabla_{\nu}\psi\nonumber\\
&+\nabla_{\mu}\psi\nabla_{\nu}\psi
+g_{\mu\nu}\left(2\Box\psi-\frac{1}{2}(\nabla\psi)^{2}\right)\Bigg].
\end{align}
To define $\langle\Delta T_{\mu\nu}^{(f)}\rangle=\langle\Psi\vert\Delta T_{\mu\nu}^{(f)}\vert\Psi\rangle$, we also need to specify the quantum state $\vert\Psi\rangle$ that we are considering.  
The full energy-momentum tensor consists of the classical part and the $1$-loop quantum correction coming from the PL effective action,
\begin{equation}
T_{\mu\nu}^{(f)}=T_{\mu\nu,class}^{(f)}+\langle\Delta T_{\mu\nu}^{(f)}\rangle.    
\end{equation}
The metric equation (\ref{g-equation}) changes only due to this quantum correction of the energy-momentum tensor and reads:
\begin{widetext}
\begin{align} 
\Big[2\nabla_{\mu}\nabla_{\nu}\phi-2\nabla_{\mu}\phi\nabla_{\nu}\phi +g_{\mu\nu}\Big(3(\nabla\phi)^{2}-2\Box\phi&-\lambda^{2}e^{2\phi}\Big)\Big] e^{-2\phi}=2G_{\text{N}}\Bigg\{\nabla_{\mu}f\nabla_{\nu}f-\frac{1}{2}g_{\mu\nu}(\nabla f)^{2}\nn\\
&+\frac{\hbar}{48\pi} \Bigg[-2\nabla_{\mu}\nabla_{\nu}\psi
+\nabla_{\mu}\psi\nabla_{\nu}\psi
+g_{\mu\nu}\left(2\Box\psi-\frac{1}{2}(\nabla\psi)^{2}\right)\Bigg]\Bigg\}.\label{kvantna_jna}
\end{align}
\end{widetext}

Since the quantum correction to the energy-momentum tensor (\ref{T_1loop}) depends only on the auxiliary field, we need to solve equation (\ref{psi_jna}). In the conformal gauge (\ref{konf_gauge}), this equation reads: $\partial_{+}\partial_{-}\left(\psi+2\rho\right)=0$. Its solution in terms of the metric $\rho$, is given by:
\begin{equation}
    \psi=-2\rho+f_{+}(x^{+})+f_{-}(x^{-}),\label{psi_resenje}
\end{equation}
\noindent where $f_{\pm}(x^{\pm})$ are arbitrary functions. Then, the quantum correction to the energy-momentum tensor defined in equation (\ref{T_1loop}), calculated in the conformal gauge (\ref{konf_gauge}), in terms of the metric $\rho$ reads:
\begin{align}
    \langle\Delta T_{\pm\pm}^{(f)}\rangle&=\frac{\hbar}{12\pi}\left[\partial^{2}_{\pm}\rho-\left(\partial_{\pm}\rho\right)^{2}-t_{\pm}(x^{\pm})\right],\label{TEI_PP_MM}\\
    \langle\Delta T_{+-}^{(f)}\rangle&=-\frac{\hbar}{12\pi}\partial_{+}\partial_{-}\rho,\label{TEI_PM}
\end{align}
\noindent where functions $t_{\pm}(x^{\pm})$ are given by: $t_{\pm}(x^{\pm})=\frac{1}{2}\partial^{2}_{\pm}f_{\pm}-\frac{1}{4}\left(\partial_{\pm}f_{\pm}\right)^{2}$. After adding quantum corrections, the equations of motion become as follows:
\begin{widetext}
\begin{align}
    &\partial_{\pm}\left(e^{-2\rho}\partial_{\pm}\varphi\right)=-Gs\delta(x^{+}-x^{+}_{0})\varphi^{-1}e^{-2\rho}+\frac{\hbar G}{12\pi}\left[\left(\partial_{\pm}\rho\right)^{2}-\partial^{2}_{\pm}\rho+t_{\pm}(x^{\pm})\right]\varphi^{-1}e^{-2\rho},\label{ppmm_q}\\
    &\partial_{+}\partial_{-}\varphi+\varphi\partial_{+}\partial_{-}\rho=0,\label{pm_q}\\
    &\partial_{+}\partial_{-}\varphi^{2}+\frac{\lambda^{2}}{2}e^{2\rho}+\frac{\hbar G}{6\pi}\partial_{+}\partial_{-}\rho=0.\label{chi2_q}
\end{align}
\end{widetext}
\noindent Equation (\ref{pm_q}) is the same as equation (\ref{pm}). This is because the right-hand side of equation (\ref{chi}) is zero, even after adding the quantum correction. The reason is the gravitational trace anomaly. The trace of the quantum correction to the energy-momentum tensor is given by $\langle\Delta T^{(f)}\rangle=\frac{\hbar}{3}e^{-2\rho}\partial_{+}\partial_{-}\rho$. The same reason drives the change of equation (\ref{chi2}) to equation (\ref{chi2_q}). 
\par These quantum-corrected equations cannot be exactly solved. We will solve them perturbatively to the first-order in the perturbative parameter $\varepsilon=\frac{\hbar G}{12\pi}$. We expand the dilaton field as $\varphi\mapsto\varphi+\varepsilon\theta$ and the metric as $\rho\mapsto\rho+\varepsilon\alpha$. The quantum correction to equations (\ref{ppmm_q}-\ref{chi2_q}), in terms of $\theta$ and $\alpha$, reads:
\begin{widetext}
\begin{align}
    &\partial_{+}\left[e^{-2\rho}\partial_{+}\theta-2\alpha e^{-2\rho}\partial_{+}\varphi\right]=Gs\left(\frac{\theta}{\varphi}+2\alpha\right)e^{-2\rho}\varphi^{-1}\delta(x^{+}-x^{+}_{0})+\left[\left(\partial_{+}\rho\right)^{2}-\partial^{2}_{+}\rho+t_{+}(x^{+})\right]\varphi^{-1}e^{-2\rho},\label{jna_pp}\\
    &\partial_{-}\left[e^{-2\rho}\partial_{-}\theta-2\alpha e^{-2\rho}\partial_{-}\varphi\right]=\left[\left(\partial_{-}\rho\right)^{2}-\partial^{2}_{-}\rho+t_{-}(x^{-})\right]\varphi^{-1}e^{-2\rho},\label{jna_mm}\\
    &\partial_{+}\partial_{-}\theta+\varphi\partial_{+}\partial_{-}\alpha+\theta\partial_{+}\partial_{-}\rho=0,\label{jna_pm}\\
    &\partial_{+}\partial_{-}\left(\theta\varphi+\rho\right)+\frac{\lambda^{2}}{2}\alpha e^{2\rho}=0.\label{jna_chi2}
\end{align}
\end{widetext}
\subsection{Vacuum solution to the quantum-corrected equations of motion}

As in the case of the classical gravitational collapse, we will drop the $\delta$-function term in equation (\ref{jna_pp}) and find the general solution to the quantum-corrected equations of motion in vacuum. Then, we will patch up the solution over the $x=x^{+}_{0}$ hypersurface by integrating equation (\ref{jna_pp}). Direct calculation of the term $\left(\partial_{\pm}\rho\right)^{2}-\partial^{2}_{\pm}\rho+t_{+}(x^{\pm})$ yields:
\begin{align}
    \left(\partial_{\pm}\rho\right)^{2}-\partial^{2}_{\pm}\rho&+t_{\pm}(x^{\pm})=\frac{1}{4}\left(\partial_{\pm}\ln{\mathrm{F}^{\pm}}\right)^{2}+\frac{1}{2}\partial_{\pm}^{2}\ln{\mathrm{F}^{\pm}}\nonumber\\
    &+t_{\pm}(x^{\pm})+\left(\frac{\lambda}{2\mathrm{F}^{\pm}}\right)^{2}\lambda a\frac{\varphi-\frac{3}{4}\lambda a}{\varphi^{4}}.\label{medju}
\end{align}
\noindent Using coordinate transformation (\ref{koord_transf})
we can easily see that the first two terms on the right hand side reduce to the Schwarzian derivative (defined in Appendix $A$):
\begin{equation}
    \frac{1}{4}\left(\partial_{\pm}\ln{\mathrm{F}^{\pm}}\right)^{2}+\frac{1}{2}\partial_{\pm}^{2}\ln{\mathrm{F}^{\pm}}(x^{\pm})=-\frac{1}{2}D_{x^{\pm}}\left[\sigma^{\pm}\right].\label{m2}
\end{equation}
\noindent Together with the term $t_{\pm}(x^{\pm})$, we get the transformation of this variable, given by equation (\ref{t_pm_trans}) in Appendix $A$. Placing this back into equations (\ref{jna_pp}) and (\ref{jna_mm}) we arrive at:
\begin{align}
    \partial_{\pm}\left(e^{-2\rho}\partial_{\pm}\theta-\lambda\alpha\mathrm{F}^{\mp}\right)&=\frac{1}{\mathrm{F}^{\pm2}}t_{\pm}(\sigma^{\pm})e^{-2\rho}\varphi^{-1}\nonumber\\
    &+\frac{\lambda}{2}\mathrm{F}^{\mp}\lambda a\frac{\varphi-\frac{3}{4}\lambda a}{\varphi^{3}(\varphi-\lambda a)^{2}}\partial_{\pm}\varphi,\label{m3}
\end{align}
\noindent Next we define the quantity $\mathcal{D}(\varphi)$ through its derivative:
\begin{equation}
    \frac{\de\mathcal{D}}{\de\varphi}=\lambda a\frac{\varphi-\frac{3}{4}\lambda a}{\varphi^{3}(\varphi-\lambda a)^{2}}.\label{def_D}
\end{equation}
\noindent By integrating equation (\ref{m3}), we obtain the derivatives of $\theta$:
\begin{widetext}
\begin{align}
    \partial_{+}\theta&=\frac{\lambda}{2}\left[\mathrm{F}^{-}(\mathcal{D}+2\alpha)+\mathrm{G}^{-}\right]e^{2\rho}-\mathrm{F}^{-}e^{2\rho}\int\frac{\de x^{+}}{\mathrm{F}^{+}}\frac{t_{+}(\sigma^{+})}{\varphi-\lambda a},\label{theta_izvod_plus}\\
    \partial_{-}\theta&=\frac{\lambda}{2}\left[\mathrm{F}^{+}(\mathcal{D}+2\alpha)+\mathrm{G}^{+}\right]e^{2\rho}-\mathrm{F}^{+}e^{2\rho}\int\frac{\de x^{-}}{\mathrm{F}^{-}}\frac{t_{-}(\sigma^{-})}{\varphi-\lambda a}\label{theta_izvod_minus},
\end{align}
\end{widetext}
\noindent where $G^{\pm}(x^{\pm})$ are a new set of arbitrary functions. Later we will see that these functions represent quantum correction to the coordinate transformations. We have not yet found the exact form of the derivatives $\partial_{\pm}\theta$ since equations (\ref{theta_izvod_plus}) and (\ref{theta_izvod_minus}) have unevaluated integrals. The problem lies in the functional dependence $\varphi=\varphi(x^{+},x^{-})$ given through the transcendental equation (\ref{chi_resenje}). For now, the solution will be written in terms of these integrals. To integrate equation (\ref{theta_izvod_plus}) once more, we need to eliminate the unknown function $\alpha$. This can be done by employing equation (\ref{jna_chi2}). In conjunction with equations (\ref{chi_izvod}) and (\ref{rho_resenje}) we find:
\begin{align}
    \theta&+\frac{2\mathrm{F}^{-}}{\lambda}\left(\varphi\partial_{-}\theta+\theta\partial_{-}\varphi+\partial_{-}\rho\right)=\int\mathcal{D}\de\varphi+\frac{\mathrm{G}^{-}}{\mathrm{F}^{-}}\varphi\nonumber\\
    &\hspace{5mm}-\int\frac{\de x^{+}}{\mathrm{F}^{+}}\left(\frac{\lambda a}{\varphi}-1\right)\int\frac{\de x^{+}}{\mathrm{F}^{+}}\frac{t_{+}(\sigma^{+})}{\varphi-\lambda a}+\mathrm{H}^{-},
\end{align}
\noindent where we have introduced another arbitrary function $\mathrm{H}^{-}(x^{-})$. Eliminating the derivatives $\partial_{-}\varphi$ and $\partial_{-}\theta$ via equations (\ref{chi_izvodi}) and (\ref{theta_izvod_minus}) respectively, and applying the same procedure to equation (\ref{theta_izvod_minus}) we obtain the following equations:
\begin{widetext}
\begin{align}
    \theta&=\frac{\varphi(\varphi-\lambda a)}{\lambda a}\left(\mathcal{D}+2\alpha+\frac{\mathrm{G}^{+}}{\mathrm{F}^{+}}\right)+\frac{\varphi^{2}}{\lambda a}\frac{\mathrm{G}^{-}}{\mathrm{F}^{-}}+\frac{1}{2\varphi}\nonumber\\
    &+\frac{\varphi}{\lambda a}\left[\int\mathcal{D}\de\varphi-\frac{2}{\lambda}(\varphi-\lambda a)\int\frac{\de x^{-}}{\mathrm{F}^{-}}\frac{t_{-}(\sigma^{-})}{\varphi-\lambda a}-\int\frac{\de x^{+}}{\mathrm{F}^{+}}\left(\frac{\lambda a}{\varphi}-1\right)\int\frac{\de x^{+}}{\mathrm{F}^{+}}\frac{t_{+}(\sigma^{+})}{\varphi-\lambda a}-\mathrm{H}^{-}\right],\label{a1}\\
    \theta&=\frac{\varphi(\varphi-\lambda a)}{\lambda a}\left(\mathcal{D}+2\alpha+\frac{\mathrm{G}^{-}}{\mathrm{F}^{-}}\right)+\frac{\varphi^{2}}{\lambda a}\frac{\mathrm{G}^{+}}{\mathrm{F}^{+}}+\frac{1}{2\varphi}\nonumber\\
    &+\frac{\varphi}{\lambda a}\left[\int\mathcal{D}\de\varphi-\frac{2}{\lambda}(\varphi-\lambda a)\int\frac{\de x^{+}}{\mathrm{F}^{+}}\frac{t_{+}(\sigma^{+})}{\varphi-\lambda a}-\int\frac{\de x^{-}}{\mathrm{F}^{-}}\left(\frac{\lambda a}{\varphi}-1\right)\int\frac{\de x^{-}}{\mathrm{F}^{-}}\frac{t_{-}(\sigma^{-})}{\varphi-\lambda a}-\mathrm{H}^{+}\right]\label{a2}.
\end{align}
\end{widetext}
\noindent Equating expressions (\ref{a1}) and (\ref{a2}) we get the following equation:
\begin{align}
    &\mathrm{H}^{-}+\lambda a\frac{\mathrm{G^{-}}}{\mathrm{F}^{-}}-\frac{2}{\lambda}\int\frac{\de x^{-}}{\mathrm{F}^{-}}t_{-}(\sigma^{-})\nonumber\\
    &=\mathrm{H}^{+}+\lambda a\frac{\mathrm{G^{+}}}{\mathrm{F}^{+}}-\frac{2}{\lambda}\int\frac{\de x^{+}}{\mathrm{F}^{+}}t_{+}.(\sigma^{+})=C,\label{a3}
\end{align}
\noindent where $C$ is an arbitrary constant. Adding equations (\ref{a1}) and (\ref{a2}), and using equation (\ref{a3}) we get the following expression for $\theta$ in terms of $\varphi$ and $\alpha$:
\begin{widetext}
\begin{align}
    \theta=\frac{\varphi(\varphi-\lambda a)}{\lambda a}&\left[\mathcal{D}+2\alpha+\frac{\mathrm{G}^{+}}{\mathrm{F}^{+}}+\frac{\mathrm{G}^{-}}{\mathrm{F}^{-}}\right]+\frac{1}{2\varphi}+\frac{\varphi}{\lambda a}\int\mathcal{D}\de\varphi\nonumber\\
    -\frac{\varphi}{\lambda a}&\left[\int\frac{\de x^{+}}{\mathrm{F}^{+}}\left(\frac{\lambda a}{\varphi}-1\right)\int\frac{\de x^{+}}{\mathrm{F}^{+}}\frac{t_{+}(\sigma^{+})}{\varphi-\lambda a}+\int\frac{\de x^{-}}{\mathrm{F}^{-}}\left(\frac{\lambda a}{\varphi}-1\right)\int\frac{\de x^{-}}{\mathrm{F}^{-}}\frac{t_{-}(\sigma^{-})}{\varphi-\lambda a}\right].\label{theta_sol_1}
\end{align}
\end{widetext}
Once again, this is not the final expression for the field $\theta$, since it still depends on the unknown function $\alpha$. We need to somehow eliminate it. An easy way to do this is to use equation (\ref{theta_izvod_plus}) or (\ref{theta_izvod_minus}). In this way, we get another differential equation of first-order in $\theta$, which means that we need to perform another integration. Note that this constant $C$, defined through equation (\ref{a3}) does not appear in equation (\ref{theta_sol_1}); it is the integration constant of the integral $\int\mathcal{D}\de\varphi$. Later, we will see that it is a quantum correction to the constant $a$. We can arrive at the same result using either of the two equations (\ref{theta_izvod_minus}) or (\ref{theta_izvod_plus}). This will result in two expressions for $\theta$, with another set of arbitrary functions, which can be eliminated by equating these expressions. The final result is then given by:
\begin{widetext}
\begin{align}
    \theta=\frac{\varphi-\lambda a}{\varphi}&\left[\frac{\lambda}{2}\left(\int\de x^{+}\frac{\mathrm{G}^{+}}{\mathrm{F}^{+2}}+\int\de x^{-}\frac{\mathrm{G}^{-}}{\mathrm{F}^{-2}}\right)-\int d\varphi\left(\frac{\lambda a}{2\varphi(\varphi-\lambda a)^{2}}+\frac{\varphi}{(\varphi-\lambda a)^{2}}\int\mathcal{D}\de\varphi\right)\right.\nonumber\\
    &\hspace{4.9cm}\left.+\int\frac{\de x^{+}}{\mathrm{F}^{+}}\frac{1}{\varphi-\lambda a}\int\frac{\de x^{+}}{\mathrm{F}^{+}}t_{+}(\sigma^{+})+\int\frac{\de x^{-}}{\mathrm{F}^{-}}\frac{1}{\varphi-\lambda a}\int\frac{\de x^{-}}{\mathrm{F}^{-}}t_{-}(\sigma^{-})\right],\label{theta_sol}
\end{align}
\end{widetext}
\noindent in terms of $x^{+}$, $x^{-}$, $t_{+}(\sigma^{+}(x^{+}))$, $t_{-}(\sigma^{-}(x^{-}))$ and $\varphi(x^{+},x^{-})$. The next step would be to find $\alpha$. One way to do this would be by using equation (\ref{theta_izvod_plus}), it is easy to derive an expression for $\alpha$ in terms of $x^{+}$ and $x^{-}$. We will not present that calculation here. Rather, we will calculate it in a special case of the quantum-corrected collapse scenario.
\par Before evaluating the integrals in (\ref{theta_sol}), we simplify the expressions by introducing the reduced fields $\mathrm{x}=\frac{\varphi}{\lambda a}$ and $\hat{\mathrm{x}}=\frac{\hat{\varphi}}{\lambda a}$. In terms of $\mathrm{x}$, the function $\mathcal{D}(\mathrm{x})$ and its integral are given by:

\begin{align}
    &\mathcal{D}=\frac{1}{4(\lambda a)^{2}}\left[\ln{\left(1-\frac{1}{\mathrm{x}}\right)}-\frac{1}{\mathrm{x}-1}+\frac{3}{2}\frac{1}{\mathrm{x}^{2}}+\frac{2}{\mathrm{x}}\right],\label{D}\\
    &\int\mathcal{D}\de\varphi=\frac{1}{4\lambda a}\left[(\mathrm{x}-2)\ln{\left(1-\frac{1}{\mathrm{x}}\right)}-\frac{3}{2}\frac{1}{\mathrm{x}}\right]+C\label{D_int},
\end{align}
\noindent where $C$ is a constant defined in equation (\ref{a3}). For future simplifications, we introduce the following function:
\begin{equation}
    \theta_{0}(\varphi)=-\int\de\varphi\left(\frac{\lambda a}{2\varphi(\varphi-\lambda a)^{2}}+\frac{\varphi}{(\varphi-\lambda a)^{2}}\int\mathcal{D}\de\varphi\right),
\end{equation}
\noindent which, by utilizing equation (\ref{D_int}), becomes:
\begin{widetext}
\begin{equation}
    \theta_{0}(\mathrm{x})=-\frac{1}{4\lambda a}\frac{1}{\mathrm{x}-1}\left[\frac{1}{2}+(\mathrm{x}^{2}-3\mathrm{x}+3)\ln{(\mathrm{x}-1)}-(\mathrm{x}^{2}-2\mathrm{x}+2)\ln{\mathrm{x}}\right]+C\left(\frac{1}{\mathrm{x}-1}-\ln{\left(\mathrm{x}-1\right)}\right).\label{theta0_def}
\end{equation}
\end{widetext}
\noindent We also define integrals: 
\begin{equation}
    \mathrm{I}_{\pm}=\int\frac{\de x^{\pm}}{\mathrm{F}^{\pm}}\frac{1}{\varphi-\lambda a}\int\frac{\de x^{\pm}}{\mathrm{F}^{\pm}}t_{\pm}(\sigma^{\pm}).\label{I_pm}
\end{equation}
\noindent With the help of all this newly defined notation, equation (\ref{theta_sol}) can be written in a much simpler form:
\begin{equation}
    \theta=\frac{\mathrm{x}-1}{\mathrm{x}}\left[\frac{\lambda}{2}\left(\int\de x^{+}\frac{\mathrm{G}^{+}}{\mathrm{F}^{+2}}+\int\de x^{-}\frac{\mathrm{G}^{-}}{\mathrm{F}^{-2}}\right)+\theta_{0}+\mathrm{I}\right],\label{theta_sol_simp}
\end{equation}
\noindent where $\mathrm{I}=\mathrm{I}_{+}+\mathrm{I}_{-}$. Let us now show that $\mathrm{G}^{\pm}$ are the quantum corrections of the coordinate transformations, by showing how equation (\ref{chi_resenje}) changes when the quantum corrections are added. From equation (\ref{theta_sol_simp}) we extract the part connected to the functions $\mathrm{G}^{\pm}$, multiply it by $\varepsilon$ and add it to the right hand side of equation (\ref{chi_resenje}):
\begin{align}
    &\varphi+\lambda a\ln{\left(\frac{\varphi}{\lambda a}-1\right)}+\varepsilon\frac{\varphi\theta}{\varphi-\lambda a}-\varepsilon\theta_{0}-\varepsilon\mathrm{I}=\nonumber\\
    &-\frac{\lambda}{2}\left[\int\frac{\de x^{+}}{\mathrm{F}^{+}}\left(1-\varepsilon\frac{\mathrm{G}^{+}}{\mathrm{F}^{+}}\right)+\int\frac{\de x^{-}}{\mathrm{F}^{-}}\left(1-\varepsilon\frac{\mathrm{G}^{-}}{\mathrm{F}^{-}}\right)\right].\label{a4}
\end{align}
Now, we revert to the quantum-corrected fields $\varphi+\varepsilon\theta\mapsto\varphi$. If we expand equation (\ref{chi_resenje}) to the first-order in $\varepsilon$, we find exactly the term linear in $\theta$ appearing in equation (\ref{a4}). On the right-hand side of equation (\ref{a4}) we have the expansion of $(\mathrm{F}^{\pm}+\varepsilon\mathrm{G}^{\pm})^{-1}$ to the first-order in $\varepsilon$, which means that $\mathrm{F}^{\pm}+\varepsilon\mathrm{G}^{\pm}\mapsto\mathrm{F}^{\pm}$ is the quantum-corrected coordinate transformation. Also, note that the constant $C$ appearing in equation (\ref{theta0_def}) is the quantum correction $\lambda a+\varepsilon C\mapsto\lambda a$. This is easy to check by expanding the term $(\lambda a+\varepsilon C)\ln{(\varphi/(\lambda a+\varepsilon C)-1)}$ to the first-order in $\varepsilon$. We are also interested in the effect of quantum corrections on equation (\ref{rho_resenje}). Let us use the newly defined quantum-corrected coordinate transformations: 
\begin{equation}
    \mathrm{F}^{+}\mathrm{F}^{-}e^{2\rho}\mapsto\mathrm{F}^{+}\mathrm{F}^{-}e^{2\rho}\left[1+\varepsilon\left(2\alpha+\frac{\mathrm{G}^{+}}{\mathrm{F}^{+}}+\frac{\mathrm{G}^{-}}{\mathrm{F}^{-}}\right)\right].\label{a5}
\end{equation}
\noindent The quantity in brackets of equation (\ref{a5}) appears in equation (\ref{theta_sol_1}), which means that we can extract it from that equation. Before we do that, we define another set of integrals:
\begin{equation}
    \mathrm{I}_{\pm}'=\int\frac{\de x^{\pm}}{\mathrm{F}^{\pm}}\left(\frac{\lambda a}{\varphi}-1\right)\int\frac{\de x^{\pm}}{\mathrm{F}^{\pm}}\frac{t_{\pm}(\sigma^{\pm})}{\varphi-\lambda a}.\label{I'_pm}
\end{equation}
\noindent Also, we define $\mathrm{I}'=\mathrm{I}_{+}'+\mathrm{I}_{-}'$, substitute it into equation (\ref{a5}) and use equation (\ref{rho_resenje}) to arrive at the following expression:
\begin{align}
    &\mathrm{F}^{+}\mathrm{F}^{-}e^{2\rho}\mapsto\frac{\lambda a}{\varphi}-1-\varepsilon\frac{\lambda a\theta}{\varphi^{2}}\nonumber\\
    &+\varepsilon\left[\frac{\varphi-\lambda a}{\varphi}\mathcal{D}+\frac{\lambda a}{2\varphi^{3}}+\frac{1}{\varphi}\int\mathcal{D}\de\varphi+\frac{1}{\varphi}\mathrm{I}'\right].\label{a6}
\end{align}
\noindent The first three terms represent the expansion of $\lambda a(\varphi+\varepsilon\theta)^{-1}-1$. Since $C$ is the integration constant of $\int\mathcal{D}\de\varphi$, the only term containing this constant is $\varepsilon C/\varphi$, which demonstrates that it is in fact the quantum correction of the constant $\lambda a$. Rewriting equations (\ref{a4}) and (\ref{a6}) in terms of the quantum-corrected reduced field $\mathrm{x}$ and the quantum-corrected constant $\lambda a$, we get the following final result:
\begin{widetext}
\begin{align}
    &\mathrm{x}+\ln{(\mathrm{x}-1)}+\frac{\varepsilon}{4(\lambda a)^{2}}\frac{1}{\mathrm{x}-1}\left[\frac{1}{2}+(\mathrm{x}^{2}-3\mathrm{x}+3)\ln{(\mathrm{x}-1)}-(\mathrm{x}^{2}-2\mathrm{x}+2)\ln{\mathrm{x}}\right]-\frac{\varepsilon}{\lambda a}\mathrm{I}=-\frac{1}{2a}\left(\int\frac{\de x^{+}}{\mathrm{F}^{+}}+\int\frac{\de x^{-}}{\mathrm{F}^{-}}\right),\label{chi_resenje_q_final}\\
    &\mathrm{F}^{+}\mathrm{F}^{-}e^{2\rho}=\frac{1}{\mathrm{x}}-1+\frac{\varepsilon}{4(\lambda a)^{2}}\left[\frac{2\mathrm{x}-3}{\mathrm{x}}\ln{\left(1-\frac{1}{\mathrm{x}}\right)}+\frac{1}{\mathrm{x}}-\frac{2}{\mathrm{x}^{2}}+\frac{1}{2}\frac{1}{\mathrm{x}^{3}}\right]+\frac{\varepsilon}{\lambda a}\frac{\mathrm{I}'}{\mathrm{x}}.\label{rho_resenje_q_final}
\end{align}
\end{widetext}
Equations (\ref{chi_resenje_q_final}) and (\ref{rho_resenje_q_final}) are the general solution to the vacuum equations (\ref{jna_pp})-(\ref{jna_chi2}). So, the solution of the quantum-corrected equations of motion is given by $(\mathrm{F}^{+},\mathrm{F}^{-},a,t_{+},t_{-})$. Keep in mind that $\mathrm{F}^{\pm}$ and $a$ are quantum-corrected quantities. A concrete solution can be written when the integrals $\mathrm{I}$ and $\mathrm{I}'$ are evaluated. These integrals are dependent on each other, connected through equations (\ref{theta_izvod_plus}) and (\ref{theta_izvod_minus}), which means that only one of them needs to be calculated. Since $t_{\pm}(x^{\pm})=0$ defines the vacuum we are choosing (see Appendix $A$), the integrals $\mathrm{I}$ and $\mathrm{I}'$ are closely related to the choice of the vacuum state.

\subsection{Solution without energy flux at the asymptotic infinity}

Here, we consider the solution that does not have energy flux in either of the two asymptotic infinities of space-time. Note that in our case it is different from \cite{BPP}. There exists an exact Minkowski vacuum, which is the reason we are calling this a solution without the energy flux at infinity. Since the Eddington-Finkelstein coordinates $\sigma^{\pm}$ are asymptotically flat, we define the vacuum state by demanding: $t_{\pm}(\sigma^{\pm})=0$ (see Appendix $A$). This corresponds to the Boulware vacuum state $\ket{B}$. Since the integrals $\mathrm{I}_{\pm}$ (equation (\ref{I_pm})) and $\mathrm{I}_{\pm}'$ (equation (\ref{I'_pm})) are proportional to $t_{\pm}(\sigma^{\pm})$ they vanish, that is, $\mathrm{I}=0$ and $\mathrm{I}'=0$.
\par The solution is given by equations (\ref{chi_resenje_q_final}) and (\ref{rho_resenje_q_final}), without the integrals $\mathrm{I}$ and $\mathrm{I}'$. Note that the limit $\lambda a\to0$ leads to the Minkowski vacuum, which is expected. This limit is apparently divergent, but one should not forget that the form of the $\lambda a$ constant is given by $\lambda a\mapsto\lambda a+\varepsilon C(\lambda a)$. The divergence can be canceled with an appropriate choice of the constant $C$ as a function of $\lambda a$. This choice is given by $C(\lambda a)=\frac{1}{4\lambda a}$.

\section{Evaporating black hole scenario}

The evaporating black hole scenario encapsulates the process of gravitational collapse into a black hole, and its subsequent evaporation through Hawking radiation. The classical gravitational collapse has been studied in section II. Without quantum back-reaction to the metric, the process of evaporation cannot be studied. We have already seen that the black hole does not start to evaporate in the classical case, since the horizon and the singularity do not intersect in finite time. At the beginning, the metric of space-time is that of Minkowski space-time. This is true even when the quantum corrections are included, since the Minkowski metric solves equations (\ref{ppmm_q})-(\ref{chi2_q}), as we have commented at the end of the previous section. This means that the quantum state is the Minkowski vacuum state. But because the black hole is created later on, it is actually the Unruh state $\ket{U}=\ket{\sigma,0}$, defined as $t_{\pm}(\sigma^{\pm})=0$. After the creation of the black hole, the asymptotically flat coordinates change to $\hat{\sigma}^{\pm}$, whose classical parts are defined by equations (\ref{koord_smena_p}) and (\ref{koord_smena_m}). In these new coordinates, the energy flux at future infinity is nonzero, since the vacuum state is not $\ket{\hat{\sigma},0}$, so the black hole starts to evaporate.
\begin{figure}[h]
    \begin{center}
    \includegraphics[width=7cm, height=7cm]{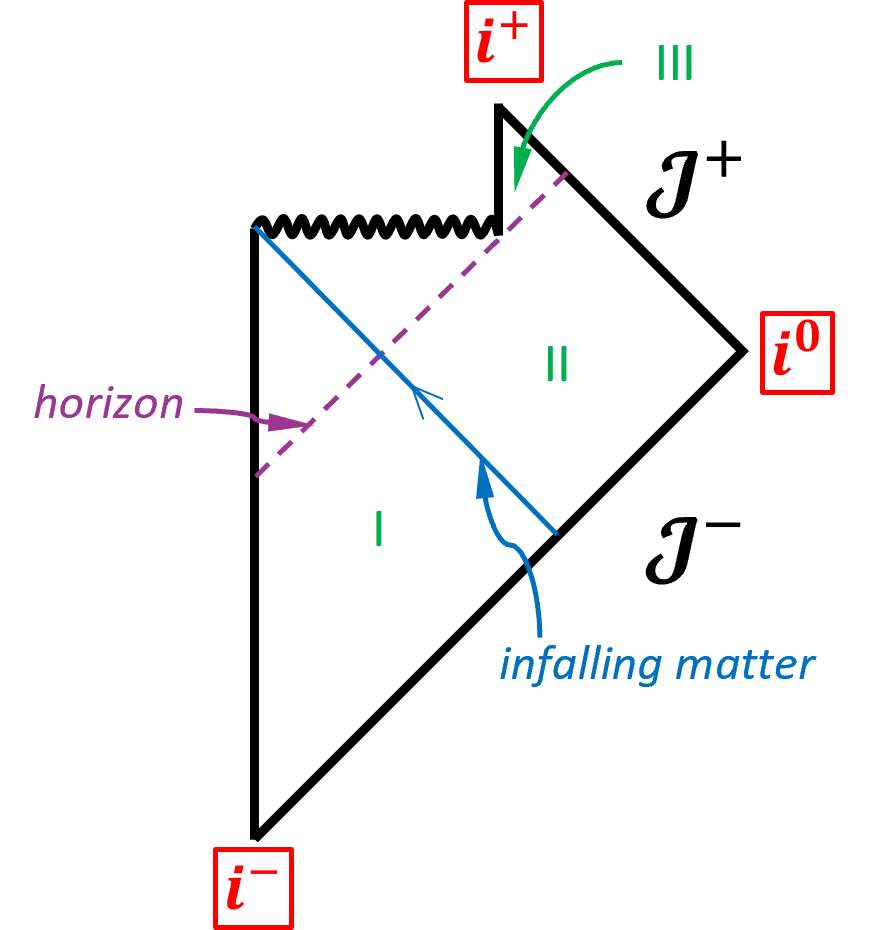}
    \caption{Penrose diagram of an evaporating black hole}\label{fig2}
    \end{center}
\end{figure}
\par Before the creation of the black hole (part I of the space-time shown in Figure \ref{fig2}), we choose to use the Eddington-Finkelstein gauge, which implies that coordinates $x^{\pm}$ in solutions (\ref{chi_resenje_q_final}) and (\ref{rho_resenje_q_final}) are actually $\sigma^{\pm}$ coordinates. The solution is then given by $(-1,1,a,0,0)$. In part II of space-time there exists an evaporating black hole. The Eddington-Finkelstein coordinates in this part of space-time are $\hat{\sigma}^{\pm}$, which implies that we need to calculate $t_{\pm}(\hat{\sigma}^{\pm})$. This can be easily done using the transformation law (\ref{t_pm_trans}) of the functions $t_{\pm}$, from the coordinates $\sigma^{\pm}$ to the coordinates $\hat{\sigma}^{\pm}$. The result is given in terms of the reduced field $\hat{\mathrm{x}}$:
\begin{equation}
    t_{+}(\hat{\sigma}^{+})=0\hspace{2mm}\text{and}\hspace{2mm}t_{-}(\hat{\sigma}^{-})=-\frac{\lambda^{2}}{4(\lambda a)^{2}}\frac{\hat{\mathrm{x}}-\frac{3}{4}}{\hat{\mathrm{x}}^{4}}.\label{t_pm_hat}
\end{equation}
\noindent Notice that space-time in Figure \ref{fig2} differs from that of Figure \ref{fig1} by the existence of a third region (III), which is space-time left after the black hole evaporates completely. The next step would be to calculate $\theta$ (from equation (\ref{theta_sol_simp})) in part II of space-time, since it holds all the necessary information regarding the metric and dilaton field. The solution is defined up to the integral $\mathrm{I}$ defined by equation (\ref{I_pm}). Since $t_{+}=0$, the integral $\mathrm{I}_{+}$ vanishes, and we are only left with the integral $\mathrm{I}_{-}$:
\begin{equation}
    \mathrm{I}=\int\frac{\de\sigma^{-}}{\mathrm{F}^{-}}\frac{1}{\varphi-\lambda a}\int\frac{\de\sigma^{-}}{\mathrm{F}^{-}}t_{-}(\hat{\sigma}^{-}).
\end{equation}
\noindent The inner integral is given in terms of reduced field $\hat{\mathrm{x}}(\sigma^{-})$ by the following expression:
\begin{align}
    \int\frac{\de\sigma^{-}}{\mathrm{F}^{-}}t_{-}(\hat{\sigma}^{-})&=\frac{1}{8a}\mathrm{F}(\hat{\mathrm{x}})\nonumber\\
    &=\frac{1}{8a}\left[\ln{\left(1-\frac{1}{\hat{\mathrm{x}}}\right)}+\frac{1}{\hat{\mathrm{x}}}-\frac{3}{2}\frac{1}{\hat{\mathrm{x}}^{2}}\right],\label{F_def}
\end{align}
\noindent where we have defined another function $\mathrm{F}$. Now, the integral $\mathrm{I}$ can be rewritten as:
\begin{equation}
    \mathrm{I}=-\frac{1}{4\lambda a}\int\frac{\hat{\mathrm{x}}\de\hat{\mathrm{x}}}{\hat{\mathrm{x}}-1}\frac{\mathrm{F}(\hat{\mathrm{x}})}{\mathrm{x}-1}.\label{I_sracunato}
\end{equation}
\subsection{Calculating the integral $\mathrm{I}$}
\noindent The main problem arising here is the fact that we have to calculate the integral with respect to $\hat{\mathrm{x}}$, but the integrand also depends on the field $\mathrm{x}$, and those two are connected to each other through the transcendental equation (\ref{resenje_II_}), which can be rewritten in the following form:
\begin{equation}
    e^{\hat{\mathrm{x}}}(\hat{\mathrm{x}}-1)=\delta e^{\mathrm{x}}(\mathrm{x}-1).\label{transcedentna}
\end{equation}
\noindent To proceed, we need to solve equation (\ref{transcedentna}) with respect to the variable $\hat{\mathrm{x}}=\hat{\mathrm{x}}(\mathrm{x},\delta)$. Using the fact that $\delta<1$ in part II of space-time, equation (\ref{transcedentna}) can be solved analytically. The detailed solution is given in the Appendix $B$ (theorems \ref{T1} and \ref{T4}). Here we give the final answer:
\begin{equation}
    \hat{\mathrm{x}}=\begin{cases}
                        1-\sum\limits_{n=1}^{\infty}\frac{\delta^{n}}{n!}e^{n(\mathrm{x}-1)}(1-\mathrm{x})^{n}\mathrm{P}_{n-1}(1),&\mathrm{x}\leqslant1\\
                        \mathrm{x}-\sum\limits_{n=1}^{\infty}\frac{(1-\delta)^{n}}{n!}\left(1-\frac{1}{\mathrm{x}}\right)^{n}\mathrm{P}_{n-1}\left(\frac{1}{\mathrm{x}}\right),&\mathrm{x}>1
                     \end{cases},\label{x_het_od_x_delta}
\end{equation}
\noindent where $\mathrm{P}_{n}(x)$ are the polynomials defined in theorem \ref{T1}. For this calculation, the function $\ln{\hat{\mathrm{x}}}=\ln{\hat{\mathrm{x}}}(\mathrm{x},\delta)$ will also be needed. It is given by (theorems \ref{T7} and \ref{T9}):
\begin{equation}
    \ln{\hat{\mathrm{x}}}=\begin{cases}
                        \hspace{6.5mm}-\sum\limits_{n=1}^{\infty}\frac{\delta^{n}}{n!}e^{n(\mathrm{x}-1)}(1-\mathrm{x})^{n}\mathrm{Q}_{n-1}(1),&\mathrm{x}\leqslant1\\
                        \ln{\mathrm{x}}-\sum\limits_{n=1}^{\infty}\frac{(1-\delta)^{n}}{n!}\left(1-\frac{1}{\mathrm{x}}\right)^{n}\mathrm{Q}_{n-1}\left(\frac{1}{\mathrm{x}}\right),&\mathrm{x}>1
                     \end{cases},\label{lnx_het_od_x_delta}
\end{equation}
\noindent where $\mathrm{Q}_{n}(x)$ are the polynomials defined in corollary \ref{T7.1}. Now, using the expansions (\ref{x_het_od_x_delta}) and (\ref{lnx_het_od_x_delta}) and the uniform convergence (see Appendix B),
it is easy to show that $\mathrm{I}$ takes the following form:
\begin{equation}
    \mathrm{I}(\mathrm{x},\delta)=\frac{1}{8\lambda a}\begin{cases}
        \mathrm{I}_{0}^{<}(\mathrm{x})+\mathrm{I}_{-1}^{<}(\mathrm{x})\ln{\delta}-\sum\limits_{n=1}^{\infty}\frac{\delta^{n}}{n!}\mathrm{I}_{n}^{<}(\mathrm{x})\\
        \mathrm{I}_{0}^{>}(\mathrm{x})+\mathrm{I}_{-1}^{>}(\mathrm{x})\ln{\delta}-\sum\limits_{n=1}^{\infty}\frac{(1-\delta)^{n}}{n!}\mathrm{I}_{n}^{>}(\mathrm{x})\label{I_resenje}
    \end{cases},
\end{equation}
\noindent where the newly defined functions $\mathrm{I}_{n}$ are given by:
\begin{align}
    \mathrm{I}^{<}_{0}(\mathrm{x})&=\frac{1}{\mathrm{x}-1}-\frac{\mathrm{x}-3}{\mathrm{x}-1}\ln{\left(1-\mathrm{x}\right)}-\ln^{2}{\left(1-\mathrm{x}\right)}\nonumber\\
    &-2\mathrm{x}+c_{0},\nonumber\\
    \mathrm{I}^{>}_{0}(\mathrm{x})&=\frac{1}{\mathrm{x}-1}-\frac{\mathrm{x}-3}{\mathrm{x}-1}\ln{\left(1-\frac{1}{\mathrm{x}}\right)}-\ln^{2}{\left(\mathrm{x}-1\right)}\nonumber\\
    &-2\mathrm{L}(\mathrm{x})+c_{0},\nonumber\\
    \mathrm{I}^{>}_{-1}(\mathrm{x})&=\mathrm{I}^{<}_{-1}(\mathrm{x})=2\left(\frac{1}{\mathrm{x}-1}-\ln{|\mathrm{x}-1|}+c_{-1}\right),\nonumber\\
    \mathrm{I}_{n}^{>}(\mathrm{x})&=\int\frac{(2\mathrm{x}-1)\de\mathrm{x}}{(\mathrm{x}-1)^{2}}\left(1-\frac{1}{\mathrm{x}}\right)^{n}\mathrm{P}_{n-1}\left(\frac{1}{\mathrm{x}}\right)\nonumber\\
    &+\int\frac{(2\mathrm{x}-3)\de\mathrm{x}}{(\mathrm{x}-1)^{2}}\left(1-\frac{1}{\mathrm{x}}\right)^{n}\mathrm{Q}_{n-1}\left(\frac{1}{\mathrm{x}}\right)\nonumber\\
    &-\left(1-\frac{1}{\mathrm{x}}\right)^{n}\frac{\mathrm{P}_{n-1}\left(\frac{1}{\mathrm{x}}\right)+3\mathrm{Q}_{n-1}\left(\frac{1}{\mathrm{x}}\right)}{\mathrm{x}-1}+c_{n},\nonumber\\
    \mathrm{I}_{n}^{<}(\mathrm{x})&=\mathrm{P}_{n-1}(1)\int\frac{(2\mathrm{x}-1)\de\mathrm{x}}{(\mathrm{x}-1)^{2}}e^{n(\mathrm{x}-1)}(1-\mathrm{x})^{n}\nonumber\\
    &+\mathrm{Q}_{n-1}(1)\int\frac{(2\mathrm{x}-3)\de\mathrm{x}}{(\mathrm{x}-1)^{2}}e^{n(\mathrm{x}-1)}(1-\mathrm{x})^{n}\nonumber\\
    &+e^{n(\mathrm{x}-1)}(1-\mathrm{x})^{n-1}(\mathrm{P}_{n-1}(1)+3\mathrm{Q}_{n-1}(1))+\Tilde{c}_{n},\label{I_clanovi}
\end{align}
\noindent where function $\mathrm{L}(\mathrm{x})=\frac{\pi^{2}}{6}-\int_{0}^{\mathrm{x}}\frac{ds}{s-1}\ln{s}$, so that $\mathrm{L}(1)=0$. The constants $c_{n}$ are arbitrary; comparing them with (\ref{chi_resenje_q_final}), we can interpret them as the choice of coordinate $\sigma^{+}$, since the part of $\mathrm{I}$ that depends on them is only a function of $\delta=\delta(\sigma^{+})$. The constants $\Tilde{c}_{n}$ are not arbitrary since we have to impose the condition of continuity on the functions $\mathrm{I}_{n}$, meaning: $\mathrm{I}_{n}^{>}(1)=\mathrm{I}_{n}^{<}(1)$.
\par The integral (\ref{I_sracunato}) has been calculated up to the simple integrals appearing in (\ref{I_clanovi}). This means that we can write an expression for the function $\theta(\mathrm{x},\delta)$ from equation (\ref{theta_sol_simp}):
\begin{widetext}
\begin{equation}
    \theta(\mathrm{x},\delta)=\frac{\mathrm{x}-1}{\mathrm{x}}\left[\frac{\lambda}{2}\left(\int\de\sigma^{+}\frac{\mathrm{G}^{+}}{\mathrm{F}^{+2}}+\int\de\sigma^{-}\frac{\mathrm{G}^{-}}{\mathrm{F}^{-2}}\right)+\frac{1}{8\lambda a}\left(\mathrm{S}_{0}(\mathrm{x})+\mathrm{S}_{-1}(\mathrm{x})\ln{\delta}-\sum_{n=1}^{\infty}\frac{\Tilde{\delta}^{n}}{n!}\mathrm{S}_{n}(\mathrm{x})\right)\right],\label{theta_resenje}
\end{equation}
\end{widetext}
\noindent where $\mathrm{S}_{0}(\mathrm{x})=\mathrm{I}_{0}(\mathrm{x})+(8\lambda a)\theta_{0}(\mathrm{x})$, $\mathrm{S}_{-1}=\mathrm{I}_{-1}$ and $\mathrm{S}_{n}=\mathrm{I}_{n}$. The function $\theta_{0}(\mathrm{x})$ is defined in equation (\ref{theta0_def}). Note that we have given both cases $\mathrm{x}<1$ and $\mathrm{x}>1$ in equation (\ref{theta_resenje}), using $\Tilde{\delta}=\delta$ for the case of $x<1$ and $\Tilde{\delta}=1-\delta$ for the case of $x>1$. $\mathrm{S}_{0}(\mathrm{x})$ is given by equation (\ref{S0_definicija}). Other functions $\mathrm{S}_{n}(\mathrm{x})$ are the same as the corresponding functions $\mathrm{I}_{n}(\mathrm{x})$, so they will not be given here again. It is simple to check that the functions $\mathrm{S}_{n}(\mathrm{x})$ satisfy a recurrence relation:
\begin{equation}
    n\frac{\de\mathrm{S}_{n}}{\de\mathrm{x}}-\frac{1}{\mathrm{x}-1}\frac{\de}{\de\mathrm{x}}\left(\frac{(\mathrm{x}-1)^{2}}{\mathrm{x}}\frac{\de\mathrm{S}_{n}}{\de\mathrm{x}}\right)=\begin{cases}\frac{\de\mathrm{S}_{n+1}}{\de\mathrm{x}} &\mathrm{x}>1\\
    \hspace{3mm}0&\mathrm{x}<1\end{cases}.\label{S_n_rekurentna}
\end{equation}
\begin{widetext}
\begin{equation}
    \mathrm{S}_{0}(\mathrm{x})=\begin{cases}
        2\frac{\mathrm{x^{2}}-2\mathrm{x}+2}{\mathrm{x}-1}\ln{\mathrm{x}}-(2\mathrm{x}-3)\ln{(1-\mathrm{x})}-\ln^{2}(1-\mathrm{x})-2\mathrm{x}+8\lambda aC\left(\frac{1}{\mathrm{x}-1}-\ln{(1-\mathrm{x})}\right)+c_{0}&;\mathrm{x}\leqslant1\\
        (2\mathrm{x}-1)\ln{\mathrm{x}}-(2\mathrm{x}-3)\ln{(\mathrm{x}-1)}-\ln^{2}(\mathrm{x}-1)-2\mathrm{L}(\mathrm{x})+8\lambda aC\left(\frac{1}{\mathrm{x}-1}-\ln{(\mathrm{x}-1)}\right)+c_{0}&;\mathrm{x}\geqslant1\\
    \end{cases}.\label{S0_definicija}
\end{equation}
\end{widetext}
Using equation (\ref{theta_izvod_plus}), $\alpha(\mathrm{x},\delta)$ can be calculated. Substituting the solution for $\theta$ and $\alpha$ in the equations of motion (\ref{jna_pp})-(\ref{jna_chi2}) we see that (\ref{theta_resenje}) is indeed the solution.
\subsection{Evaporating black hole solution}
\par In the previous sections we found a solution in part II of space-time (see Figure \ref{fig2}) where the black hole has already been created. It depends on many constants: $c_{0}$, $c_{-1}$, $c_{n}$, $\Tilde{c}_{n}$ and $C$. Since we can choose constants $c_{n}$ and $\Tilde{c}_{n}$ as we want (they correspond to the choice of the function $\mathrm{G}^{+}$) we define them such that $\mathrm{S}_{n}(1)=0$, and $c_{0}=0$. With the help of equation (\ref{theta_izvod_plus}) it is easy to derive $\alpha(\mathrm{x},\delta)$. Then, applying the same procedure as before: $\rho+\varepsilon\alpha\mapsto\rho$, $\mathrm{x}+\frac{\varepsilon}{\lambda a}\theta\mapsto\mathrm{x}$ and $\mathrm{F}^{\pm}+\varepsilon\mathrm{G}^{\pm}\mapsto\mathrm{F}^{\pm}$ (see equations (\ref{a4}) and (\ref{a5})) we arrive at the following equations:
\begin{widetext}
\begin{align}
    \mathrm{x}+\ln{|\mathrm{x}-1|}&-\frac{\varepsilon}{8(\lambda a)^{2}}\left[\mathrm{S}_{0}(\mathrm{x})+\mathrm{S}_{-1}(\mathrm{x})\ln{\delta}-\sum\frac{\Tilde{\delta}^{n}}{n!}\mathrm{S}_{n}(\mathrm{x})\right]=-\frac{\lambda}{2}\frac{1}{\lambda a}\left[\int\frac{\de\sigma^{+}}{\mathrm{F}^{+}}+\int\frac{\de\sigma^{-}}{\mathrm{F}^{-}}\right]\equiv\frac{\hat{\sigma}^{+}-\hat{\sigma}^{-}}{2a},\label{xx_jednacina}\\
    \mathrm{F}^{+}\mathrm{F}^{-}e^{2\rho}&=\left(\frac{1}{\mathrm{x}}-1\right)\left\{1+\frac{\varepsilon}{8(\lambda a)^{2}}\left[\frac{\mathrm{x}-1}{\mathrm{x}}\frac{\de \mathrm{S}_{0}(\mathrm{x})}{\de\mathrm{x}}-2c_{-1}-c_{1}+\frac{\mathrm{x}-1}{\mathrm{x}}\frac{\de \mathrm{S}_{-1}(\mathrm{x})}{\de\mathrm{x}}\ln{\delta}\right.\right.\nonumber\\
    &\hspace{4cm}\left.\left.-\sum_{n=1}^{\infty}\frac{(1-\delta)^{n}}{n!}\left(\frac{\mathrm{x}-1}{\mathrm{x}}\frac{\de \mathrm{S}_{n}(\mathrm{x})}{\de\mathrm{x}}+\mathrm{S}_{n+1}(\mathrm{x})-n\mathrm{S}_{n}(\mathrm{x})\right)\right]\right\},\hspace{2mm}\text{when }x\geqslant1,\nonumber\\
    \mathrm{F}^{+}\mathrm{F}^{-}e^{2\rho}&=\left(\frac{1}{\mathrm{x}}-1\right)\left\{1+\frac{\varepsilon}{8(\lambda a)^{2}}\left[\frac{\mathrm{x}-1}{\mathrm{x}}\frac{\de \mathrm{S}_{0}(\mathrm{x})}{\de\mathrm{x}}-\mathrm{S}_{-1}(\mathrm{x})-8(\lambda a)^{2}\mathcal{D}(\mathrm{x})+\frac{\mathrm{x}-1}{\mathrm{x}}\frac{\de \mathrm{S}_{-1}(\mathrm{x})}{\de\mathrm{x}}\ln{\delta}\right.\right.\nonumber\\
    &\hspace{4cm}\left.\left.-\sum_{n=1}^{\infty}\frac{\delta^{n}}{n!}\left(\frac{\mathrm{x}-1}{\mathrm{x}}\frac{\de \mathrm{S}_{n}(\mathrm{x})}{\de\mathrm{x}}-n\mathrm{S}_{n}(\mathrm{x})\right)\right]\right\},\hspace{2mm}\text{when }x\leqslant1.\label{rhoo_jednacina}
\end{align}
\end{widetext}
\par The only remaining arbitrary constants are $c_{-1}$ and $C$. The solution in part II of space-time needs to be continuously connected to the solution in part I of space-time over the hypersurface $\sigma^{+}=\sigma^{+}_{0}$, or $\delta=1$ (Figure \ref{fig1}). First, let us examine the "++" equation from (\ref{ppmm_q}):
\begin{align}
    \partial_{+}(e^{-2\rho}\partial_{+}\varphi)&=-\frac{\lambda}{2}\lambda a\delta(\sigma^{+}-\sigma^{+}_{0})\varphi^{-1}e^{-2\rho}\nonumber\\
    &+\varepsilon((\partial_{+}\rho)^{2}-\partial_{+}^{2}\rho)\varphi^{-1}e^{-2\rho}.\label{qq1}
\end{align}
Integrating this equation, we get:
\begin{align}
    e^{-2\rho}\partial_{+}\varphi_{>}-e^{-2\rho}\partial_{+}\varphi_{<}&=-\frac{\lambda}{2}\lambda a\varphi^{-1}e^{-2\rho}\label{qq2}\\
    &-\varepsilon(\partial_{+}\rho_{>}-\partial_{+}\rho_{<})\varphi^{-1}e^{-2\rho},\nonumber
\end{align}
where "$>$" stands for the region where $\sigma^{+}>\sigma^{+}_{0}$ and "$<$" stands for the region where $\sigma^{+}<\sigma^{+}_{0}$. The last term in equation (\ref{qq2}) warrants explanation; it comes from the last term of equation (\ref{qq1}). Since $\partial_{+}\rho=\frac{\lambda}{4}\frac{\lambda a}{\varphi^{2}}\eta(\sigma^{+}-\sigma^{+}_{0})+o(\varepsilon)$, where $\eta(\sigma^{+}-\sigma^{+}_{0})$ represents the step function, we can deduce that the second derivative $\partial^{2}_{+}\rho$ has a term that behaves as a Dirac $\delta$-function. From equation (\ref{theta_izvod_plus}) one can show that:
\begin{equation}
    \partial_{+}\varphi_{>}=\frac{\lambda}{2}\mathrm{F}^{-}_{>}e^{2\rho}\left(1+\varepsilon\mathcal{D}(\mathrm{x})\right).\label{qq3}
\end{equation}
\noindent Since when $\delta=1$, we have $\mathrm{x}=\hat{\mathrm{x}}$, $e^{2\rho}=1$, and $\partial_{+}\varphi_{<}=\lambda/2$. After a brief calculation, we arrive at the following expression for the function $\mathrm{F}^{-}(\hat{\mathrm{x}})$:
\begin{equation}
    \mathrm{F}^{-}=\left(1-\frac{1}{\hat{\mathrm{x}}}\right)\left[1-\varepsilon\mathcal{D}(\hat{\mathrm{x}})-\frac{\varepsilon}{2(\lambda a)^{2}}\frac{1}{\hat{\mathrm{x}}^{2}(\hat{\mathrm{x}}-1)}\right].\label{F-_11}
\end{equation}
\par We may now impose the continuity condition on $\rho$ and $\theta$. This will be done in two separate cases $\mathrm{x}>1$ and $\mathrm{x}<1$. First, we will consider the case where $\mathrm{x}>1$. Inserting $\delta=1$ into equation (\ref{rhoo_jednacina}) and using equation (\ref{F-_11}) we arrive at:
\begin{equation}
    \frac{8\lambda aC-2}{\hat{\mathrm{x}}-1}+2c_{-1}+c_{1}=0.\label{qq4}
\end{equation}
Equation (\ref{qq4}) implies that $C=\frac{1}{4\lambda a}$ and $c_{-1}=-c_{1}/2$. Since $c_{1}=7$ has already been determined from $\mathrm{S}_{1}(1)=0$, where $S_{1}(\mathrm{x})=2\ln{\mathrm{x}}-\frac{4}{\mathrm{x}}-\frac{3}{\mathrm{x}^{2}}+c_{1}$, it follows $c_{-1}=-\frac{7}{2}$. The other continuity condition is $\theta(\hat{\mathrm{x}},1)=0$. Using equation (\ref{xx_jednacina}), this condition becomes: 
\begin{equation}
    \frac{\lambda}{2}\int\de\sigma^{+}\frac{\mathrm{G}^{+}}{\mathrm{F}^{+2}}\bigg{|}_{\delta=1}+\frac{\lambda}{2}\int\de\sigma^{-}\frac{\mathrm{G}^{-}}{\mathrm{F}^{-2}}=-\frac{\mathrm{S}_{0}(\hat{\mathrm{x}})}{8\lambda a}.
\end{equation}
For a completely determined solution, we choose $\mathrm{G}^{+}=0$. This gives the following coordinate transformations:
\begin{align}
    \mathrm{F}^{-}&=\left(1-\frac{1}{\hat{\mathrm{x}}}\right)\left[1+\frac{\varepsilon}{8(\lambda a)^{2}}\frac{\hat{\mathrm{x}}-1}{\hat{\mathrm{x}}}\frac{\de\mathrm{S}_{0}(\hat{\mathrm{x}})}{\de\hat{\mathrm{x}}}\right]\label{F-_resenje},\\
    \mathrm{F}^{+}&=-1\label{F+_resenje},\\
    \hat{\sigma}^{-}&=\sigma^{+}_{0}-2a\left[\hat{\mathrm{x}}+\ln{|\hat{\mathrm{x}}-1|}-\frac{\varepsilon}{8(\lambda a)^{2}}\mathrm{S}_{0}(\hat{\mathrm{x}})\right]\label{sigma-_hat_resenje},\\
    \hat{\sigma}^{+}&=\sigma^{+}.\label{sigma+_hat_resenje}
\end{align}
\par Next, we consider the $x<1$ case. When $\delta=1$ is inserted into equation (\ref{rhoo_jednacina}), it reduces to the following condition:
\begin{align}
    2\ln{\hat{\mathrm{x}}}&-2\frac{\ln{\hat{\mathrm{x}}}}{\hat{\mathrm{x}}-1}-\frac{3}{\hat{\mathrm{x}}}-2-2c_{-1}+2\frac{1-4\lambda aC}{\hat{\mathrm{x}}-1}\nonumber\\
    &=\sum_{n=1}^{\infty}\frac{1}{n!}\left[\frac{\hat{\mathrm{x}}-1}{\hat{\mathrm{x}}}\frac{\de\mathrm{S}_{n}(\hat{\mathrm{x}})}{\de\hat{\mathrm{x}}}-n\mathrm{S}_{n}(\hat{\mathrm{x}})\right].\label{qqq1}
\end{align}
\noindent Inserting $\delta=1$ into equations (\ref{x_het_od_x_delta}) and (\ref{lnx_het_od_x_delta}), we find:
\begin{align}
    \hat{\mathrm{x}}&=1-\sum_{n=1}^{\infty}\frac{\mathrm{P}_{n-1}(1)}{n!}e^{n(\hat{\mathrm{x}}-1)}(1-\hat{\mathrm{x}})^{n},\label{qqq2}\\
    \ln{\hat{\mathrm{x}}}&=-\sum_{n=1}^{\infty}\frac{\mathrm{Q}_{n-1}(1)}{n!}e^{n(\hat{\mathrm{x}}-1)}(1-\hat{\mathrm{x}})^{n}.\label{qqq3}
\end{align}
\noindent Using equation (\ref{I_clanovi}) we obtain the following derivatives of $\mathrm{S}_{n}(\mathrm{x})$:
\begin{align}
    \frac{\de\mathrm{S}_{n}^{<}(\hat{\mathrm{x}})}{\de\hat{\mathrm{x}}}&=\hat{\mathrm{x}}(1-\hat{\mathrm{x}})^{n-2}e^{n(\hat{\mathrm{x}}-1)}\nonumber\\
    &\times\left[(2-n)\mathrm{P}_{n-1}(1)+(2-3n)\mathrm{Q}_{n-1}(1)\right].\label{qqq4}
\end{align}
\noindent Changing equation (\ref{qqq4}) into the first sum in equation (\ref{qqq1}) and using (\ref{qqq2}) and (\ref{qqq3}) it is easy to calculate the corresponding sum. We get:
\begin{equation}
    \frac{\hat{\mathrm{x}}-1}{\hat{\mathrm{x}}}\sum_{n=0}^{\infty}\frac{1}{n!}\frac{\de\mathrm{S}_{n}^{<}(\hat{\mathrm{x}})}{\de\hat{\mathrm{x}}}=-2-2\frac{\ln{\hat{\mathrm{x}}}}{\hat{\mathrm{x}}-1}+\frac{1}{\hat{\mathrm{x}}}+\frac{3}{\hat{\mathrm{x}}^{2}}.\label{sum_1}
\end{equation}
The second sum appearing in equation (\ref{qqq1}) can be easily calculated after taking the derivative, using the same method as in the case of the previous sum, and integrating the result with the initial condition that the sum is equal to zero when $\hat{\mathrm{x}}=1$. The final result is given by:
\begin{equation}
    -\sum_{n=1}^{\infty}\frac{\mathrm{S}_{n}^{<}(\hat{\mathrm{x}})}{(n-1)!}=2\ln{\hat{\mathrm{x}}}-\frac{4}{\hat{\mathrm{x}}}-\frac{3}{\hat{\mathrm{x}}^{2}}+7\equiv\mathrm{S}_{1}^{>}(\hat{\mathrm{x}}).\label{sum_2}
\end{equation}
Substituting the sums (\ref{sum_1}) and (\ref{sum_2}) into equation (\ref{qqq1}) returns condition (\ref{qq4}), as expected. The last condition we need to impose is the continuity of the function $\theta$. Using equation (\ref{xx_jednacina}), this continuity condition may be recast in the following form:
\begin{equation}
    \mathrm{S}_{0}^{>}(\hat{\mathrm{x}})=\mathrm{S}_{0}^{<}(\hat{\mathrm{x}})-\sum_{n=1}^{\infty}\frac{\mathrm{S}_{n}^{<}(\hat{\mathrm{x}})}{n!}.\label{qqq5}
\end{equation}
Again, with the help of equations (\ref{qqq2}) and (\ref{qqq3}) this sum can be calculated. The result is given by:
\begin{equation}
    \sum_{n=1}^{\infty}\frac{\mathrm{S}^{<}_{n}(\hat{\mathrm{x}})}{n!}=2\mathrm{L}(\hat{\mathrm{x}})-2\hat{\mathrm{x}}-\frac{\hat{\mathrm{x}}-3}{\hat{\mathrm{x}}-1}\ln{\hat{\mathrm{x}}}.\label{sum_3}
\end{equation}
Substituting equation (\ref{sum_3}) into equation (\ref{qqq5}), one can show that the continuity condition is met. 
\subsection{Asymptotically flat solution}
Our final task is to check if the solution given by equations (\ref{xx_jednacina}) and (\ref{rhoo_jednacina}) is asymptotically flat. To do this, it is instructive to better understand the form of $\mathrm{S}^{>}_{n}(\mathrm{x})$ as $\mathrm{x}\to\infty$. We show that the following equation holds when $n>1$:
\begin{equation}
    \mathrm{S}^{>}_{n}(\mathrm{x})=(n-1)!\mathrm{S}^{>}_{1}(\mathrm{x})+\frac{6}{\mathrm{x}^{4}}\mathrm{Z}_{2(n-2)}\left(\frac{1}{\mathrm{x}}\right)-6\mathrm{Z}_{2(n-2)}(1),\label{S_n_uslov}
\end{equation}
where $\mathrm{Z}_{2(n-2)}(x)$ are polynomials of degree $2(n-2)$. The expression (\ref{S_n_uslov}) is written with the condition $\mathrm{S}_{n}(1)=0$ in mind, since $\mathrm{S}_{1}(1)=0$. The last two terms in equation (\ref{S_n_uslov}) will be denoted by $\mathrm{R}_{n}(\mathrm{x})$. Using the recurrence relation (\ref{S_n_rekurentna}), the following recurrence relation for the functions $\mathrm{R}_{n}$ can be derived:
\begin{equation}
    \frac{\de\mathrm{R}_{n+1}}{\de\mathrm{x}}-n\frac{\de\mathrm{R}_{n}}{\de\mathrm{x}}=-(n-1)!\frac{24}{\mathrm{x}^{5}}-\frac{1}{\mathrm{x}-1}\frac{\de}{\de\mathrm{x}}\left(\frac{(\mathrm{x}-1)^{2}}{\mathrm{x}}\frac{\de\mathrm{R}_{n}}{\de\mathrm{x}}\right).\label{zzz1}
\end{equation}
Now we can substitute $\frac{\de\mathrm{R}_{n}}{\de\mathrm{x}}=-\frac{24} {\mathrm{x}^{5}}\mathrm{M}_{n}(\mathrm{x})$ into equation (\ref{zzz1}), which yields:
\begin{equation}
    \mathrm{M}_{n+1}-n\mathrm{M}_{n}=(n-1)!+2\left(\frac{2}{\mathrm{x}}-\frac{3}{\mathrm{x}^{2}}\right)\mathrm{M}_{n}-\frac{\mathrm{x}-1}{\mathrm{x}}\frac{\de\mathrm{M}_{n}}{\de\mathrm{x}}.\label{zzz2}
\end{equation}
Since $\mathrm{M}_{2}(\mathrm{x})=1$, equation (\ref{zzz2}) implies that the functions $\mathrm{M}_{n}(\mathrm{x})$ are polynomials of $1/\mathrm{x}$ and that their degree increases by two when $n$ increases by one. This means that the degree of $n$-th polynomial is $2(n-2)$. After integration of the functions $\mathrm{M}_{n}/\mathrm{x}^{5}$, we end up with the form (\ref{S_n_uslov}) for the functions $\mathrm{S}^{>}_{n}(\mathrm{x})$. To simplify future expressions, we will denote $6\mathrm{Z}_{2(n-2)}(1)=z_{n}$. To see if space-time is asymptotically flat, let us check the behavior of the metric when $\mathrm{x}\to\infty$. This will be done term by term using equation (\ref{rhoo_jednacina}). The term that does not depend on $\delta$ is given by:
\begin{equation}
    \left(\frac{\mathrm{x}-1}{\mathrm{x}}\right)^{2}\frac{\de\mathrm{S}^{>}_{0}}{\de\mathrm{x}}=-2\frac{\mathrm{x}-1}{\mathrm{x}}\ln{\left(1-\frac{1}{\mathrm{x}}\right)}-\frac{2}{\mathrm{x}}+\frac{1}{\mathrm{x}^{2}}-\frac{1}{\mathrm{x}^{3}},\label{o1}
\end{equation}
which implies that this term tends to zero when $\mathrm{x}\to\infty$. The $\ln{\delta}/\mathrm{x}$ term also tends to zero when $\mathrm{x}\to\infty$. Since $\frac{\de\mathrm{S}^{>}_{n}}{\de\mathrm{x}}\sim(n-1)!\frac{2}{\mathrm{x}}$ when $\mathrm{x}\to\infty$, this term also vanishes. The last remaining term in the sum is: 
\begin{equation}
    \mathrm{S}^{>}_{n+1}-n\mathrm{S}^{>}_{n}=-(z_{n+1}-nz_{n})+\mathcal{O}\left(\frac{1}{\mathrm{x}^{4}}\right).
\end{equation}
This implies that we are left with the following expression:
\begin{equation}
    \lim_{\mathrm{x}\to\infty}\mathrm{F}^{+}\mathrm{F}^{-}e^{2\rho}=-1-\frac{\varepsilon}{8(\lambda a)^{2}}\sum_{n=1}^{\infty}\frac{(1-\delta)^{n}}{n!}(z_{n+1}-nz_{n}).\label{rho_beskonacno}
\end{equation}
From equation (\ref{rho_beskonacno}) we conclude that in $\hat{\sigma}^{\pm}$ coordinates, defined in (\ref{sigma-_hat_resenje}) and (\ref{sigma+_hat_resenje}), the metric is not asymptotically flat. Since the metric at infinity depends only on $\sigma^{+}$, we can try to find a coordinate transformation that makes the metric flat at infinity: $\de s^{2}=e^{2\rho}\de\sigma^{+}\de\sigma^{-}=-\mathrm{F}^{+}\mathrm{F}^{-}e^{2\rho}\de\hat{\sigma}^{+}\de\hat{\sigma}^{-}=-\mathrm{F}^{+}\mathrm{F}'^{+}\mathrm{F}^{-}e^{2\rho}\de\hat{\sigma}'^{+}\de\hat{\sigma}^{-}$. Demanding that the metric be flat at infinity in the coordinates $\hat{\sigma}^{-}$ and $\hat{\sigma}'^{+}$ boils down to choosing:
\begin{equation}
    \mathrm{F}'^{+}=1-\frac{\varepsilon}{8(\lambda a)^{2}}\sum_{n=1}^{\infty}\frac{(1-\delta)^{n}}{n!}(z_{n+1}-nz_{n}).\label{F'+}
\end{equation}
To obtain the coordinate $\hat{\sigma}'^{+}$, the integral $\int\frac{\de\hat{\sigma}^{+}}{\mathrm{\mathrm{F}'^{+}}}$ needs to be calculated. The result is given by:
\begin{equation}
    \hat{\sigma}'^{+}=-2a\ln{\delta}+\frac{a\varepsilon}{4(\lambda a)^{2}}\sum_{n=1}^{\infty}\frac{(1-\delta)^{n}}{n!}z_{n}+2aD,\label{sigma_hat_+'}
\end{equation}
where $D$ is an undetermined constant. Next, we check what happens with equation (\ref{xx_jednacina}) for the dilaton field, at infinity. Expressing $\ln{\delta}$ from equation (\ref{sigma_hat_+'}) and substituting it into equation (\ref{xx_jednacina}) yields the following:
\begin{align}
    \mathrm{x}+&\ln{(\mathrm{x}-1)}-\frac{\varepsilon}{8(\lambda a)^{2}}\bigg{[}\mathrm{S}_{0}^{>}(\mathrm{x})+\mathrm{S}_{-1}(\mathrm{x})\ln{\delta}\label{o2}\\
    &-\sum_{n=1}^{\infty}\frac{(1-\delta)^{n}}{n!}(\mathrm{S}_{n}^{>}(\mathrm{x})+z_{n})\bigg{]}=\frac{\hat{\sigma}'^{+}-\hat{\sigma}^{-}}{2a}-D.\nonumber
\end{align}
First we will work out the $\delta$ dependent terms in equation (\ref{o2}). At infinity $\mathrm{S}_{n}(\mathrm{x})\sim(n-1)!(2\ln{\mathrm{x}}+7)-z_{n}$, allowing us to express the sum appearing in equation (\ref{o2}) as:
\begin{equation}
    -\sum_{n=1}^{\infty}\frac{(1-\delta)^{n}}{n!}(\mathrm{S}^{>}_{n}(\mathrm{x})+z_{n})\sim\ln{\delta}\bigg{(}2\ln{\mathrm{x}}+7\bigg{)}.\label{o3}
\end{equation}
This term (\ref{o3}) cancels exactly with the term $\mathrm{S}_{-1}(\mathrm{x})\ln{\delta}$ when $\mathrm{x}\to\infty$. The $\mathrm{S}^{>}_{0}(\mathrm{x})$ term in (\ref{o2}) behaves as:
\begin{equation}
    \lim_{\mathrm{x}\to\infty}\mathrm{S}^{>}_{0}(\mathrm{x})=2+\frac{\pi^{2}}{3},\label{o4}
\end{equation}
as $\mathrm{x}\to\infty$. The important step in the derivation of the formula (\ref{o4}) is that the behavior of $\mathrm{L}(\mathrm{x})$ at infinity is given by $\mathrm{L}(\mathrm{x})\sim-\frac{1}{2}\ln^{2}{(\mathrm{x}-1)}-\frac{\pi^{2}}{6}$. At infinity, equation (\ref{o2}) reduces to:
\begin{equation}
    \mathrm{x}+\ln{(\mathrm{x}-1)}-\frac{\varepsilon}{4(\lambda a)^{2}}\left(\frac{\pi^{2}}{6}+1\right)=\frac{\hat{\sigma}'^{+}-\hat{\sigma}^{-}}{2a}-D.
\end{equation}
Demanding that the asymptotic behavior of the dilaton field (at infinity) be unaffected by quantum corrections leads to the choice of the constant $D=\frac{\varepsilon}{4(\lambda a)^{2}}\left(\frac{\pi^{2}}{6}+1\right)$. For simplicity, we may redefine the coordinate transformation of $\sigma^{+}$ by choosing $\mathrm{F}^{+}\mathrm{F}'^{+}\mapsto\mathrm{F}^{+}$ and $\hat{\sigma}'^{+}\mapsto\hat{\sigma}^{+}$. 
In this new notation, the coordinate transformations are given by:
\begin{align}
    \mathrm{F}^{-}&=\left(1-\frac{1}{\hat{\mathrm{x}}}\right)\left[1+\frac{\varepsilon}{8(\lambda a)^{2}}\frac{\hat{\mathrm{x}}-1}{\hat{\mathrm{x}}}\frac{\de\mathrm{S}_{0}(\hat{\mathrm{x}})}{\de\hat{\mathrm{x}}}\right],\label{F-_resenje_final}\\
    \mathrm{F}^{+}&=-1+\frac{\varepsilon}{8(\lambda a)^{2}}\sum_{n=1}^{\infty}\frac{(1-\delta)^{n}}{n!}(z_{n+1}-nz_{n}).\label{F+_resenje_final}
\end{align}
The new asymptotically flat coordinates are given by:
\begin{align}
    \hat{\sigma}^{-}&=\sigma^{+}_{0}-2a\left[\hat{\mathrm{x}}+\ln{|\hat{\mathrm{x}}-1|}-\frac{\varepsilon}{8(\lambda a)^{2}}\mathrm{S}_{0}(\hat{\mathrm{x}})\right],\label{sigma-_hat_resenje_final}\\
    \hat{\sigma}^{+}&=\sigma^{+}+\frac{2a\varepsilon}{8(\lambda a)^{2}}\left[\frac{\pi^{2}}{3}+2+\sum_{n=1}^{\infty}\frac{(1-\delta)^{n}}{n!}z_{n}\right].\label{sigma+_hat_resenje_final}
\end{align}
\noindent We define a new $\hat{\delta}$ coordinate in terms of $\hat{\sigma}^{+}$: 
\begin{equation}
    \hat{\delta}=\exp{\left(-\frac{\hat{\sigma}^{+}-\sigma^{+}_{0}}{2a}\right)}.\label{delta_hat_def}
\end{equation}
Now we can write down the final expression of the metric and the dilaton field in part II of the space-time (see Figure \ref{fig1}),
\begin{widetext}
\begin{align}
    &\mathrm{F}^{+}\mathrm{F}^{-}e^{2\rho}=\left(\frac{1}{\mathrm{x}}-1\right)\left\{1+\frac{\varepsilon}{8(\lambda a)^{2}}\left[\frac{\mathrm{x}-1}{\mathrm{x}}\frac{\de\mathrm{S}^{>}_{0}(\mathrm{x})}{\de\mathrm{x}}-\frac{2\ln{\delta}}{\mathrm{x}-1}-\sum_{n=1}^{\infty}\frac{(1-\delta)^{n}}{n!}\left(\frac{\mathrm{x}-1}{\mathrm{x}}\frac{\de\mathrm{S}^{>}_{n}(\mathrm{x})}{\de\mathrm{x}}+\mathrm{S}^{>}_{n+1}(\mathrm{x})-n\mathrm{S}^{>}_{n}(\mathrm{x})\right)\right]\right\}\nonumber\\
    &\hspace{7cm}\times\left[1-\frac{\varepsilon}{8(\lambda a)^{2}}\sum_{n=1}^{\infty}\frac{(1-\delta)^{n}}{n!}(z_{n+1}-nz_{n})\right]\label{rho_resenje_>}\\
    &\mathrm{F}^{+}\mathrm{F}^{-}e^{2\rho}=\left(\frac{1}{\mathrm{x}}-1\right)\left\{1+\frac{\varepsilon}{8(\lambda a)^{2}}\left[\frac{\mathrm{x}-1}{\mathrm{x}}\frac{\de\mathrm{S}^{<}_{0}(\mathrm{x})}{\de\mathrm{x}}+\mathrm{S}^{>}_{1}(\mathrm{x})-\frac{2\ln{\delta}}{\mathrm{x}-1}-\sum_{n=1}^{\infty}\frac{\delta^{n}}{n!}\left(\frac{\mathrm{x}-1}{\mathrm{x}}\frac{\de\mathrm{S}^{<}_{n}(\mathrm{x})}{\de\mathrm{x}}-n\mathrm{S}^{<}_{n}(\mathrm{x})\right)\right]\right\}\nonumber\\
    &\hspace{7cm}\times\left[1-\frac{\varepsilon}{8(\lambda a)^{2}}\sum_{n=1}^{\infty}\frac{(1-\delta)^{n}}{n!}(z_{n+1}-nz_{n})\right]\label{rho_resenje_<}\\
    &\mathrm{x}+\ln{(\mathrm{x}-1)}-\frac{\varepsilon}{8(\lambda a)^{2}}\left[\mathrm{S}_{0}^{>}(\mathrm{x})-\frac{\pi^{2}}{3}-2+\ln{\delta}\mathrm{S}_{-1}(\mathrm{x})-\sum_{n=1}^{\infty}\frac{(1-\delta)^{n}}{n!}(\mathrm{S}_{n}^{>}(\mathrm{x})+z_{n})\right]\nonumber\\
    &\hspace{7cm}=-\ln{\hat{\delta}}+\hat{\mathrm{x}}+\ln{(\hat{\mathrm{x}}-1)}-\frac{\varepsilon}{8(\lambda a)^{2}}\mathrm{S}^{>}_{0}(\hat{\mathrm{x}})\equiv\frac{\hat{\sigma}^{+}-\hat{\sigma}^{-}}{2a}\equiv\frac{\hat{\sigma}}{a},\label{x_resenje_>}\\
    &\mathrm{x}+\ln{(1-\mathrm{x})}-\frac{\varepsilon}{8(\lambda a)^{2}}\left[\mathrm{S}_{0}^{<}(\mathrm{x})-\frac{\pi^{2}}{3}-2+\ln{\delta}\mathrm{S}_{-1}(\mathrm{x})-\sum_{n=1}^{\infty}\frac{\delta^{n}}{n!}\mathrm{S}_{n}^{<}(\mathrm{x})-\sum_{n=1}^{\infty}\frac{(1-\delta)^{n}}{n!}z_{n}\right]\nonumber\\
    &\hspace{7cm}=-\ln{\hat{\delta}}+\hat{\mathrm{x}}+\ln{(1-\hat{\mathrm{x}})}-\frac{\varepsilon}{8(\lambda a)^{2}}\mathrm{S}^{>}_{0}(\hat{\mathrm{x}})\equiv\frac{\hat{\sigma}^{+}-\hat{\sigma}^{-}}{2a}\equiv\frac{\hat{\sigma}}{a}.\label{x_resenje_<}
\end{align}
\end{widetext}
\noindent Equation (\ref{rho_resenje_>}) represents the solution for the metric when $\mathrm{x}\geqslant1$, while equation (\ref{rho_resenje_<}) represents the solution for the metric when $\mathrm{x}\leqslant1$. Equations (\ref{x_resenje_>}) and (\ref{x_resenje_<}) represent the transcendental equations for the dilaton field $\mathrm{x}=\frac{1}{\lambda a}e^{-\phi}$ when $\mathrm{x}\geqslant1$ and $\mathrm{x}\leqslant1$, respectively, in terms of the coordinates $\sigma^{+}$ (through the dependence of $\delta$) and $\sigma^{-}$ (through the dependence of $\hat{\mathrm{x}}$). The functions $\mathrm{S}_{0}^{>/<}$ and $\mathrm{S}_{n}^{>/<}(\mathrm{x})$ that appear in equations (\ref{rho_resenje_>})-(\ref{x_resenje_<}) are given by equations (\ref{S0_definicija}) and (\ref{I_clanovi}), respectively. Notice that when $\delta=1$ both equations (\ref{rho_resenje_>}) and (\ref{rho_resenje_<}) are reduced to equation (\ref{F-_resenje_final}) for the coordinate transformation $\mathrm{F}^{-}$. This implies that $e^{2\rho}=1$ when $\delta=1$, which is the continuity condition for the metric. In addition, equations (\ref{x_resenje_>}) and (\ref{x_resenje_<}) imply that $\mathrm{x}=\hat{\mathrm{x}}$ when $\delta=1$, which is the continuity condition of the dilaton field. In equations (\ref{x_resenje_>}) and (\ref{x_resenje_<}) we can see new asymptotically flat coordinates $\hat{\sigma}^{\pm}$, defined in equations (\ref{sigma-_hat_resenje_final}) and (\ref{sigma+_hat_resenje_final}).
\medskip
\section{The end-state of black hole evolution}
\par In the previous section we have found an evaporating black hole solution created by collapsing matter. In this section, we investigate the final state of black hole evolution. First, let us calculate the apparent horizon. It is defined by $\partial_{+}e^{-2\phi}=0$, which is equivalent to $\mathrm{x}\partial_{+}\mathrm{x}=0$. Using formula (\ref{qq3}), the direct calculation yields the following:
\begin{widetext}
\begin{equation}
    \mathrm{x}=1+\frac{\varepsilon}{8(\lambda a)^{2}}\left[2\ln{\delta}+8-2(\mathrm{x}-2)\ln{\mathrm{x}}-\frac{3}{\mathrm{x}}+\frac{4}{\mathrm{x}^{2}}-5\mathrm{x}+(\mathrm{x}-1)\sum_{n=1}^{\infty}\frac{\delta^{n}}{n!}\left(\frac{\mathrm{x}-1}{\mathrm{x}}\frac{\de\mathrm{S}_{n}(\mathrm{x})}{\de \mathrm{x}}-n\mathrm{S}_{n}(\mathrm{x})\right)\right].\label{AH_opste}
\end{equation}
\end{widetext}
Solving equation (\ref{AH_opste}) perturbatively, by choosing ansatz $\mathrm{x}_{AH}=\mathrm{x}^{(0)}_{AH}+\varepsilon\mathrm{x}_{AH}^{(1)}$, we obtain:
\begin{equation}
    \mathrm{x}_{AH}=1+\frac{\varepsilon}{4(\lambda a)^{2}}\bigg{(}2+\ln{\delta}\bigg{)}.\label{AH}
\end{equation}
As time elapses, $\delta$ gets smaller and smaller, which in turn reduces the apparent horizon. That is, in fact, the expected result. Note that we have used expression (\ref{rho_resenje_<}) in calculating $\partial_{+}\mathrm{x}$. The same result for the apparent horizon can be derived using the formula (\ref{rho_resenje_>}) since $\mathrm{x}_{AH}^{(0)}=1$. It is not possible to find the dependence $\hat{\mathrm{x}}_{AH}(\delta)$ using equations (\ref{x_resenje_>}) or (\ref{x_resenje_<}) since they diverge when $\mathrm{x}=1+\mathcal{O}(\varepsilon)$. This is an artifact of the perturbative expansion. We need to rewrite these equations in a more suitable form. The details are given in Appendix \ref{app_C}. Using the new notation defined in Appendix \ref{app_C}, and equating expression (\ref{d_+xy}) to zero, the apparent horizon is given by:
\begin{equation}
    \mathrm{y}_{AH}=\sqrt{1+\frac{\varepsilon}{(\lambda a)^{2}}}.\label{y_AH}
\end{equation}
It is easy to check that this equation reproduces the same result (\ref{AH}). Using formula (\ref{resenje_oko_1_opste}) the $\hat{\mathrm{x}}(\delta)$ dependence of the apparent horizon is given by:
\begin{equation}
    \hat{\mathrm{x}}_{AH}=1+\frac{\varepsilon}{4(\lambda a)^{2}}\left[1+\left(1+\frac{\varepsilon}{4(\lambda a)^{2}}\ln{\delta}\right)^{\frac{4(\lambda a)^{2}}{\varepsilon}}\right].\label{x_hat(delta)_AH}
\end{equation}
After taking the limit $\varepsilon\to0$ of the term within the square brackets, since that term is already of $\mathcal{O}(\varepsilon)$ order, we get the following simplified expression for the apparent horizon in $(\hat{\mathrm{x}},\delta)$ coordinates:
\begin{equation}
    \hat{\mathrm{x}}_{AH}(\delta)=1+\frac{\varepsilon}{4(\lambda a)^{2}}(1+\delta).\label{x_hat_AH_early}
\end{equation}
\par The next step is to calculate the line of the singularity. In order to do that, we need to calculate the scalar curvature. It is given by: $R=8e^{-2\rho}\partial_{+}\partial_{-}\rho$. With the help of the equations of motion (\ref{pm_q}-\ref{chi2_q}) the curvature can be expressed in terms of the first derivatives of the dilaton field:
\begin{equation}
    R=\frac{2}{a^{2}}\frac{1+4a^{2}e^{-2\rho}\partial_{+}\mathrm{x}\partial_{-}\mathrm{x}}{\mathrm{x}^{2}-\frac{\varepsilon}{(\lambda a)^{2}}}.\label{skalarna_krivina_formula}
\end{equation} 
Note that equation (\ref{skalarna_krivina_formula}) naively implies a change in the position of the singularity from $\mathrm{x}_{S}=0$ to $\mathrm{x}_{S}=\frac{\sqrt{\varepsilon}}{\lambda a}$ in the presence of quantum corrections. If this holds true, it might lead to an issue within the perturbative approach. Using the expression (\ref{skalarna_krivina_formula}) and the recurrence formula (\ref{S_n_rekurentna}), the scalar curvature becomes:
\begin{widetext}
    \begin{equation}
        R=\frac{2}{a^{2}\mathrm{x}^{3}}\left\{1+\frac{\varepsilon}{8(\lambda a)^{2}}\left[2\ln{\mathrm{x}}-3-\frac{2}{\mathrm{x}}+\frac{15}{\mathrm{x}^{2}}+2\mathrm{x}+2\ln{\delta}+\frac{(\mathrm{x}-1)^{2}}{\mathrm{x}}\sum_{n=1}^{\infty}\frac{\delta^{n}}{n!}\frac{\de\mathrm{S}_{n}(\mathrm{x})}{\de\mathrm{x}}\right]\right\}.\label{Skalarna_krivina}
    \end{equation}
\end{widetext}
Let us first consider the beginning of the evaporation $\delta=1$. Using the formula (\ref{sum_1}), we arrive at the following:
\begin{equation}
    R=\frac{2}{a^{2}\mathrm{x}^{3}}\left(1+\frac{3\varepsilon}{2(\lambda a)^{2}}\frac{1}{\mathrm{x}^{2}}\right).\label{Krivin_delta=1}
\end{equation}
Equation (\ref{Krivin_delta=1}) yields two solutions for the line of singularity:
\begin{equation}
    \mathrm{x}_{S}=0\hspace{2mm}\text{and}\hspace{2mm}\mathrm{x}_{S}\approx\sqrt{\frac{3\varepsilon}{2(\lambda a)^{2}}}.\label{Singularity}
\end{equation}
Neither of these two solutions corresponds to the expected result $\mathrm{x}_{S}=\frac{\sqrt{\varepsilon}}{\lambda a}$. This could be a consequence of the perturbative approach. Fortunately, it is possible to find a non-perturbative expression for the scalar curvature at the beginning of the evaporation. We will solve the equations of motion along the $\sigma^{+}=\sigma^{+}_{0}+\epsilon,\hspace{2mm}\epsilon\to0$ hypersurface. Here, the metric is fixed: $e^{2\rho}=1$, and for the dilaton field we have $\mathrm{x}=\frac{\sigma^{+}_{0}-\sigma^{-}}{2a}$. Equation (\ref{qq2}) yields:
\begin{equation}
    \mathrm{x}\partial_{+}\mathrm{x}+\frac{\varepsilon}{(\lambda a)^{2}}\partial_{+}\rho=\frac{\mathrm{x}-1}{2a}.\label{zs1}
\end{equation}
Using (\ref{zs1}) it is easy to show that equation (\ref{chi2_q}) is satisfied. Combining equations (\ref{pm_q}) and (\ref{zs1}) results in the following differential equation for $\partial_{+}\rho=f(\mathrm{x(\sigma^{-})})$:
\begin{equation}
    \frac{\mathrm{d}f}{\mathrm{dx}}=-\frac{1}{2a}\frac{1+2a\frac{\varepsilon}{(\lambda a)^{2}}f}{\mathrm{x}\left(\mathrm{x}^{2}-\frac{\varepsilon}{(\lambda a)^{2}}\right)}.\label{zs2}
\end{equation}
The solution to equation (\ref{zs2}) is given by:
\begin{equation}
    f(\mathbf{x})=\frac{\lambda^{2}a}{2\varepsilon}\left(\frac{Q\mathrm{x}}{\sqrt{\mathrm{x}^2-\frac{\varepsilon}{(\lambda a)^{2}}}}-1\right),\label{zs3}
\end{equation}
where $Q$ is a constant. By direct comparison between equations (\ref{zs3}) and the solution for $\partial_{+}\rho$ derived from (\ref{rho_resenje_<}) when $\delta=1$ we conclude that $Q=1$. Then, the solution for $\partial_{\pm}\rho$ and $\partial_{\pm}\mathrm{x}$, along with $\sigma^{+}=\sigma^{+}_{0}$, is given by:
\begin{align}
    \partial_{+}\mathrm{x}&=\frac{1}{2a}\left(1-\frac{1}{\sqrt{\mathrm{x}^{2}-\frac{\varepsilon}{(\lambda a)^{2}}}}\right),\hspace{5mm}\partial_{-}\mathrm{x}=-\frac{1}{2a},\label{x_izvod_delta=1}\\
    \partial_{+}\rho&=\frac{\lambda^{2}a}{2\varepsilon}\left(\frac{\mathrm{x}}{\sqrt{\mathrm{x}^2-\frac{\varepsilon}{(\lambda a)^{2}}}}-1\right),\hspace{5mm}\partial_{-}\rho=0.\label{rho_izvod_delta=1}
\end{align}
Finally, the scalar curvature reads:
\begin{equation}
    R=\frac{2}{a^{2}}\frac{1}{\left(\mathrm{x}^{2}-\frac{\varepsilon}{(\lambda a)^{2}}\right)^{\frac{3}{2}}}.\label{Krivina_delta=1_neperturbativno}
\end{equation}
We deduce that at the beginning of the evaporation, the singularity appears at:
\begin{equation}
    \mathrm{x}_{S}=\frac{\sqrt{\varepsilon}}{\lambda a},\label{Singularnost_kvantna korekcija}
\end{equation}
as expected. When all the constants are returned, we get $\varphi_{S}=\sqrt{\frac{\hbar G}{12\pi c^{3}}}$, which is of Planck length order of magnitude.
\par Next, we find the quantum-corrected formula for the line of singularity (\ref{singularnost}). In the $\mathrm{y}$ coordinates (introduced in Appendix \ref{app_C}), the singularity is given by:
\begin{equation}
    \mathrm{y}_{S}=\frac{\sqrt{\varepsilon}}{\lambda a}\frac{1}{1+\frac{\varepsilon}{4(\lambda a)^{2}}\ln{\delta_{S}}}.\label{sing_1}
\end{equation}
Placing the expression (\ref{sing_1}) in equation (\ref{y_<_resenje}) and expanding to the $\mathcal{O}(\varepsilon)$ term leads to the following equation for the singularity:
\begin{widetext}
    \begin{equation}
        \left(1+\frac{9\varepsilon}{8(\lambda a)^{2}}\right)\ln{\delta}_{S}+\frac{\varepsilon}{8(\lambda a)^{2}}\sum_{n=1}^{\infty}\frac{\delta_{S}^{n}}{n!}\mathrm{S}_{n}^{<}(0)=\hat{\mathrm{x}}_{S}+\ln{|\hat{\mathrm{x}}_{S}-1|}-\frac{\varepsilon}{8(\lambda a)^{2}}\mathrm{S}_{0}^{>}(\hat{\mathrm{x}_{S}}).\label{Singularnost_qm}
    \end{equation}
\end{widetext}
Equation (\ref{Singularnost_qm}) tells us that the singularity (\ref{singularnost}) gets a quantum correction of $\mathcal{O}(\varepsilon)$ order.
\subsection{The end-point of the evaporation}
\par By the cosmic censorship conjecture, a naked singularity cannot exist. On the other hand, the two hypersurfaces $\mathrm{y}_{AH}(\delta)$ (given by equation (\ref{y_AH})) and $\mathrm{y}_{S}(\delta)$ (\ref{sing_1}) cross at a finite value of the $\sigma^{+}$ coordinate (or $\delta$ coordinate), that is $\tilde{a}_{E}=\frac{\sqrt{\varepsilon}}{\lambda a}$. The point defined by this value is the end-point of the evaporation. It is given by:
\begin{equation}
    \sigma^{+}_{E}=\sigma^{+}_{0}+\frac{8a(\lambda a)^{2}}{\varepsilon}\left(1-\frac{\sqrt{\varepsilon}}{\lambda a}\right).\label{sigma_+_End}
\end{equation}
Replacing all the constants, equation (\ref{sigma_+_End}) becomes:
\begin{equation}
    \sigma^{+}_{E}=\sigma^{+}_{0}+\frac{768\pi M^{3}G^{2}}{\hbar\lambda^{4}}\left(1-\frac{\lambda}{4M}\sqrt{\frac{\hbar}{3\pi G}}\right).\label{t_end}
\end{equation}
\noindent The formula (\ref{t_end}) tells us that the dominant term for the time of evaporation of the black hole scales as $M^{3}$, which is the expected result from a thermodynamics standpoint, as well as the result derived in \cite{Dimred}. It will be useful to express the end-point of the evaporation in terms of the $\delta$ coordinate:
\begin{equation}
    \delta_{E}=e^{-\frac{4(\lambda a)^{2}}{\varepsilon}\left(1-\frac{\sqrt{\varepsilon}}{\lambda a}\right)}.\label{delta_End}
\end{equation}
The quantity (\ref{delta_End}) is very small $\delta_{E}\ll\frac{\varepsilon}{4(\lambda a)^{2}}$. On the other hand $\ln{\delta_{E}}$ is very large, as it scales as $1/\varepsilon$. This shows that the perturbative approach breaks down at this point. We can still get some useful new insights if we tread carefully. To obtain the exact position of the end-point of the evaporation, in addition to $\delta_{E}$, we need to calculate $\hat{\mathrm{x}}_{E}$. From $\mathrm{y}_{E}=\mathrm{y}_{AH}=1+\frac{\varepsilon}{2(\lambda a)^{2}}$, it follows that $\hat{\mathrm{x}}_{E}=1+\mathcal{O}(\varepsilon)$. Using the expression (\ref{resenje_oko_1}) the other coordinate of the end-point of the evaporation is:
\begin{equation}
    \hat{\mathrm{x}}_{E}=1+\frac{\varepsilon}{4(\lambda a)^{2}}\left[1+e^{-\frac{4(\lambda a)^{2}}{\varepsilon}\left(1-\frac{\sqrt{\varepsilon}}{\lambda a}\right)}\right].\label{x_hat_E}
\end{equation}
Note that formula (\ref{x_hat_E}) can be directly derived using (\ref{x_hat_AH_early}) by placing $\delta=\delta_{E}$.
\par Finally, we also comment on the position of the event horizon of the black hole. It is given by $\hat{\mathrm{x}}=\hat{\mathrm{x}}_{E}$. We expect it to appear when $\mathrm{x}<1$ but very close to the $\mathrm{x}=1$ hypersurface at the beginning of the evaporation. Then, it would gradually decrease until it intersects with the singularity. To find the line of the horizon $\mathrm{y}_{H}(\delta)$ let us revisit equation (\ref{resenje_oko_1_opste}). The expression is given by:
\begin{equation}
    \mathrm{y}_{H}=1+\frac{\varepsilon}{4(\lambda a)^{2}}\left[1+\left(\frac{\tilde{a}_{E}}{\tilde{a}_{H}}\right)^{\frac{4(\lambda a)^{2}}{\varepsilon}}\right].\label{y_H}
\end{equation}
Notice that in the limit $\tilde{a}_{H}\to \tilde{a}_{E}$ the expression (\ref{y_H}) reduces to $\mathrm{y}_{H}=\mathrm{y}_{AH}$, which is the expected result. Now, it is easy to calculate $\mathrm{x}_{H}$. It is given by the following equation:
\begin{equation}
    \mathrm{x}_{H}=\left(1+\frac{\varepsilon}{4(\lambda a)^{2}}\ln{\delta_{H}}\right)\left[1+\frac{\varepsilon}{4(\lambda a)^{2}}\left(1+\frac{\delta_{E}}{\delta_{H}}\right)\right].\label{x_H}
\end{equation}
From equation (\ref{skalarna_krivina_formula}) along the apparent horizon hypersurface, the scalar curvature is the following:
\begin{equation}
    R_{AH}=\frac{2}{a^{2}}\frac{1}{\mathrm{x}^{2}-\frac{\varepsilon}{(\lambda a)^{2}}}.\label{R_AH}
\end{equation}
This expression can also be derived using (\ref{Skalarna_krivina}), which can be rewritten as:
\begin{widetext}
    \begin{equation}
        R=\frac{2}{a^{2}}\frac{1}{\mathrm{x}^{2}-\frac{\varepsilon}{(\lambda a)^{2}}}\frac{1}{\sqrt{\mathrm{y}^{2}-\frac{\varepsilon}{(\lambda a)^{2}}}}\left\{1+\frac{\varepsilon}{8(\lambda a)^{2}}\left[2\ln{\mathrm{y}}-3-\frac{2}{\mathrm{y}}+\frac{3}{\mathrm{y}^{2}}+2\mathrm{y}+\frac{(\mathrm{y}-1)^{2}}{\mathrm{y}}\sum_{n=1}^{\infty}\frac{\delta^{n}}{n!}\frac{\mathrm{d}\mathrm{S}^{<}_{n}}{\mathrm{dy}}\right]\right\},\label{krivina_y}
    \end{equation}
\end{widetext}
then, by applying (\ref{y_AH}) it becomes equivalent to (\ref{R_AH}), as expected.
\subsection{The radiated energy}
In this part, we will determine the amount of energy that was emitted by the black hole throughout its lifespan. It will be calculated as the energy flux at future null infinity from the beginning of the evaporation, $\hat{\sigma}^{-}\to-\infty$, until the end-point of the evaporation, $\hat{\sigma}^{-}=\hat{\sigma}^{-}_{E}$,
\begin{equation}
    E_{rad}=\int_{-\infty}^{\hat{\sigma}^{-}_{E}}\de\hat{\sigma}^{-}\langle\Delta T_{--}^{(f)}(\hat{\sigma}^{-})\rangle\bigg{|}_{\mathcal{I}^{+}}.\label{uuu1}
\end{equation}
Considering the fact that the metric is flat along the $\mathcal{I}^{+}$ hypersurface, the previous integral (\ref{uuu1}) reduces to:
\begin{equation}
    E_{rad}=-\frac{\varepsilon}{G}\int_{-\infty}^{\sigma^{-}_{E}}\frac{\de\sigma^{-}}{\mathrm{F}^{-}}t_{-}(\hat{\sigma}^{-})=-\frac{\varepsilon}{8aG}\mathrm{F}(\hat{\mathrm{x}}_{E}).\label{E_rad}
\end{equation}
\noindent Since the Schwarzschild radius $\lambda a$ is directly proportional to the mass, we can use it as a measure of energy. In this regard, we can associate a value of $\lambda a_{rad}=\frac{2E_{rad}G}{\lambda}$ to the radiated energy. Then the value of $\lambda a_{rad}$ will be given by:
\begin{equation}
    \lambda a_{rad}=-\frac{\varepsilon}{4\lambda a}\mathrm{F}(\hat{\mathrm{x}}_{E}).\label{a_rad}
\end{equation}
The function $\mathrm{F}(\hat{\mathrm{x}})$ is not well defined when $\hat{\mathrm{x}}\to1$. This is the same type of problem that we address in Appendix \ref{app_C}. One should calculate the first-order correction to the function $\mathrm{F}(\hat{\mathrm{x}})$:
\begin{equation}
    \mathrm{F}(\hat{\mathrm{x}})=-2\frac{\mathrm{d}\mathrm{F}^{-}}{\mathrm{d}\hat{\mathrm{x}}}+\int\frac{\mathrm{d}\hat{\mathrm{x}}}{\mathrm{F}^{-}}\left(\frac{\mathrm{d}\mathrm{F}^{-}}{\mathrm{d}\hat{\mathrm{x}}}\right)^{2}.
\end{equation}
Using equation (\ref{y_<_resenje}) we arrive at the following equations:
\begin{align}
    \frac{1}{\mathrm{F}^{-}}&=\frac{\mathrm{d}\mathcal{J}}{\mathrm{d}\hat{\mathrm{x}}}-\frac{\varepsilon}{8(\lambda a)^{2}}\left(\frac{\hat{2\mathrm{x}}}{\hat{\mathrm{x}}-1}\ln{\hat{\mathrm{x}}}+\frac{3}{\hat{\mathrm{x}}}+7\right),\label{1/F^-}\\
    \frac{\mathrm{d}\mathrm{F}^{-}}{\mathrm{d}\hat{\mathrm{x}}}&=-\mathrm{F}^{-2}\left[\frac{\mathrm{d}^{2}\mathcal{J}}{\mathrm{d}\hat{\mathrm{x}}^{2}}+\frac{\varepsilon}{8(\lambda a)^{2}}\left(\frac{3}{\hat{\mathrm{x}}^{2}}+2\frac{\ln{\hat{\mathrm{x}}}-1}{\hat{\mathrm{x}}-1}\right)\right].\label{1/F^-_izovd}
\end{align}
Since terms of order $\mathcal{O}(\varepsilon)$ behave well when $\hat{\mathrm{x}}\to1$, they can be neglected. After preforming substitution to $\mathrm{z}$ (defined in equation (\ref{integral_z})), upper limit of integration becomes 1, while the lower limit of integration changes to $\mathrm{z}_{E}=\mathrm{z}_{1}^{+}+\left(\frac{\sqrt{\varepsilon}}{\lambda a}\right)^{3}\frac{\delta_{E}}{4}$, where $\mathrm{z}_{1}^{+}$ is given by equation (\ref{z1+}). Now, the integral is performed over a narrow interval around the point $\mathrm{z}=1$ which allows us to substitute $\mathrm{z}=1$ wherever it is well defined within the integrand function. The integral becomes:
\begin{equation}
    \int_{\infty}^{\hat{\mathrm{x}}_{E}}\frac{\mathrm{d}\hat{\mathrm{x}}}{\mathrm{F}^{-}}\left(\frac{\mathrm{d}\mathrm{F}^{-}}{\mathrm{d}\hat{\mathrm{x}}}\right)^{2}=-\frac{4(\lambda a)^{4}}{\varepsilon^{2}}\int_{\mathrm{z}_{E}}^{1}\frac{(1-\mathrm{z})^{4}\mathrm{dz}}{(\mathrm{z}-\mathrm{z_{1}^{+}})(\mathrm{z}-\mathrm{z}_{2}^{+})^{2}}.
\end{equation}
Direct calculation up to $\mathcal{O}(1)$ order leads to the following expression for the function $\mathrm{F}(\hat{\mathrm{x}}_{E})$:
\begin{equation}
    \mathrm{F}(\hat{\mathrm{x}}_{E})=-\frac{4(\lambda a)^{2}}{\varepsilon}+\frac{8(\lambda a)}{\sqrt{\varepsilon}}+\ln{\frac{\varepsilon}{4(\lambda a)^{2}}+\frac{3}{2}}.\label{F_x_hat_E}
\end{equation}
The radiated energy (\ref{E_rad}) is given by:
\begin{widetext}
    \begin{equation}
        E_{rad}=Mc^{2}\left[1-\sqrt{\frac{\hbar c}{12\pi M^{2}G_{N}^{(4)}}}-\frac{\hbar c}{192\pi M^{2}G_{N}^{(4)}}\ln{\frac{\hbar c}{192\pi M^{2}G_{N}^{(4)}}}-\frac{3\hbar c}{384\pi M^{2}G_{N}^{(4)}}+o\left(\frac{\hbar c}{M^{2}G_{N}^{(4)}}\right)\right],\label{izraceno}
    \end{equation}
\end{widetext}
where all of the constants have been returned. Equation (\ref{izraceno}) implies that during the evaporation process, almost the entire initial mass of the black hole evaporates, up to a small portion proportional to $\sqrt{\varepsilon}$. This result coincides with the results within the RST and BPP models of dilaton gravity \cite{RST1,BPP}.
\subsection{The end-state geometry}
Previously, we have calculated the end-point of the evaporation. Now we may ask: "what is the metric of space-time after the black hole evaporates?" In the perturbative framework, it may be hard to answer this question since it breaks near the end-point of the evaporation. However, we may try to find some useful information.
\par First, let us rewrite the equations for the metric (\ref{rho_resenje_>}) and the dilaton field (\ref{x_resenje_>}) in a more suitable form. We begin by calculating $\mathrm{F}(\hat{\mathrm{x}})$ defined by formula (\ref{F_def}) in terms of $\mathrm{x}$ and $\delta$. Using the same idea as in theorem \ref{T1}.
\begin{equation}
    \mathrm{F}(\hat{\mathrm{x}})=\mathrm{F}(\mathrm{x})+\sum_{n=1}^{\infty}\frac{(\delta-1)^{n}}{n!}\frac{\partial^{n}\mathrm{F}(\hat{\mathrm{x}})}{\partial\delta^{n}}\bigg{|}_{\delta=1}.\label{iii1}
\end{equation}
Mirroring the method used in the theorem \ref{T7}, we find the following expression:
\begin{equation}
    \frac{\partial^{n}\mathrm{F}(\hat{\mathrm{x}})}{\partial\delta^{n}}\bigg{|}_{\delta=1}=e^{n\mathrm{x}}(\mathrm{x}-1)^{n}\left(\frac{e^{-\mathrm{x}}}{\mathrm{x}}\frac{\de}{\de\mathrm{x}}\right)^{n-1}\frac{e^{-\mathrm{x}}}{\mathrm{x}}\frac{\de\mathrm{F}(\mathrm{x})}{\de\mathrm{x}}.\label{iii2}
\end{equation}
On the other hand, one can demonstrate, by induction, that the following expression is satisfied:
\begin{equation}
    \frac{\partial^{n}\mathrm{F}(\hat{\mathrm{x}})}{\partial\delta^{n}}\bigg{|}_{\delta=1}=(-1)^{n}\left[\frac{1}{2}\frac{(\mathrm{x}-1)^{2}}{\mathrm{x}}\frac{\de\mathrm{S}^{>}_{n}(\mathrm{x})}{\de\mathrm{x}}-(n-1)!\right].\label{iii3}
\end{equation}
The inductive step reduces to the recurrence formula for the $\mathrm{S}^{>}_{n}(\mathrm{x})$ functions (\ref{S_n_rekurentna}). Upon replacing this expression for $\frac{\partial^{n}\mathrm{F}(\hat{\mathrm{x}})}{\partial\delta^{n}}|_{\delta=1}$ into expansion (\ref{iii1}), we get the following equation:
\begin{equation}
    \mathrm{F}(\hat{\mathrm{x}})=\mathrm{F}(\mathrm{x})+\ln{\delta}+\frac{(\mathrm{x}-1)^{2}}{2\mathrm{x}}\sum_{n=1}^{\infty}\frac{(1-\delta)^{2}}{n!}\frac{\de\mathrm{S}^{>}_{n}(\mathrm{x})}{\de\mathrm{x}}.\label{F(x,delta)}
\end{equation}
Eliminating the $\ln{\delta}$ term in the formulas for the metric (\ref{rho_resenje_>}) and the tortoise coordinate $\hat{\sigma}$ (\ref{x_resenje_>}), with the help of equation (\ref{F(x,delta)}), we arrive at the following.
\begin{widetext}
\begin{align}
    &\mathrm{F}^{+}\mathrm{F}^{-}e^{2\rho}=\frac{1+\frac{\varepsilon}{4(\lambda a)^{2}}\mathrm{F}(\hat{\mathrm{x}})}{\mathrm{x}}-1+\frac{\varepsilon}{4(\lambda a)^{2}}\left[\frac{\mathrm{x}-2}{\mathrm{x}}\ln{\frac{\mathrm{x}-1}{\mathrm{x}}}+\frac{1}{\mathrm{x}}-\frac{3}{2}\frac{1}{\mathrm{x}^{2}}+\frac{2}{\mathrm{x}^{3}}\right.\nonumber\\
    &\hspace{5cm}\left.+\frac{\mathrm{x}-1}{2\mathrm{x}}\sum_{n=1}^{\infty}\frac{(1-\delta)^{n}}{n!}\bigg{(}\mathrm{S}^{>}_{n+1}(\mathrm{x})-n\mathrm{S}^{>}_{n}(\mathrm{x})+z_{n+1}-nz_{n}\bigg{)}\right],\label{rho_prilagodjena}\\
    &\frac{\hat{\sigma}}{a}=\mathrm{x}+\left(1+\frac{\varepsilon}{4(\lambda a)^{2}}\mathrm{F}(\hat{\mathrm{x}})\right)\ln{\left(\frac{\mathrm{x}}{1+\frac{\varepsilon}{4(\lambda a)^{2}}\mathrm{F}(\hat{\mathrm{x}})}-1\right)}-\frac{\varepsilon}{8(\lambda a)^{2}}\left[\mathrm{S}^{>}_{0}(\mathrm{x})-\frac{\pi^{2}}{3}-2-2\left(\frac{\mathrm{x}}{\mathrm{x}-1}-\ln{(\mathrm{x}-1)}\right)\mathrm{F}(\mathrm{x})\right.\nonumber\\
    &\hspace{4cm}\left.-9\ln{\delta}-\sum_{n=1}^{\infty}\frac{(1-\delta)^{n}}{n!}\left(\mathrm{S}^{>}_{n}(\mathrm{x})+z_{n}+\frac{(\mathrm{x}-1)^{2}}{\mathrm{x}}\left(\frac{\mathrm{x}}{\mathrm{x}-1}-\ln{(\mathrm{x}-1)}\right)\frac{\de\mathrm{S}^{>}_{n}(\mathrm{x})}{\de\mathrm{x}}\right)\right].\label{x_prilagodjena}
\end{align}
\end{widetext}
In equations (\ref{rho_prilagodjena}) and (\ref{x_prilagodjena}) the term $\varphi/\lambda a$ has been replaced by the term $\varphi/\left(\lambda a+\frac{\varepsilon}{4\lambda a}\mathrm{F}(\hat{\mathrm{x}})\right)$, in zeroth order in $\varepsilon$. Let us define a new, running, Schwarzschild radius, dependent on the $\sigma^{-}$ coordinate:
\begin{equation}
    \lambda\hat{a}(\hat{\mathrm{x}})=\lambda a+\frac{\varepsilon}{4\lambda a}\mathrm{F}(\hat{\mathrm{x}}).\label{a_tilda}
\end{equation}
From the perspective of the first-order terms in $\varepsilon$, $\lambda\hat{a}$ is equal to $\lambda a$, so we can exchange $\lambda a$ in the equations for the metric and the dilaton field for $\lambda\hat{a}$.
\par To understand the final state of evaporation, we first must examine what happens along the $\hat{\mathrm{x}}=\hat{\mathrm{x}}_{E}$ hypersurface when $\delta<\delta_{E}$. Along this hypersurface $\lambda\hat{a}(\hat{\mathrm{x}}_{E})$ is constant. Since $\mathrm{F}(\hat{\mathrm{x}}_{E})\sim\ln{\delta_{E}}$, equation (\ref{F_x_hat_E}) implies $\lambda\hat{a}=\mathcal{O}(\sqrt{\varepsilon})$, which within the first-order of perturbation theory gives $\lambda\hat{a}=0$. Taking this limit is akin to taking the $\mathrm{x}\to\infty$ limit discussed in the previous section. The form of $\mathrm{S}^{>}_{n}(\mathrm{x})$ from equation (\ref{S_n_uslov}) implies that:
\begin{equation}
    \mathrm{S}^{>}_{n}(\mathrm{x})=(n-1)!\mathrm{S}^{>}_{1}(\mathrm{x})-z_{n}+\mathcal{O}((\lambda\hat{a})^{4}).\label{razvojj}
\end{equation}
Taking the limit $\lambda\hat{a}\to0$ in the first-order term of the metric equation (\ref{rho_prilagodjena}) reduces that term to zero, since the sum behaves as $\mathcal{O}((\lambda\hat{a})^{2})$ and the $\delta$ independent term is $\mathcal{O}(\lambda\hat{a})$. Next, we take the limit $\lambda\hat{a}\to0$ of the first-order term in $\varepsilon$ of the dilaton equation (\ref{x_prilagodjena}). In this limit, the $\delta$ dependent term becomes:
\begin{equation}
    \frac{\lambda\hat{a}\ln{\delta}}{(\lambda\hat{a})^{2}}\left[\mathrm{S}^{>}_{1}+\frac{(\mathrm{x}-1)^{2}}{\mathrm{x}}\left(\frac{\mathrm{x}}{\mathrm{x}-1}-\ln{(\mathrm{x}-1)}\right)\frac{\de\mathrm{S}^{>}_{1}}{\de\mathrm{x}}-9\right].\label{jj1}
\end{equation}
This term vanishes as can be seen by expanding in terms of $\lambda\hat{a}$ and then taking the $\lambda\hat{a}\to0$ limit. An important observation to make is the fact that $\ln{\delta}$ behaves as $1/\lambda\hat{a}$, which means that $\lambda\hat{a}\ln{\delta}$ is constant. The $\delta$ independent terms vanish almost in the same way that they vanish when taking the $\mathrm{x}\to\infty$ limit. By taking this limit, we get the following results:
\begin{align}
    &\mathrm{F}^{+}\mathrm{F}^{-}e^{2\rho}=\frac{\lambda\hat{a}_{E}}{\sqrt{\varphi^{2}-\varepsilon}}-1,\label{Metrika_E}\\
    &\varphi+\lambda\hat{a}_{E}\ln{\left(\frac{\varphi}{\lambda\hat{a}_{E}}\right)}=\lambda\hat{\sigma}.\label{dilaton_E}
\end{align}
Since $\lambda\hat{a}_{E}=\lambda\hat{a}(\hat{\mathrm{x}}_{E})$ is of order $\mathcal{O}(\sqrt{\varepsilon})$, equations (\ref{Metrika_E}) and (\ref{dilaton_E}) define a quantum-corrected Minkowski space-time. Comparing to BPP or RST models of dilaton gravity \cite{RST1,RST2,BPP}, one could expect a shockwave at the end-point of the evaporation known as a thunderpop. To check if this occurs in the DREH model, one needs to continuously connect regions II and III (Figure \ref{fig2}) along the $\sigma^{-}=\sigma^{-}_{E}$ hypersurface by integrating the "$--$" equation from (\ref{ppmm_q}):
\begin{align}
    e^{-2\rho}\partial_{-}\varphi_{>}-e^{-2\rho}\partial_{-}\varphi_{<}&=\frac{\lambda}{2\mathrm{F}^{-}}\lambda a_{0}\varphi^{-1}e^{-2\rho}\label{--qq}\\
    &-\varepsilon(\partial_{-}\rho_{>}-\partial_{-}\rho_{<})\varphi^{-1}e^{-2\rho},\nonumber
\end{align}
where $\lambda a_{0}$ represents the mass of the shockwave at the end-point of the evaporation, "$<$" represents the part of space-time where $\sigma^{-}<\sigma^{-}_{E}$ and "$>$" represents the part of space-time where $\sigma^{-}>\sigma^{-}_{E}$. The derivatives appearing in equation (\ref{--qq}) can be easily calculated using equations (\ref{rho_resenje_>}) and (\ref{x_resenje_>}) in a $\sigma^{+}<\sigma^{+}_{0}$ region of space-time. The results are given by:
\begin{align}
    \partial_{-}\varphi_{<}&=-\frac{\lambda}{2\mathrm{F}^{-}}\bigg{\{}1-\frac{\lambda\hat{a}(\hat{\mathrm{x}})}{\sqrt{\varphi^{2}-\varepsilon}}\bigg{[}1\nonumber\\
    &\hspace{0.5cm}-\frac{\varepsilon}{4(\lambda a)^{2}}\left((2-\mathrm{x})\mathrm{F}(\mathrm{x})-\frac{2}{\mathrm{x}}+\frac{3}{\mathrm{x}^{2}}\right)\bigg{]}\bigg{\}},\label{ll1}\\
    \partial_{-}\rho_{<}&=-\frac{1}{2}\partial_{-}\ln{\mathrm{F}^{-}}-\frac{\lambda}{4\mathrm{F}^{-}}\frac{\lambda\hat{a}(\hat{\mathrm{x}})}{\varphi^{2}}\left[1-\frac{\varepsilon}{2(\lambda a)^{2}}\mathrm{F}(\mathrm{x})\right].\label{ll2}
\end{align}
Analogues to the case of equations (\ref{x_prilagodjena}-\ref{rho_prilagodjena}) when $\hat{\mathrm{x}}=\hat{\mathrm{x}}_{E}$ we take the limit $\lambda\hat{a}\to0$ within the first-order of the perturbation theory. Now, equations (\ref{ll1}-\ref{ll2}) reduce to:
\begin{align}
    \partial_{-}\varphi_{<}&=-\frac{\lambda}{2\mathrm{F}^{-}}\left(1-\frac{\lambda\hat{a}_{E}}{\sqrt{\varphi^{2}-\varepsilon}}\right),\label{ll1_lim}\\
    \partial_{-}\rho_{<}&=-\frac{1}{2}\partial_{-}\ln{\mathrm{F}^{-}}-\frac{\lambda}{4\mathrm{F}^{-}}\frac{\lambda\hat{a}_{E}}{\varphi^{2}}.\label{ll2_lim}
\end{align}
Let us check if the final state of the evaporation is a Minkowski vacuum. Similarly to the case of the initial gravitational collapse, we allow one final coordinate change $\hat{\sigma}^{+}\mapsto\hat{\sigma}^{+}_{f}$. In these new coordinates, the metric and the dilaton field take the following form:
\begin{equation}
    \mathrm{F}^{+}_{>}\mathrm{F}^{-}e^{2\rho}=-1\hspace{5mm}\land\hspace{5mm}\varphi=\frac{\lambda}{2}(\hat{\sigma}^{+}_{f}-\hat{\sigma}^{-}).\label{metrika_dilaton_fin}
\end{equation}
Now we simultaneously solve equations (\ref{pm_q}) and (\ref{--qq}) for $\partial_{-}\varphi_{<}$ and $\partial_{-}\rho_{<}$, the same way as in the case of initial collapse (\ref{zs1}-\ref{rho_izvod_delta=1}) using expressions (\ref{metrika_dilaton_fin}) for the metric and the dilaton field. The solution is given by:
\begin{align}
    \partial_{-}\varphi_{<}&=-\frac{\lambda}{2\mathrm{F}^{-}}\left(1-\frac{Q'}{\sqrt{\varphi^{2}-\varepsilon}}\right),\label{ll1_exp}\\
    \partial_{-}\rho_{<}&=-\frac{1}{2}\partial_{-}\ln{\mathrm{F}^{-}}-\frac{\lambda}{4\mathrm{F}^{-}}\left(\frac{Q'\varphi}{\sqrt{\varphi^{2}-\varepsilon}}+\lambda a_{0}\right).\label{ll2_exp}
\end{align}
Equating equations (\ref{ll1_lim}) and (\ref{ll1_exp}) gives $Q'=\lambda\hat{a}_{E}$; while the comparison between equations (\ref{ll2_lim}) and (\ref{ll2_exp}) results in $Q'+\lambda a_{0}=0$. This successful comparison indicates that, at least to the first order in perturbation theory, the final geometry corresponds to that of the Minkowski vacuum.
\par Balance of energy states that the difference between the energies of space-time at the beginning of the evaporation, given by $\lambda a$, and space-time after the black hole evaporates, given by $\lambda a_{f}$, together with the energy of the shockwave, should be equal to the radiated energy:
\begin{equation}
    \lambda a_{rad}=\lambda a_{0}+\lambda(a-a_{f}).\label{Bilans_energije}
\end{equation}
Upon examining equation (\ref{Bilans_energije}) alongside equation (\ref{E_rad}) concerning the radiated energy, and equation (\ref{a_tilda}) where $Q'=\lambda\hat{a}_{E}$, we find consistency with the condition $a_{f}=0$ (Minkowski vacuum), given that $Q'+\lambda a_{0}=0$. This also implies that the mass of the shockwave $\lambda a_{0}$ is exactly the difference between the initial mass of the black hole and the radiated energy.
\par Finally we need to check if the metric and the dilaton field can be continuously connected along the $\hat{\mathrm{x}}=\hat{\mathrm{x}}_{E}$ hypersurface by the change in the $\hat{\sigma}^{+}$ coordinate. The continuity of the metric (\ref{Metrika_E}) implies:
\begin{equation}
    \mathrm{F}^{+}_{>}=\mathrm{F}^{+}_{<}\frac{\sqrt{\varphi^{2}-\varepsilon}}{\sqrt{\varphi^{2}-\varepsilon}-\lambda\hat{a}_{E}}.\label{F^+_>_fin}
\end{equation}
Using equation (\ref{metrika_dilaton_fin}) for the dilaton field, direct integration of equation (\ref{F^+_>_fin}) yields:
\begin{equation}
    \frac{\lambda}{2}\hat{\sigma}^{+}=\frac{\lambda}{2}\hat{\sigma}^{+}_{f}+\lambda\hat{a}_{E}\ln{\left(\frac{\lambda}{2}\frac{\hat{\sigma}^{+}_{f}-\hat{\sigma}^{-}_{E}}{\lambda\hat{a}_{E}}\right)},\label{sigma^+_fin_deff}
\end{equation}
after expanding with respect to $\varepsilon$. Note that equation (\ref{sigma^+_fin_deff}) exactly reproduces the dilaton equation (\ref{dilaton_E}), which implies the continuity of the dilaton field. The boundary of space-time in the final part of space-time is defined by $\varphi=0$, which implies:
\begin{equation}
    \frac{\lambda}{2}\hat{\sigma}^{+}=\frac{\lambda}{2}\hat{\sigma}^{-}+\lambda\hat{a}_{E}\ln{\left(\frac{\lambda}{2}\frac{\hat{\sigma}^{-}-\hat{\sigma}^{-}_{E}}{\lambda\hat{a}_{E}}\right)}
\end{equation}
\par Finally, we can conclude that the end-state of the radiation is $\ket{0,\hat{\sigma}_{f}}$, the vacuum state in $\hat{\sigma}_{f}$ coordinates, since $t_{\pm}(\hat{\sigma}^{\pm}_{f})=0$. This is implied from equations (\ref{ppmm_q}) since the metric is flat in part III of space-time (Figure \ref{fig2}). In addition, our analysis has shown the appearance of the thunderpop at the end-point of evaporation in the same way as in the case of the BPP and RST models of dilaton gravity \cite{RST1,RST2,BPP}.
\par It is important to note that we cannot be sure of the exact value of the $\varepsilon$ correction to the metric within the $\lambda\hat{a}_{E}$ term, since at this point the perturbation theory breaks and higher-order terms may influence the first-order terms. It is possible that there is no correction at all. To check this, a non-perturbative solution for the end-point of evaporation is needed, or to try to solve the equations of motion (\ref{ppmm_q})-(\ref{chi2_q}) numerically.
\section{Conclusion} 

In this paper, we studied a model of two-dimensional dilaton gravity related to the four-dimensional Einstein-Hilbert action by dimensional reduction. The gravitational collapse scenario has been constructed within this model. Initially, we explored the classical scenario, after which we analyzed the back-reaction of quantum matter fields on the geometry by incorporating the Polyakov-Liouville action into the classical DREH framework. At past time-infinity the metric was that of Minkowski space-time. Then, collapsing matter was formed as an infalling shockwave, which created the singularity when it touched the boundary of space-time. The metric of space-time with an evaporating black hole has been calculated within a perturbative framework. It has been found that, due to quantum corrections, the singularity changes position form $r=0$ to $r=\sqrt{\frac{\hbar G^{(4)}_{N}}{12\pi c^{3}}}$, which is of Planck length order of magnitude. The end-point of evaporation has been identified and it coincides with the thermodynamic result. It has been shown that, at this point, the perturbative approach breaks, but still there have been insights that suggest that the final state of the evaporation is the Minkowski vacuum.
\par The groundwork for this project was laid in our previous work \cite{DREH1}, in which we explored the case of an eternal black hole scenario and successfully derived the corresponding Page curve. The following stage focuses on generating the Page curve for a black hole undergoing evaporation \cite{DREH3}, utilizing the solution presented in the current study. It is important to note that there are relatively few instances where a Page curve has been reproduced for an evaporating black hole, particularly in the context of a Schwarzschild-like solution.

{\bf Acknowledgments}
The authors acknowledge the funding provided by the Faculty of Physics of the University of Belgrade, through a grant number 451-03-47/2023-01/200162 from the Ministry of Education, Science, and Technological Development of the Republic of Serbia. Also S.\Dj. is very grateful to Ana \Dj or\dj evi\'c (University of Belgrade) and Hana Schiff Toman (UC Irvine) for many useful insights throughout the writing of this paper, and once again Ana for the creation of the figures for this paper. 
\bibliography{reference}

\begin{thebibliography}{32}%
\makeatletter
\providecommand \@ifxundefined [1]{%
 \@ifx{#1\undefined}
}%
\providecommand \@ifnum [1]{%
 \ifnum #1\expandafter \@firstoftwo
 \else \expandafter \@secondoftwo
 \fi
}%
\providecommand \@ifx [1]{%
 \ifx #1\expandafter \@firstoftwo
 \else \expandafter \@secondoftwo
 \fi
}%
\providecommand \natexlab [1]{#1}%
\providecommand \enquote  [1]{``#1''}%
\providecommand \bibnamefont  [1]{#1}%
\providecommand \bibfnamefont [1]{#1}%
\providecommand \citenamefont [1]{#1}%
\providecommand \href@noop [0]{\@secondoftwo}%
\providecommand \href [0]{\begingroup \@sanitize@url \@href}%
\providecommand \@href[1]{\@@startlink{#1}\@@href}%
\providecommand \@@href[1]{\endgroup#1\@@endlink}%
\providecommand \@sanitize@url [0]{\catcode `\\12\catcode `\$12\catcode
  `\&12\catcode `\#12\catcode `\^12\catcode `\_12\catcode `\%12\relax}%
\providecommand \@@startlink[1]{}%
\providecommand \@@endlink[0]{}%
\providecommand \url  [0]{\begingroup\@sanitize@url \@url }%
\providecommand \@url [1]{\endgroup\@href {#1}{\urlprefix }}%
\providecommand \urlprefix  [0]{URL }%
\providecommand \Eprint [0]{\href }%
\providecommand \doibase [0]{https://doi.org/}%
\providecommand \selectlanguage [0]{\@gobble}%
\providecommand \bibinfo  [0]{\@secondoftwo}%
\providecommand \bibfield  [0]{\@secondoftwo}%
\providecommand \translation [1]{[#1]}%
\providecommand \BibitemOpen [0]{}%
\providecommand \bibitemStop [0]{}%
\providecommand \bibitemNoStop [0]{.\EOS\space}%
\providecommand \EOS [0]{\spacefactor3000\relax}%
\providecommand \BibitemShut  [1]{\csname bibitem#1\endcsname}%
\let\auto@bib@innerbib\@empty
\bibitem [{\citenamefont {Teitelboim}(1983)}]{JT1}%
  \BibitemOpen
  \bibfield  {author} {\bibinfo {author} {\bibfnamefont {C.}~\bibnamefont
  {Teitelboim}},\ }\bibfield  {title} {\bibinfo {title} {{Gravitation and
  Hamiltonian Structure in Two Space-Time Dimensions}},\ }\href
  {https://doi.org/10.1016/0370-2693(83)90012-6} {\bibfield  {journal}
  {\bibinfo  {journal} {Phys. Lett. B}\ }\textbf {\bibinfo {volume} {126}},\
  \bibinfo {pages} {41} (\bibinfo {year} {1983})}\BibitemShut {NoStop}%
\bibitem [{\citenamefont {Jackiw}(1985)}]{JT2}%
  \BibitemOpen
  \bibfield  {author} {\bibinfo {author} {\bibfnamefont {R.}~\bibnamefont
  {Jackiw}},\ }\bibfield  {title} {\bibinfo {title} {{Lower Dimensional
  Gravity}},\ }\href {https://doi.org/10.1016/0550-3213(85)90448-1} {\bibfield
  {journal} {\bibinfo  {journal} {Nucl. Phys. B}\ }\textbf {\bibinfo {volume}
  {252}},\ \bibinfo {pages} {343} (\bibinfo {year} {1985})}\BibitemShut
  {NoStop}%
\bibitem [{\citenamefont {Callan}\ \emph {et~al.}(1992)\citenamefont {Callan},
  \citenamefont {Giddings}, \citenamefont {Harvey},\ and\ \citenamefont
  {Strominger}}]{CGHS}%
  \BibitemOpen
  \bibfield  {author} {\bibinfo {author} {\bibfnamefont {C.~G.}\ \bibnamefont
  {Callan}}, \bibinfo {author} {\bibfnamefont {S.~B.}\ \bibnamefont
  {Giddings}}, \bibinfo {author} {\bibfnamefont {J.~A.}\ \bibnamefont
  {Harvey}},\ and\ \bibinfo {author} {\bibfnamefont {A.}~\bibnamefont
  {Strominger}},\ }\bibfield  {title} {\bibinfo {title} {Evanescent black
  holes},\ }\href {https://doi.org/10.1103/physrevd.45.r1005} {\bibfield
  {journal} {\bibinfo  {journal} {Physical Review D}\ }\textbf {\bibinfo
  {volume} {45}},\ \bibinfo {pages} {R1005–R1009} (\bibinfo {year}
  {1992})}\BibitemShut {NoStop}%
\bibitem [{\citenamefont {Russo}\ \emph {et~al.}(1992)\citenamefont {Russo},
  \citenamefont {Susskind},\ and\ \citenamefont {Thorlacius}}]{RST1}%
  \BibitemOpen
  \bibfield  {author} {\bibinfo {author} {\bibfnamefont {J.~G.}\ \bibnamefont
  {Russo}}, \bibinfo {author} {\bibfnamefont {L.}~\bibnamefont {Susskind}},\
  and\ \bibinfo {author} {\bibfnamefont {L.}~\bibnamefont {Thorlacius}},\
  }\bibfield  {title} {\bibinfo {title} {End point of hawking radiation},\
  }\href {https://doi.org/10.1103/PhysRevD.46.3444} {\bibfield  {journal}
  {\bibinfo  {journal} {Phys. Rev. D}\ }\textbf {\bibinfo {volume} {46}},\
  \bibinfo {pages} {3444} (\bibinfo {year} {1992})}\BibitemShut {NoStop}%
\bibitem [{\citenamefont {Russo}\ \emph {et~al.}(1993)\citenamefont {Russo},
  \citenamefont {Susskind},\ and\ \citenamefont {Thorlacius}}]{RST2}%
  \BibitemOpen
  \bibfield  {author} {\bibinfo {author} {\bibfnamefont {J.~G.}\ \bibnamefont
  {Russo}}, \bibinfo {author} {\bibfnamefont {L.}~\bibnamefont {Susskind}},\
  and\ \bibinfo {author} {\bibfnamefont {L.}~\bibnamefont {Thorlacius}},\
  }\bibfield  {title} {\bibinfo {title} {Cosmic censorship in two-dimensional
  gravity},\ }\href {https://doi.org/10.1103/PhysRevD.47.533} {\bibfield
  {journal} {\bibinfo  {journal} {Phys. Rev. D}\ }\textbf {\bibinfo {volume}
  {47}},\ \bibinfo {pages} {533} (\bibinfo {year} {1993})}\BibitemShut
  {NoStop}%
\bibitem [{\citenamefont {Bose}\ \emph {et~al.}(1995)\citenamefont {Bose},
  \citenamefont {Parker},\ and\ \citenamefont {Peleg}}]{BPP}%
  \BibitemOpen
  \bibfield  {author} {\bibinfo {author} {\bibfnamefont {S.}~\bibnamefont
  {Bose}}, \bibinfo {author} {\bibfnamefont {L.}~\bibnamefont {Parker}},\ and\
  \bibinfo {author} {\bibfnamefont {Y.}~\bibnamefont {Peleg}},\ }\bibfield
  {title} {\bibinfo {title} {Semi-infinite throat as the end-state geometry of
  two-dimensional black hole evaporation},\ }\href
  {https://doi.org/10.1103/physrevd.52.3512} {\bibfield  {journal} {\bibinfo
  {journal} {Physical Review D}\ }\textbf {\bibinfo {volume} {52}},\ \bibinfo
  {pages} {3512–3517} (\bibinfo {year} {1995})}\BibitemShut {NoStop}%
\bibitem [{\citenamefont {Fiola}\ \emph {et~al.}(1994)\citenamefont {Fiola},
  \citenamefont {Preskill}, \citenamefont {Strominger},\ and\ \citenamefont
  {Trivedi}}]{QCGHS}%
  \BibitemOpen
  \bibfield  {author} {\bibinfo {author} {\bibfnamefont {T.~M.}\ \bibnamefont
  {Fiola}}, \bibinfo {author} {\bibfnamefont {J.}~\bibnamefont {Preskill}},
  \bibinfo {author} {\bibfnamefont {A.}~\bibnamefont {Strominger}},\ and\
  \bibinfo {author} {\bibfnamefont {S.~P.}\ \bibnamefont {Trivedi}},\
  }\bibfield  {title} {\bibinfo {title} {Black hole thermodynamics and
  information loss in two dimensions},\ }\href
  {https://doi.org/10.1103/physrevd.50.3987} {\bibfield  {journal} {\bibinfo
  {journal} {Physical Review D}\ }\textbf {\bibinfo {volume} {50}},\ \bibinfo
  {pages} {3987–4014} (\bibinfo {year} {1994})}\BibitemShut {NoStop}%
\bibitem [{\citenamefont {Djordjevi\'c}\ \emph {et~al.}(2022)\citenamefont
  {Djordjevi\'c}, \citenamefont {Go\v{c}anin}, \citenamefont {Go\v{c}anin},\
  and\ \citenamefont {Radovanovi\'c}}]{DREH1}%
  \BibitemOpen
  \bibfield  {author} {\bibinfo {author} {\bibfnamefont {S.}~\bibnamefont
  {Djordjevi\'c}}, \bibinfo {author} {\bibfnamefont {A.}~\bibnamefont
  {Go\v{c}anin}}, \bibinfo {author} {\bibfnamefont {D.}~\bibnamefont
  {Go\v{c}anin}},\ and\ \bibinfo {author} {\bibfnamefont {V.}~\bibnamefont
  {Radovanovi\'c}},\ }\bibfield  {title} {\bibinfo {title} {{Page curve for an
  eternal Schwarzschild black hole in a dimensionally reduced model of dilaton
  gravity}},\ }\href {https://doi.org/10.1103/PhysRevD.106.105015} {\bibfield
  {journal} {\bibinfo  {journal} {Phys. Rev. D}\ }\textbf {\bibinfo {volume}
  {106}},\ \bibinfo {pages} {105015} (\bibinfo {year} {2022})},\ \Eprint
  {https://arxiv.org/abs/2207.07409} {arXiv:2207.07409 [hep-th]} \BibitemShut
  {NoStop}%
\bibitem [{\citenamefont {Burić}\ and\ \citenamefont
  {Radovanović}(2001)}]{Dimred}%
  \BibitemOpen
  \bibfield  {author} {\bibinfo {author} {\bibfnamefont {M.}~\bibnamefont
  {Burić}}\ and\ \bibinfo {author} {\bibfnamefont {V.}~\bibnamefont
  {Radovanović}},\ }\bibfield  {title} {\bibinfo {title} {Quantum corrections
  for an (anti)evaporating black hole},\ }\bibfield  {journal} {\bibinfo
  {journal} {Physical Review D}\ }\textbf {\bibinfo {volume} {63}},\ \href
  {https://doi.org/10.1103/physrevd.63.044020} {10.1103/physrevd.63.044020}
  (\bibinfo {year} {2001})\BibitemShut {NoStop}%
\bibitem [{\citenamefont {Frolov}\ \emph {et~al.}(1996)\citenamefont {Frolov},
  \citenamefont {Israel},\ and\ \citenamefont {Solodukhin}}]{Frolov}%
  \BibitemOpen
  \bibfield  {author} {\bibinfo {author} {\bibfnamefont {V.~P.}\ \bibnamefont
  {Frolov}}, \bibinfo {author} {\bibfnamefont {W.}~\bibnamefont {Israel}},\
  and\ \bibinfo {author} {\bibfnamefont {S.~N.}\ \bibnamefont {Solodukhin}},\
  }\bibfield  {title} {\bibinfo {title} {{On one loop quantum corrections to
  the thermodynamics of charged black holes}},\ }\href
  {https://doi.org/10.1103/PhysRevD.54.2732} {\bibfield  {journal} {\bibinfo
  {journal} {Phys. Rev. D}\ }\textbf {\bibinfo {volume} {54}},\ \bibinfo
  {pages} {2732} (\bibinfo {year} {1996})},\ \Eprint
  {https://arxiv.org/abs/hep-th/9602105} {arXiv:hep-th/9602105} \BibitemShut
  {NoStop}%
\bibitem [{\citenamefont {Buric}\ \emph {et~al.}(1999)\citenamefont {Buric},
  \citenamefont {Radovanovic},\ and\ \citenamefont {Mikovic}}]{MVM}%
  \BibitemOpen
  \bibfield  {author} {\bibinfo {author} {\bibfnamefont {M.}~\bibnamefont
  {Buric}}, \bibinfo {author} {\bibfnamefont {V.}~\bibnamefont {Radovanovic}},\
  and\ \bibinfo {author} {\bibfnamefont {A.~R.}\ \bibnamefont {Mikovic}},\
  }\bibfield  {title} {\bibinfo {title} {{One loop correction for Schwarzschild
  black hole via 2-D dilaton gravity}},\ }\href
  {https://doi.org/10.1103/PhysRevD.59.084002} {\bibfield  {journal} {\bibinfo
  {journal} {Phys. Rev. D}\ }\textbf {\bibinfo {volume} {59}},\ \bibinfo
  {pages} {084002} (\bibinfo {year} {1999})},\ \Eprint
  {https://arxiv.org/abs/gr-qc/9804083} {arXiv:gr-qc/9804083} \BibitemShut
  {NoStop}%
\bibitem [{\citenamefont {Buric}\ and\ \citenamefont {Radovanovic}(1999)}]{MV}%
  \BibitemOpen
  \bibfield  {author} {\bibinfo {author} {\bibfnamefont {M.}~\bibnamefont
  {Buric}}\ and\ \bibinfo {author} {\bibfnamefont {V.}~\bibnamefont
  {Radovanovic}},\ }\bibfield  {title} {\bibinfo {title} {Quantum corrections
  for the reissner-nordström charged black hole},\ }\href
  {https://doi.org/10.1088/0264-9381/16/12/311} {\bibfield  {journal} {\bibinfo
   {journal} {Classical and Quantum Gravity}\ }\textbf {\bibinfo {volume}
  {16}},\ \bibinfo {pages} {3937} (\bibinfo {year} {1999})}\BibitemShut
  {NoStop}%
\bibitem [{\citenamefont {Fabbri}\ and\ \citenamefont
  {Navarro-Salas}(2005)}]{Fabri}%
  \BibitemOpen
  \bibfield  {author} {\bibinfo {author} {\bibfnamefont {A.}~\bibnamefont
  {Fabbri}}\ and\ \bibinfo {author} {\bibfnamefont {J.}~\bibnamefont
  {Navarro-Salas}},\ }\href@noop {} {\emph {\bibinfo {title} {{Modeling black
  hole evaporation}}}}\ (\bibinfo {year} {2005})\BibitemShut {NoStop}%
\bibitem [{\citenamefont {Grumiller}\ \emph {et~al.}(2002)\citenamefont
  {Grumiller}, \citenamefont {Kummer},\ and\ \citenamefont
  {Vassilevich}}]{DWV}%
  \BibitemOpen
  \bibfield  {author} {\bibinfo {author} {\bibfnamefont {D.}~\bibnamefont
  {Grumiller}}, \bibinfo {author} {\bibfnamefont {W.}~\bibnamefont {Kummer}},\
  and\ \bibinfo {author} {\bibfnamefont {D.~V.}\ \bibnamefont {Vassilevich}},\
  }\bibfield  {title} {\bibinfo {title} {{Dilaton gravity in two-dimensions}},\
  }\href {https://doi.org/10.1016/S0370-1573(02)00267-3} {\bibfield  {journal}
  {\bibinfo  {journal} {Phys. Rept.}\ }\textbf {\bibinfo {volume} {369}},\
  \bibinfo {pages} {327} (\bibinfo {year} {2002})},\ \Eprint
  {https://arxiv.org/abs/hep-th/0204253} {arXiv:hep-th/0204253} \BibitemShut
  {NoStop}%
\bibitem [{\citenamefont {Hawking}(1976{\natexlab{a}})}]{Hawking1}%
  \BibitemOpen
  \bibfield  {author} {\bibinfo {author} {\bibfnamefont {S.~W.}\ \bibnamefont
  {Hawking}},\ }\bibfield  {title} {\bibinfo {title} {Breakdown of
  predictability in gravitational collapse},\ }\href
  {https://doi.org/10.1103/PhysRevD.14.2460} {\bibfield  {journal} {\bibinfo
  {journal} {Phys. Rev. D}\ }\textbf {\bibinfo {volume} {14}},\ \bibinfo
  {pages} {2460} (\bibinfo {year} {1976}{\natexlab{a}})}\BibitemShut {NoStop}%
\bibitem [{\citenamefont {Bekenstein}(1972)}]{Bekenstein1}%
  \BibitemOpen
  \bibfield  {author} {\bibinfo {author} {\bibfnamefont {J.}~\bibnamefont
  {Bekenstein}},\ }\bibfield  {title} {\bibinfo {title} {Black holes and the
  second law},\ }\href@noop {} {\bibfield  {journal} {\bibinfo  {journal}
  {Lettere al Nuovo Cimento (1971-1985)}\ }\textbf {\bibinfo {volume} {4}},\
  \bibinfo {pages} {737} (\bibinfo {year} {1972})}\BibitemShut {NoStop}%
\bibitem [{\citenamefont {Bekenstein}(1973)}]{Bekenstein2}%
  \BibitemOpen
  \bibfield  {author} {\bibinfo {author} {\bibfnamefont {J.~D.}\ \bibnamefont
  {Bekenstein}},\ }\bibfield  {title} {\bibinfo {title} {Black holes and
  entropy},\ }\href {https://doi.org/10.1103/PhysRevD.7.2333} {\bibfield
  {journal} {\bibinfo  {journal} {Phys. Rev. D}\ }\textbf {\bibinfo {volume}
  {7}},\ \bibinfo {pages} {2333} (\bibinfo {year} {1973})}\BibitemShut
  {NoStop}%
\bibitem [{\citenamefont {Hawking}(1976{\natexlab{b}})}]{Hawking2}%
  \BibitemOpen
  \bibfield  {author} {\bibinfo {author} {\bibfnamefont {S.~W.}\ \bibnamefont
  {Hawking}},\ }\bibfield  {title} {\bibinfo {title} {Black holes and
  thermodynamics},\ }\href {https://doi.org/10.1103/PhysRevD.13.191} {\bibfield
   {journal} {\bibinfo  {journal} {Phys. Rev. D}\ }\textbf {\bibinfo {volume}
  {13}},\ \bibinfo {pages} {191} (\bibinfo {year}
  {1976}{\natexlab{b}})}\BibitemShut {NoStop}%
\bibitem [{\citenamefont {Hawking}(1975)}]{Hawking3}%
  \BibitemOpen
  \bibfield  {author} {\bibinfo {author} {\bibfnamefont {S.~W.}\ \bibnamefont
  {Hawking}},\ }\bibfield  {title} {\bibinfo {title} {{Particle Creation by
  Black Holes}},\ }\href {https://doi.org/10.1007/BF02345020} {\bibfield
  {journal} {\bibinfo  {journal} {Commun. Math. Phys.}\ }\textbf {\bibinfo
  {volume} {43}},\ \bibinfo {pages} {199} (\bibinfo {year} {1975})},\ \bibinfo
  {note} {[Erratum: Commun.Math.Phys. 46, 206 (1976)]}\BibitemShut {NoStop}%
\bibitem [{\citenamefont {Page}(1993)}]{Page_1993}%
  \BibitemOpen
  \bibfield  {author} {\bibinfo {author} {\bibfnamefont {D.~N.}\ \bibnamefont
  {Page}},\ }\bibfield  {title} {\bibinfo {title} {Information in black hole
  radiation},\ }\href {https://doi.org/10.1103/physrevlett.71.3743} {\bibfield
  {journal} {\bibinfo  {journal} {Physical Review Letters}\ }\textbf {\bibinfo
  {volume} {71}},\ \bibinfo {pages} {3743–3746} (\bibinfo {year}
  {1993})}\BibitemShut {NoStop}%
\bibitem [{\citenamefont {Page}(2013)}]{Page_2013}%
  \BibitemOpen
  \bibfield  {author} {\bibinfo {author} {\bibfnamefont {D.~N.}\ \bibnamefont
  {Page}},\ }\bibfield  {title} {\bibinfo {title} {Time dependence of hawking
  radiation entropy},\ }\href {https://doi.org/10.1088/1475-7516/2013/09/028}
  {\bibfield  {journal} {\bibinfo  {journal} {Journal of Cosmology and
  Astroparticle Physics}\ }\textbf {\bibinfo {volume} {2013}}\bibinfo  {number}
  { (09)},\ \bibinfo {pages} {028–028}}\BibitemShut {NoStop}%
\bibitem [{\citenamefont {Almheiri}\ \emph
  {et~al.}(2020{\natexlab{a}})\citenamefont {Almheiri}, \citenamefont
  {Hartman}, \citenamefont {Maldacena}, \citenamefont {Shaghoulian},\ and\
  \citenamefont {Tajdini}}]{MaldecenaJT}%
  \BibitemOpen
\bibfield  {number} {  }\bibfield  {author} {\bibinfo {author} {\bibfnamefont
  {A.}~\bibnamefont {Almheiri}}, \bibinfo {author} {\bibfnamefont
  {T.}~\bibnamefont {Hartman}}, \bibinfo {author} {\bibfnamefont
  {J.}~\bibnamefont {Maldacena}}, \bibinfo {author} {\bibfnamefont
  {E.}~\bibnamefont {Shaghoulian}},\ and\ \bibinfo {author} {\bibfnamefont
  {A.}~\bibnamefont {Tajdini}},\ }\bibfield  {title} {\bibinfo {title} {Replica
  wormholes and the entropy of hawking radiation},\ }\bibfield  {journal}
  {\bibinfo  {journal} {Journal of High Energy Physics}\ }\textbf {\bibinfo
  {volume} {2020}},\ \href {https://doi.org/10.1007/jhep05(2020)013}
  {10.1007/jhep05(2020)013} (\bibinfo {year} {2020}{\natexlab{a}})\BibitemShut
  {NoStop}%
\bibitem [{\citenamefont {Hartman}\ \emph {et~al.}(2020)\citenamefont
  {Hartman}, \citenamefont {Shaghoulian},\ and\ \citenamefont
  {Strominger}}]{RSTislands}%
  \BibitemOpen
  \bibfield  {author} {\bibinfo {author} {\bibfnamefont {T.}~\bibnamefont
  {Hartman}}, \bibinfo {author} {\bibfnamefont {E.}~\bibnamefont
  {Shaghoulian}},\ and\ \bibinfo {author} {\bibfnamefont {A.}~\bibnamefont
  {Strominger}},\ }\bibfield  {title} {\bibinfo {title} {Islands in
  asymptotically flat 2d gravity},\ }\bibfield  {journal} {\bibinfo  {journal}
  {Journal of High Energy Physics}\ }\textbf {\bibinfo {volume} {2020}},\ \href
  {https://doi.org/10.1007/jhep07(2020)022} {10.1007/jhep07(2020)022} (\bibinfo
  {year} {2020})\BibitemShut {NoStop}%
\bibitem [{\citenamefont {Wang}\ \emph
  {et~al.}(2021{\natexlab{a}})\citenamefont {Wang}, \citenamefont {Li},\ and\
  \citenamefont {Wang}}]{BPPislands}%
  \BibitemOpen
  \bibfield  {author} {\bibinfo {author} {\bibfnamefont {X.}~\bibnamefont
  {Wang}}, \bibinfo {author} {\bibfnamefont {R.}~\bibnamefont {Li}},\ and\
  \bibinfo {author} {\bibfnamefont {J.}~\bibnamefont {Wang}},\ }\bibfield
  {title} {\bibinfo {title} {Page curves for a family of exactly solvable
  evaporating black holes},\ }\bibfield  {journal} {\bibinfo  {journal}
  {Physical Review D}\ }\textbf {\bibinfo {volume} {103}},\ \href
  {https://doi.org/10.1103/physrevd.103.126026} {10.1103/physrevd.103.126026}
  (\bibinfo {year} {2021}{\natexlab{a}})\BibitemShut {NoStop}%
\bibitem [{\citenamefont {Hashimoto}\ \emph {et~al.}(2020)\citenamefont
  {Hashimoto}, \citenamefont {Iizuka},\ and\ \citenamefont
  {Matsuo}}]{Schislands}%
  \BibitemOpen
  \bibfield  {author} {\bibinfo {author} {\bibfnamefont {K.}~\bibnamefont
  {Hashimoto}}, \bibinfo {author} {\bibfnamefont {N.}~\bibnamefont {Iizuka}},\
  and\ \bibinfo {author} {\bibfnamefont {Y.}~\bibnamefont {Matsuo}},\
  }\bibfield  {title} {\bibinfo {title} {Islands in schwarzschild black
  holes},\ }\bibfield  {journal} {\bibinfo  {journal} {Journal of High Energy
  Physics}\ }\textbf {\bibinfo {volume} {2020}},\ \href
  {https://doi.org/10.1007/jhep06(2020)085} {10.1007/jhep06(2020)085} (\bibinfo
  {year} {2020})\BibitemShut {NoStop}%
\bibitem [{\citenamefont {Anegawa}\ and\ \citenamefont {Iizuka}(2020)}]{notes}%
  \BibitemOpen
  \bibfield  {author} {\bibinfo {author} {\bibfnamefont {T.}~\bibnamefont
  {Anegawa}}\ and\ \bibinfo {author} {\bibfnamefont {N.}~\bibnamefont
  {Iizuka}},\ }\bibfield  {title} {\bibinfo {title} {Notes on islands in
  asymptotically flat 2d dilaton black holes},\ }\bibfield  {journal} {\bibinfo
   {journal} {Journal of High Energy Physics}\ }\textbf {\bibinfo {volume}
  {2020}},\ \href {https://doi.org/10.1007/jhep07(2020)036}
  {10.1007/jhep07(2020)036} (\bibinfo {year} {2020})\BibitemShut {NoStop}%
\bibitem [{\citenamefont {Yu}\ and\ \citenamefont {Ge}(2022)}]{islands1}%
  \BibitemOpen
  \bibfield  {author} {\bibinfo {author} {\bibfnamefont {M.-H.}\ \bibnamefont
  {Yu}}\ and\ \bibinfo {author} {\bibfnamefont {X.-H.}\ \bibnamefont {Ge}},\
  }\bibfield  {title} {\bibinfo {title} {{Islands and Page curves in charged
  dilaton black holes}},\ }\href
  {https://doi.org/10.1140/epjc/s10052-021-09932-w} {\bibfield  {journal}
  {\bibinfo  {journal} {Eur. Phys. J. C}\ }\textbf {\bibinfo {volume} {82}},\
  \bibinfo {pages} {14} (\bibinfo {year} {2022})},\ \Eprint
  {https://arxiv.org/abs/2107.03031} {arXiv:2107.03031 [hep-th]} \BibitemShut
  {NoStop}%
\bibitem [{\citenamefont {Wang}\ \emph
  {et~al.}(2021{\natexlab{b}})\citenamefont {Wang}, \citenamefont {Li},\ and\
  \citenamefont {Wang}}]{islands2}%
  \BibitemOpen
  \bibfield  {author} {\bibinfo {author} {\bibfnamefont {X.}~\bibnamefont
  {Wang}}, \bibinfo {author} {\bibfnamefont {R.}~\bibnamefont {Li}},\ and\
  \bibinfo {author} {\bibfnamefont {J.}~\bibnamefont {Wang}},\ }\bibfield
  {title} {\bibinfo {title} {{Islands and Page curves of Reissner-Nordstr\"om
  black holes}},\ }\href {https://doi.org/10.1007/JHEP04(2021)103} {\bibfield
  {journal} {\bibinfo  {journal} {JHEP}\ }\textbf {\bibinfo {volume} {04}},\
  \bibinfo {pages} {103}},\ \Eprint {https://arxiv.org/abs/2101.06867}
  {arXiv:2101.06867 [hep-th]} \BibitemShut {NoStop}%
\bibitem [{\citenamefont {Almheiri}\ \emph
  {et~al.}(2020{\natexlab{b}})\citenamefont {Almheiri}, \citenamefont
  {Mahajan},\ and\ \citenamefont {Santos}}]{islands3}%
  \BibitemOpen
  \bibfield  {author} {\bibinfo {author} {\bibfnamefont {A.}~\bibnamefont
  {Almheiri}}, \bibinfo {author} {\bibfnamefont {R.}~\bibnamefont {Mahajan}},\
  and\ \bibinfo {author} {\bibfnamefont {J.~E.}\ \bibnamefont {Santos}},\
  }\bibfield  {title} {\bibinfo {title} {{Entanglement islands in higher
  dimensions}},\ }\href {https://doi.org/10.21468/SciPostPhys.9.1.001}
  {\bibfield  {journal} {\bibinfo  {journal} {SciPost Phys.}\ }\textbf
  {\bibinfo {volume} {9}},\ \bibinfo {pages} {001} (\bibinfo {year}
  {2020}{\natexlab{b}})},\ \Eprint {https://arxiv.org/abs/1911.09666}
  {arXiv:1911.09666 [hep-th]} \BibitemShut {NoStop}%
\bibitem [{\citenamefont {Gautason}\ \emph {et~al.}(2020)\citenamefont
  {Gautason}, \citenamefont {Schneiderbauer}, \citenamefont {Sybesma},\ and\
  \citenamefont {Thorlacius}}]{islands4}%
  \BibitemOpen
  \bibfield  {author} {\bibinfo {author} {\bibfnamefont {F.~F.}\ \bibnamefont
  {Gautason}}, \bibinfo {author} {\bibfnamefont {L.}~\bibnamefont
  {Schneiderbauer}}, \bibinfo {author} {\bibfnamefont {W.}~\bibnamefont
  {Sybesma}},\ and\ \bibinfo {author} {\bibfnamefont {L.}~\bibnamefont
  {Thorlacius}},\ }\bibfield  {title} {\bibinfo {title} {{Page Curve for an
  Evaporating Black Hole}},\ }\href {https://doi.org/10.1007/JHEP05(2020)091}
  {\bibfield  {journal} {\bibinfo  {journal} {JHEP}\ }\textbf {\bibinfo
  {volume} {05}},\ \bibinfo {pages} {091}},\ \Eprint
  {https://arxiv.org/abs/2004.00598} {arXiv:2004.00598 [hep-th]} \BibitemShut
  {NoStop}%
\bibitem [{\citenamefont {Polyakov}(1981)}]{Polyakov}%
  \BibitemOpen
  \bibfield  {author} {\bibinfo {author} {\bibfnamefont {A.~M.}\ \bibnamefont
  {Polyakov}},\ }\bibfield  {title} {\bibinfo {title} {Quantum geometry of
  bosonic strings},\ }\href@noop {} {\bibfield  {journal} {\bibinfo  {journal}
  {Physics Letters B}\ }\textbf {\bibinfo {volume} {103}},\ \bibinfo {pages}
  {207} (\bibinfo {year} {1981})}\BibitemShut {NoStop}%
\bibitem [{\citenamefont {Djordjevi\'c}\ and\ \citenamefont
  {Radovanovi\'c}(2025)}]{DREH3}%
  \BibitemOpen
  \bibfield  {author} {\bibinfo {author} {\bibfnamefont {S.}~\bibnamefont
  {Djordjevi\'c}}\ and\ \bibinfo {author} {\bibfnamefont {V.}~\bibnamefont
  {Radovanovi\'c}},\ }\bibfield  {title} {\bibinfo {title} {{Page Curve for an
  Evaporating Schwarzschild Black Hole in Dimensionally-Reduced Model of
  Dilaton Gravity (in preparation)}},\ }\href@noop {} {\  (\bibinfo {year}
  {2025})}\BibitemShut {NoStop}%
\end{thebibliography}%
\appendix
\setcounter{figure}{0}
\renewcommand{\thefigure}{A\arabic{figure}}
\section*{APPENDIX}

\section{Transformation laws of the energy-momentum tensor}\label{app_A}

The quantum correction to the energy-momentum tensor is defined by equation (\ref{T_1loop}). In the conformal gauge $\de s^{2}=e^{2\rho}\de x^{+}\de x^{-}$, it can be rewritten in the following form:
\begin{align} \label{T_Kruskal}
    \langle\Delta T^{(f)}_{\pm\pm}\rangle&=\frac{\varepsilon}{G}\left[\partial^{2}_{\pm}\rho-\left(\partial_{\pm}\rho\right)^{2}-t_{\pm}(x^{\pm})\right],\\
    \langle\Delta T^{(f)}_{+-}\rangle&=-\frac{\varepsilon}{G}\partial_{+}\partial_{-}\rho,
\end{align} 
where $t_{\pm}(x^{\pm})=\frac{1}{2}\partial_{\pm}^{2}f_{\pm}-\frac{1}{4}(\partial_{\pm}f_{\pm})^{2}$ is a function related to the state of the quantum fields (equation (\ref{psi_resenje})). 
Under a conformal coordinate transformation $y^{\pm}=y^{\pm}(x^{\pm})$ the energy-momentum tensor changes according to (we only consider the quantum correction):
\begin{equation}
\langle\Delta T^{(f)}_{\pm\pm}(y)\rangle=\left(\frac{d x^{\pm}}{d y^{\pm}}\right)^{2}\langle\Delta T^{(f)}_{\pm\pm}(x)\rangle.    
\end{equation}
For the conformal factor $\rho$, on the other hand, we have the following transformation law:
\begin{equation}
\rho(y)=\rho(x)+\frac{1}{2}\ln\frac{d y^{+}}{d x^{+}}\frac{d y^{-}}{d x^{-}}. \end{equation}
These give the transformation law for $t_{\pm}$:
\begin{equation}\label{t_change}
t_{\pm}(y^{\pm})=\left(\frac{d x^{\pm}}{d y^{\pm}}\right)^{2}\left[t_{\pm}(x^{\pm})-\frac{1}{2}D_{x^{\pm}}[y^{\pm}]\right],    
\end{equation}
with the Schwartz derivative defined by: \begin{equation}
D_{x^{\pm}}[y^{\pm}]=\frac{(y^{\pm})'''}{(y^{\pm})'}-\frac{3}{2}\left(\frac{(y^{\pm})''}{(y^{\pm})'}\right)^{2},\label{t_pm_trans}    
\end{equation}
where the derivatives are with respect to $x^{\pm}$.

The vacuum state of the quantum fields and the corresponding set of creation/anihilation operators depend on the reference frame, i.e. on the coordinate system. If we introduce normal ordering of the energy-momentum operator for one choice of the vacuum state $\vert 0; x\rangle$, say in the coordinate system $x^{\pm}$, the energy-momentum operator can be decomposed as:
\begin{equation}\label{separation}
\hat{T}^{(f)}_{\pm\pm}(x^{\pm})=:\hat{T}^{(f)}_{\pm\pm}(x^{\pm}):+\langle 0;x\vert\hat{T}^{(f)}_{\pm\pm}(x^{\pm})\vert 0;x\rangle.     
\end{equation}
If we make a transition to another coordinate system $y^{\pm}$, the energy momentum tensor will not be ordered normally in general. The transformation law for the normally ordered part is given by
\begin{equation}\label{normal}
:\hat{T}^{(f)}_{\pm\pm}(y^{\pm}):=\left(\frac{d x^{\pm}}{d y^{\pm}}\right)^{2}\left[:\hat{T}^{(f)}_{\pm\pm}(x^{\pm}):+\frac{\hbar}{24\pi}D_{x^{\pm}}[y^{\pm}]\right].     
\end{equation}
Comparing (\ref{t_change}), (\ref{separation}) and (\ref{normal}), we can establish the following relationship:
\begin{equation}
\bra{0,x}:\hat{T}^{(f)}_{\pm\pm}(x^{\pm}):\ket{0,x}=-\frac{\varepsilon}{G} t_{\pm}(x^{\pm}), \end{equation}
which provides an interpretation for the quantity $t_{\pm}$. If $t_{\pm}(x^{\pm})=0$ in some coordinates $x^{\pm}$, the energy-momentum tensor is normally ordered in those coordinates, i.e. it's expectation value is zero in the vacuum defined in the coordinates $x^{\pm}$. 

\section{Transcendental equation solution}\label{app_B}

Here we solve the transcendental equation (\ref{transcedentna}), that is:
\begin{equation}
    e^{\hat{\mathrm{x}}}(\hat{\mathrm{x}}-1)=\delta e^{\mathrm{x}}(\mathrm{x}-1),\label{transc}
\end{equation}
\noindent when $\delta\leqslant1$, $\hat{\mathrm{x}}\geqslant0$ and $\mathrm{x}\geqslant0$. Notice that when $\delta=1$, we have a simple solution $\hat{\mathrm{x}}=\mathrm{x}$. This is the only solution, since both functions are monotonically increasing, which implies that they intersect at most once. This is true simply because when $\sigma^{+}=\sigma^{+}_{0}$, the metric becomes that of the Minkowski space-time (see section \ref{grav_kolaps}). This fact will be extensively used as well as the fact that $\delta<1$ when $\sigma^{+}>\sigma^{+}_{0}$. Two cases will be studied, since the function on the right-hand side of equation (\ref{transc}) behaves differently depending on whether $\mathrm{x}\geqslant1$ or $\mathrm{x}\leqslant1$. 
\par It is important to further discuss the condition $\hat{\mathrm{x}}\geqslant0$. Figure \ref{fig1} shows space-time. In part I of space-time the boundary is defined by $\mathrm{x}=0$. The formation of the singularity happens when the collapsing matter's world-line $\sigma^{+}=\sigma^{+}_{0}$ hits the boundary of space-time. This corresponds to $\hat{\mathrm{x}}=0$, which implies that $\hat{\mathrm{x}}\geqslant0$.
\begin{theorem}\label{T1}
    The solution to equation (\ref{transc}) $\hat{\mathrm{x}}=\hat{\mathrm{x}}(\mathrm{x},\delta)$, when $\delta\leqslant1$, $\hat{\mathrm{x}}\geqslant0$ and $\mathrm{x}\geqslant1$ is given by the following functional series:
    \begin{equation}
    \hat{\mathrm{x}}(\mathrm{x},\delta)=\mathrm{x}-\sum_{n=1}^{\infty}\frac{(1-\delta)^{n}}{n!}\left(1-\frac{1}{\mathrm{x}}\right)^{n}\mathrm{P}_{n-1}\left(\frac{1}{\mathrm{x}}\right),\label{x>1resenje}
\end{equation}
    where $\mathrm{P}_{n}(x)$ are polynomials of the $n$-th degree, defined by the following recurrence relation,
\begin{equation}
    \mathrm{P}_{n}(x)=\left[n(x+1)+x^{2}\frac{\de}{\de x}\right]\mathrm{P}_{n-1}(x),\label{P_rekurentna}
\end{equation}
    and the initial condition $\mathrm{P}_{0}(x)=1$.
\end{theorem}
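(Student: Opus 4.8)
The plan is to obtain $\hat{\mathrm{x}}(\mathrm{x},\delta)$ as a Taylor series in $\delta$ about $\delta=1$, where $\hat{\mathrm{x}}=\mathrm{x}$. First I would differentiate the defining relation (\ref{transc}) with respect to $\delta$ at fixed $\mathrm{x}$. Since $\frac{\de}{\de\hat{\mathrm{x}}}[e^{\hat{\mathrm{x}}}(\hat{\mathrm{x}}-1)]=e^{\hat{\mathrm{x}}}\hat{\mathrm{x}}$, this gives $e^{\hat{\mathrm{x}}}\hat{\mathrm{x}}\,\partial_{\delta}\hat{\mathrm{x}}=e^{\mathrm{x}}(\mathrm{x}-1)$, and eliminating $e^{\hat{\mathrm{x}}}$ with (\ref{transc}) itself yields the autonomous first-order equation
\[
    \delta\,\partial_{\delta}\hat{\mathrm{x}}=1-\frac{1}{\hat{\mathrm{x}}},\qquad \hat{\mathrm{x}}\big|_{\delta=1}=\mathrm{x}.
\]
The hypotheses $\hat{\mathrm{x}}\geqslant0$, $\delta\leqslant1$, $\mathrm{x}\geqslant1$ are exactly what is needed to single out the branch: the right-hand side is smooth and positive for $\hat{\mathrm{x}}>1$, so $\hat{\mathrm{x}}$ is increasing in $\delta$, and as $\delta\to0^{+}$ equation (\ref{transc}) forces $\hat{\mathrm{x}}\to1^{+}$; hence $\hat{\mathrm{x}}(\mathrm{x},\delta)$ is the unique solution lying in $(1,\mathrm{x}]$ for $\delta\in(0,1]$ (the case $\mathrm{x}=1$ is the degenerate fixed point $\hat{\mathrm{x}}\equiv1$).

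Next I would compute the Taylor coefficients $\partial_{\delta}^{n}\hat{\mathrm{x}}\big|_{\delta=1}$ inductively. The key claim is that $\delta^{n}\partial_{\delta}^{n}\hat{\mathrm{x}}=\mathrm{F}_{n}(\hat{\mathrm{x}})$ is a function of $\hat{\mathrm{x}}$ alone, with $\mathrm{F}_{1}(y)=1-1/y$ and $\mathrm{F}_{n+1}(y)=\big(1-\tfrac{1}{y}\big)\mathrm{F}_{n}'(y)-n\,\mathrm{F}_{n}(y)$: applying the operator $\delta\partial_{\delta}$ to the level-$n$ identity and again using the equation above to rewrite $\delta\partial_{\delta}\hat{\mathrm{x}}$ produces precisely this recurrence. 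Then I would pass to the variable $t=1/y$, so $\frac{\de}{\de y}=-t^{2}\frac{\de}{\de t}$, and verify by a second induction that $\mathrm{F}_{n}(y)=(-1)^{n+1}\big(1-\tfrac{1}{y}\big)^{n}\mathrm{P}_{n-1}(1/y)$: after pulling the factors $(1-t)^{n}$ through, the recurrence for $\mathrm{F}_{n}$ collapses exactly to $\mathrm{P}_{n}(t)=\big[\,n(t+1)+t^{2}\tfrac{\de}{\de t}\,\big]\mathrm{P}_{n-1}(t)$ with $\mathrm{P}_{0}=1$, which is (\ref{P_rekurentna}). Evaluating at $\delta=1$ gives $\partial_{\delta}^{n}\hat{\mathrm{x}}\big|_{\delta=1}=\mathrm{F}_{n}(\mathrm{x})$, so
\[
    \hat{\mathrm{x}}=\mathrm{x}+\sum_{n=1}^{\infty}\frac{(\delta-1)^{n}}{n!}(-1)^{n+1}\Big(1-\frac{1}{\mathrm{x}}\Big)^{n}\mathrm{P}_{n-1}\!\Big(\frac{1}{\mathrm{x}}\Big),
\]
which is (\ref{x>1resenje}) once one uses $(\delta-1)^{n}(-1)^{n+1}=-(1-\delta)^{n}$.

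The remaining and genuinely analytic point is convergence: one must check that this formal series actually represents $\hat{\mathrm{x}}$ for all $0<\delta\leqslant1$, so that the termwise operations performed on it later are legitimate. For this I would regard $\hat{\mathrm{x}}$ as the branch of the inverse of the entire function $\Phi(z)=e^{z}(z-1)$ determined by $\Phi(\hat{\mathrm{x}})=\delta\,\Phi(\mathrm{x})$; since $\Phi'(z)=e^{z}z$ vanishes only at $z=0$, with critical value $\Phi(0)=-1$, the implicit function theorem makes $\hat{\mathrm{x}}(\delta)$ real-analytic for $\delta>0$, and the only branch point of this branch (in the complex $\delta$-plane) is at $\delta=-1/\Phi(\mathrm{x})<0$. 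The radius of convergence about $\delta=1$ is therefore at least $1+1/\Phi(\mathrm{x})>1$, which comfortably covers $[0,1]$. The main obstacle is thus not the algebra—the two inductions above are mechanical—but organising this branch-selection and convergence argument cleanly (and disposing of the endpoint $\mathrm{x}=1$, where every term carries the factor $(1-1/\mathrm{x})^{n}=0$ and the series trivially returns $\hat{\mathrm{x}}=1$). A shortcut to the coefficients would be Lagrange reversion applied to $\Phi(\hat{\mathrm{x}})=\delta\,\Phi(\mathrm{x})$, which yields the series in one step but then demands evaluating $\frac{\de^{\,n-1}}{\de\eta^{\,n-1}}\big(\eta/(\Phi(\mathrm{x}+\eta)-\Phi(\mathrm{x}))\big)^{n}$ at $\eta=0$ and recognising the polynomials $\mathrm{P}_{n-1}$ inside, which is no simpler than the route above.
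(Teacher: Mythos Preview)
Your formal derivation is correct and follows essentially the same strategy as the paper: expand $\hat{\mathrm{x}}$ in a Taylor series about $\delta=1$ and compute the coefficients. The paper packages the $n$-th derivative as a Rodrigues-type formula $\frac{\partial^{n}\hat{\mathrm{x}}}{\partial\delta^{n}}=e^{n\mathrm{x}}(\mathrm{x}-1)^{n}\bigl(\tfrac{e^{-\hat{\mathrm{x}}}}{\hat{\mathrm{x}}}\tfrac{\de}{\de\hat{\mathrm{x}}}\bigr)^{n-1}\tfrac{e^{-\hat{\mathrm{x}}}}{\hat{\mathrm{x}}}$ and reads off the recurrence (\ref{P_rekurentna}) from that, whereas you first simplify to the autonomous ODE $\delta\,\partial_{\delta}\hat{\mathrm{x}}=1-1/\hat{\mathrm{x}}$ and run two explicit inductions; these are equivalent repackagings of the same chain-rule computation, and your intermediate claim $\delta^{n}\partial_{\delta}^{n}\hat{\mathrm{x}}=\mathrm{F}_{n}(\hat{\mathrm{x}})$ together with the substitution $t=1/y$ does collapse exactly to (\ref{P_rekurentna}).

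The one genuine difference is convergence. The paper does not address it inside Theorem~\ref{T1} at all; it is deferred to a separate Theorem~\ref{T2}, proved by the Weierstrass $M$-test after establishing, via a fairly intricate combinatorial lemma on the coefficients of $\mathrm{P}_{n}$, that $(1-1/\mathrm{x})^{n}\mathrm{P}_{n-1}(1/\mathrm{x})\leqslant(n-1)!$ for all $\mathrm{x}\geqslant1$. Your route---recognising $\hat{\mathrm{x}}-1$ as a Lambert-$W$ value of $\delta\,\Phi(\mathrm{x})/e$ and locating the unique branch point at $\delta=-1/\Phi(\mathrm{x})<0$---is shorter and conceptually cleaner, and even yields a radius strictly larger than $1$. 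What the paper's approach buys is twofold: it gives explicit $n!$-type bounds on the polynomials that are reused later, and it directly delivers \emph{uniform} convergence in $\mathrm{x}\in[1,\infty)$, which is what the subsequent term-by-term integration of the series actually requires. Your argument as written gives, for each fixed $\mathrm{x}$, a radius $1+1/\Phi(\mathrm{x})$ that shrinks to $1$ as $\mathrm{x}\to\infty$, so if you want uniform convergence on $[1,\infty)$ you would still need a supplementary estimate (e.g.\ noting that the tail bound from the Lambert-$W$ expansion is itself uniform on compact $\delta$-sets away from the cut).
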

\begin{proof}
We expand the function $\hat{\mathrm{x}}=\hat{\mathrm{x}}(\mathrm{x},\delta)$ in a Taylor series around $\delta=1$:
\begin{equation}
    \hat{\mathrm{x}}=\mathrm{x}+\sum_{n=1}^{\infty}\frac{1}{n!}\frac{\partial^{n}\hat{\mathrm{x}}}{\partial\delta^{n}}\bigg{|}_{\delta=1}\left(\delta-1\right)^{n},\label{razvoj}
\end{equation}
\noindent where we have used the fact that $\hat{\mathrm{x}}(\mathrm{x},1)=\mathrm{x}$. The differential of the function $\hat{\mathrm{x}}=\hat{\mathrm{x}}(x,\delta)$ is given by:
\begin{equation}
    \de\hat{\mathrm{x}}=\frac{e^{\mathrm{x}}(\mathrm{x}-1)}{\hat{\mathrm{x}}e^{\hat{\mathrm{x}}}}\de\delta+\delta\frac{\mathrm{x}e^{\mathrm{x}}}{\hat{\mathrm{x}}e^{\hat{\mathrm{x}}}}\de\mathrm{x}.\label{diferencijal}
\end{equation}
\noindent The $n$-th derivative can be written in the following form:
\begin{equation}
    \frac{\partial^{n}\hat{\mathrm{x}}}{\partial\delta^{n}}=e^{n\mathrm{x}}(\mathrm{x}-1)^{n}\left(\frac{e^{-\hat{\mathrm{x}}}}{\hat{\mathrm{x}}}\frac{\de}{\de\hat{\mathrm{x}}}\right)^{n-1}\frac{e^{-\hat{\mathrm{x}}}}{\hat{\mathrm{x}}}.\label{izvod}
\end{equation}
\noindent Demanding that $\delta=1$ is equivalent to replacing $\hat{\mathrm{x}}=\mathrm{x}$ on the right hand side of equation (\ref{izvod}). We define the array of polynomials $\mathrm{P}_{n}\left(\frac{1}{\mathrm{x}}\right)$ by:
\begin{equation}
    \mathrm{P}_{n-1}\left(\frac{1}{\mathrm{x}}\right)=(-1)^{n-1}\mathrm{x}^{n}e^{n\mathrm{x}}\left(\frac{e^{-\mathrm{x}}}{\mathrm{x}}\frac{\de}{\de\mathrm{x}}\right)^{n-1}\frac{e^{-\mathrm{x}}}{\mathrm{x}}\label{P_poli}.
\end{equation}
\noindent Using mathematical induction, it is easy to show that the Rodrigues formula (equation (\ref{P_poli})) really defines polynomials of degree $n-1$. With the help of equation (\ref{P_poli}) it is straight forward to derive a recurrence relation for the $\mathrm{P}_{n}(1/\mathrm{x})$ polynomials, which is given by:
\begin{equation}
    \mathrm{P}_{n}\left(\frac{1}{\mathrm{x}}\right)=\left[n\left(1+\frac{1}{\mathrm{x}}\right)-\frac{\de}{\de\mathrm{x}}\right]\mathrm{P}_{n-1}\left(\frac{1}{\mathrm{x}}\right)\label{P_rekrentna_1/x}.
\end{equation}
From this relation, the recurrence relation (\ref{P_rekurentna}) is directly derived by the substitution $x=1/\mathrm{x}$. Replacing $n=1$ in equation (\ref{P_poli}), the initial condition reads $\mathrm{P}_{0}(x)=1$.
\end{proof}
\par To properly use equation (\ref{x>1resenje}), it is important to show that this series is uniformly convergent. If so, then the solution of equation (\ref{transc}) when $\mathrm{x}\geqslant1$ is given by equation (\ref{x>1resenje}).
\begin{theorem}\label{T2}
    The series $\sum_{n=1}^{\infty}\frac{(1-\delta)^{n}}{n!}f_{n}(\mathrm{x})$ is uniformly convergent when $\mathrm{x}\geqslant1$. The functional array $f_{n}(x)$ is defined by the following equation:
    \begin{equation}
        f_{n}(x)=\left(1-\frac{1}{x}\right)^{n}\mathrm{P}_{n-1}\left(\frac{1}{x}\right).
    \end{equation}
\end{theorem}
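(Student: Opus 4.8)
The plan is to prove the uniform convergence with the Weierstrass $M$-test; everything then reduces to the single estimate
\begin{equation}
0\le f_n(\mathrm{x})\le(n-1)!\qquad(\mathrm{x}\ge1,\ n\ge1).
\end{equation}
Granting this, for $0<\delta\le1$ we get $\bigl|\frac{(1-\delta)^n}{n!}f_n(\mathrm{x})\bigr|\le\frac{(1-\delta)^n}{n}=:M_n$, and $\sum_{n\ge1}M_n=-\ln\delta<\infty$; hence the series converges uniformly for $\mathrm{x}\in[1,\infty)$ (in fact uniformly on $[1,\infty)\times[\delta_0,1]$ for any $\delta_0>0$), which is the assertion.

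To prove the estimate I would set $t=1/\mathrm{x}\in(0,1]$ and study $R_n(t):=f_n(1/t)=(1-t)^n\mathrm{P}_{n-1}(t)$, so that $\sup_{\mathrm{x}\ge1}|f_n(\mathrm{x})|\le\sup_{t\in[0,1]}|R_n(t)|$. From the defining recurrence $\mathrm{P}_m=m(x+1)\mathrm{P}_{m-1}+x^2\mathrm{P}_{m-1}'$, $\mathrm{P}_0=1$, one reads off by induction that every coefficient of $\mathrm{P}_m$ is positive and that $\mathrm{P}_m(0)=m\,\mathrm{P}_{m-1}(0)=m!$. A direct manipulation then turns the $\mathrm{P}$-recurrence into the much more usable identity
\begin{equation}
R_{n+1}(t)=n\,R_n(t)+t^2(1-t)\,R_n'(t),\qquad R_1(t)=1-t,
\end{equation}
whence $R_n(0)=(n-1)!$, and $R_n\ge0$ on $[0,1]$ from positivity of the coefficients of $\mathrm{P}_{n-1}$.

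The core of the argument is a monotonicity lemma: \emph{$R_n$ is non-increasing on $[0,1]$ for every $n$}. Writing $\mathrm{P}_{n-1}=\sum_kp_kt^k$ one computes $R_n'(t)=-(1-t)^{n-1}\sum_k\bigl[(n+k)p_k-(k+1)p_{k+1}\bigr]t^k$, so it is enough to establish $(n+k)p_k\ge(k+1)p_{k+1}$ for all $k$. I would prove this by induction on $n$: using the coefficient form of the recurrence, which expresses the coefficients $p_k$ of $\mathrm{P}_{n-1}$ through the coefficients $q_j$ of $\mathrm{P}_{n-2}$ as $p_k=(n-1)q_k+(n+k-2)q_{k-1}$, and then invoking the inductive hypothesis for $\mathrm{P}_{n-2}$ at the indices $k$ and $k-1$, the target inequality collapses exactly to $k\,q_k\le(n+k-2)q_{k-1}$, which is one of those two instances. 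This bookkeeping step—reorganising the recurrence so that the induction closes—is the one place where care is required; the rest is routine.

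With the lemma in hand, $0\le R_n(t)\le(n-1)!$ follows by an immediate induction on $n$. The base case is $R_1(t)=1-t\in[0,1]$. For the step, $R_n'\le0$ and $t^2(1-t)\ge0$ on $[0,1]$ give
\begin{equation}
R_{n+1}(t)=n\,R_n(t)+t^2(1-t)R_n'(t)\le n\,R_n(t)\le n\,(n-1)!=n!,
\end{equation}
while $R_{n+1}=(1-t)^{n+1}\mathrm{P}_n\ge0$. Translating back to $\mathrm{x}$ yields $0\le f_n(\mathrm{x})\le(n-1)!$, and the $M$-test concludes. I would also note that the bound $(n-1)!$ is sharp—it is attained in the limit $\mathrm{x}\to\infty$ (equivalently $t\to0$), which is precisely why uniformity degenerates as $\delta\to0$.
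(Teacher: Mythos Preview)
Your approach is essentially the same as the paper's: Weierstrass $M$-test, reduced to the bound $0\le f_n\le(n-1)!$, which in turn is obtained from a monotonicity lemma proved via the coefficient inequality $(n+k)p_k\ge(k+1)p_{k+1}$ established by induction on $n$. The paper's Lemma~\ref{T3} carries out exactly this coefficient induction (its inequality (j4) is your inequality after the shift $n\mapsto n-1$, $k\mapsto k+1$), including the boundary cases $k=1$ and $k=n$ you allude to as ``bookkeeping''; your recurrence $R_{n+1}=nR_n+t^2(1-t)R_n'$ is a tidy repackaging that makes the final bound $R_{n+1}\le n\,R_n\le n!$ immediate, but the substantive work is identical.
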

\begin{proof}
    To prove this theorem, we need to prove the following lemma first.
    \begin{lemma}\label{T3}
        The functions $f_{n}(x)$ are monotonically increasing functions of $x$ when $x\geqslant1$.
    \end{lemma}
    \begin{proof}
        We will calculate the derivative of the function $f_{n}(x)$, and show that it is positive when $x>1$. Using equation (\ref{P_rekrentna_1/x}) the derivative of the polynomial $\mathrm{P_{n-1}}(x)$ can be eliminated, resulting in:
        \begin{align}
            \frac{\de f_{n}(x)}{\de x}&=\left(1-\frac{1}{x}\right)^{n-1}\times\nonumber\\
            &\times\left[n\mathrm{P}_{n-1}\left(\frac{1}{x}\right)-\left(1-\frac{1}{x}\right)\mathrm{P}_{n}\left(\frac{1}{x}\right)\right].
        \end{align}
        Since $x>1$, $f'_{n}(x)>0$ is equivalent to proving that
        \begin{equation}
            n\mathrm{P}_{n-1}\left(\frac{1}{x}\right)>\left(1-\frac{1}{x}\right)\mathrm{P}_{n}\left(\frac{1}{x}\right),\hspace{1mm}\forall n\geqslant1.\label{j1}
        \end{equation}
        We write the polynomials in terms of their coefficients as $\mathrm{P}_{n}(x)=\sum_{k=0}^{n}a_{n}^{(k)}x^{k}$. Equation (\ref{j1}) now takes the following form:
        \begin{align}
            na_{n-1}^{(0)}&-a_{n}^{(0)}+\sum_{k=1}^{n-1}\left[na_{n-1}^{(k)}+a_{n}^{(k-1)}-a_{n}^{(k)}\right]\frac{1}{x^{k}}\nonumber\\&
            +\left[a_{n}^{(n-1)}-a_{n}^{(n)}\right]\frac{1}{x^{n}}+a_{n}^{(n)}\frac{1}{x^{n+1}}>0.\label{j2}
        \end{align}
        We will show that coefficient of every order in equation (\ref{j2}) is greater than 0. To do this we need the recurrence relation between the coefficients $a_{n}^{(k)}$, which is easily derived using recurrence relation (\ref{P_rekurentna}). The result is given by:
        \begin{align}
            k=0:\hspace{2mm}&a_{n}^{(0)}=na_{n-1}^{(0)},\label{rel1}\\
            k=1:\hspace{2mm}&a_{n}^{(1)}=na_{n-1}^{(1)}+na_{n-1}^{(0)},\label{rel2}\\
            1<k<n:\hspace{2mm}&a_{n}^{(k)}=na_{n-1}^{(k)}+(n+k-1)a_{n-1}^{(k-1)},\label{rel3}\\
            k=n:\hspace{2mm}&a_{n}^{(n)}=(2n-1)a_{n-1}^{(n-1)}.\label{rel4}
        \end{align}
        Together with the initial condition $a_{0}^{(0)}=1$, equations (\ref{rel1}) through (\ref{rel4}) give $a_{n}^{(0)}=n!$, $a_{n}^{(1)}=nn!$ and $a_{n}^{(n)}=(2n-1)!!$ respectively.
        \noindent Now, we return to proving the inequality (\ref{j2}). From equation (\ref{rel1}) it follows that the $k=0$ term in the (\ref{j2}) vanishes. The first-order term vanishes because $a_{n}^{(1)}=nn!$. Now, the inequality (\ref{j2}) nay be recast as:
        \begin{equation}
            (\forall k\in\{2,3,...,n-1\})\hspace{1mm}na_{n-1}^{(k)}+a_{n}^{(k-1)}-a_{n}^{(k)}>0.\label{j3}
        \end{equation}
        With the help of the relation (\ref{rel3}), the inequality (\ref{j3}) can be rewritten in the following form:
        \begin{equation}
            (n+k-2)a_{n-1}^{(k-2)}>(k-1)a_{n-1}^{(k-1)}.
        \end{equation}
        Shifting $k-1$ to $k$, and $n-1$ to $n$, we arrive at the following expression:
        \begin{equation}
            (\forall k\in\{1,2,...,n-1\})\hspace{1mm}(n+k)a_{n}^{(k-1)}>ka_{n}^{(k)}\label{j4}.
        \end{equation}
        Now, we apply mathematical induction to prove this inequality. First, we prove the relation for $n=3$ (the base case of the induction). The coefficients of the polynomial $\mathrm{P}_{3}(x)=15x^{3}+25x^{2}+18x+6$ clearly satisfy relations (\ref{j4}). We need to prove that the relation holds for $n+1$ if relation (\ref{j4}) is satisfied for $n$ (induction hypothesis):
        \begin{equation}
            (\forall k\in\{1,2,...,n\})\hspace{1mm}(n+k+1)a_{n+1}^{(k-1)}>ka_{n+1}^{(k)}\label{j5}.
        \end{equation}
        Using relation (\ref{rel3}), the inequality (\ref{j5}) can be rewritten as:
        \begin{align}
            &(\forall k\in\{2,...,n\})\hspace{1mm}(n+1)\left[(n+k)a_{n}^{(k-1)}-ka_{n}^{(k)}\right]\nonumber\\
            &+(n+k+1)\left[(n+k-1)a_{n}^{(k-2)}-(k-1)a_{n}^{(k-1)}\right]>0\label{j6}
        \end{align}
        Following from the hypothesis, both terms are greater then 0. Only two cases remain to be proven: $k=1$ and $k=n$. In the case of $k=1$, the relation is given by $(n+2)a_{n+1}^{(0)}>a_{n+1}^{(1)}$. Using the closed forms for these two coefficients, $(n+2)(n+1)!>(n+1)(n+1)!$, which is clearly satisfied. In the case of $k=n$, we are allowed to use formula (\ref{j6}), which takes the following form: $(n+1)n\left[2a_{n}^{(n-1)}-a_{n}^{(n)}\right]+(2n+1)\left[(2n-1)a_{n}^{(n-2)}-(n-1)a^{(n-1)}_{n}\right]>0$. The second term is greater then zero (from the hypothesis). Using relation (\ref{rel3}) $2a_{n}^{(n-1)}>2na_{n-1}^{(n-1)}=2n(2n-3)!!>(2n-1)!!=a_{n}^{(n)}$. With this, the mathematical induction is finished and we have proven inequality (\ref{j4}), and subsequently the inequality (\ref{j3}).
        \par The last remaining part of inequality (\ref{j2}) that needs to be proven is the relation $(\forall n\geqslant1)a_{n}^{(n-1)}>a_{n}^{(n)}$. Using the relation (\ref{rel3}) (and the result $a_{n}^{(n)}=(2n-1)!!$) repeatedly, it is easy to arrive at the following expression for the closed form of $a_{n}^{(n-1)}$:
        \begin{equation}
            a_{n}^{(n-1)}=\sum_{k=0}^{n-1}\frac{2^{k}(n-k)(n-1)!(2(n-k)-3)!!}{(n-k-1)!}.
        \end{equation}
        The zeroth term in this sum is equal to $n(2n-3)!!$, and every other term is greater then $(2n-3)!!$, which means that: $a_{n}^{(n-1)}>(2n-1)(2n-3)!!=a_{n}^{(n)}$. This is what we needed to prove. 
        \par With this, we have proven inequality (\ref{j2}) and subsequently inequality (\ref{j1}), thus proving the lemma.
    \end{proof}
To prove theorem \ref{T2}, the Weierstrass criterion will be used. Lemma \ref{T3} implies that
\begin{equation}
    (\forall x\geqslant1)f_{n}(x)<\lim_{x\to\infty}f_{n}(x)=a_{n-1}^{(0)}=(n-1)!,\label{j7}
\end{equation}
\noindent $\forall n\geqslant1$. The inequality (\ref{j7}) implies the following inequality for the coefficients of the series $\sum_{n=1}^{\infty}\frac{(1-\delta)^{n}}{n!}f_{n}(x)$:
\begin{equation}
    (\forall n\geqslant1)(\forall x\geqslant1)\hspace{1mm}\bigg{|}\frac{(1-\delta)^{n}}{n!}f_{n}(x)\bigg{|}<\frac{(1-\delta)^{n}}{n}.
\end{equation}
The next step is to prove that the series $\sum_{n=1}^{\infty}\frac{(1-\delta)^{n}}{n}$ converges. By the comparison test, this is true for $|1-\delta|<1$, which is satisfied. Then, according to the Weierstrass criterion, the series $\sum_{n=1}^{\infty}\frac{(1-\delta)^{n}}{n!}f_{n}(x)$ converges uniformly.
\end{proof}
This result is important because the infinite sum may exchange places with integrals and limits with respect to $x$, which will be used in the main text. Now we move on to the case when $x<1$.
\begin{theorem}\label{T4}
    The solution to equation (\ref{transc}) $\hat{\mathrm{x}}=\hat{\mathrm{x}}(\mathrm{x},\delta)$, when $\delta\leqslant1$, $\hat{\mathrm{x}}\geqslant0$ and $0\leqslant\mathrm{x}\leqslant1$ is given by the following functional series:
    \begin{equation}
        \hat{\mathrm{x}}(\mathrm{x},\delta)=1-\sum_{n=1}^{\infty}\frac{\delta^{n}}{n!}\mathrm{P}_{n-1}(1)e^{n(\mathrm{x}-1)}(1-\mathrm{x})^{n},\label{x<1resenje}
    \end{equation}
    where $\mathrm{P}_{n}(x)$ are the polynomials defined in theorem \ref{T1}.
\end{theorem}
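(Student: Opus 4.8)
The plan is to reprise the proof of Theorem~\ref{T1}, changing only the basepoint of the $\delta$-expansion from $\delta=1$ to $\delta=0$. For $0\leqslant\mathrm{x}\leqslant1$ the right-hand side of (\ref{transc}) is non-positive, so $\hat{\mathrm{x}}=\mathrm{x}$ need not solve the equation when $\delta=1$; but at $\delta=0$ equation (\ref{transc}) reads $e^{\hat{\mathrm{x}}}(\hat{\mathrm{x}}-1)=0$, and since $g(\hat{\mathrm{x}}):=e^{\hat{\mathrm{x}}}(\hat{\mathrm{x}}-1)$ is strictly increasing on $[0,\infty)$ with $g(0)=-1$ and $g(1)=0$, the unique root there is $\hat{\mathrm{x}}(\mathrm{x},0)=1$. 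The same monotonicity of $g$, together with $-1\leqslant\delta e^{\mathrm{x}}(\mathrm{x}-1)\leqslant0$ on $0\leqslant\mathrm{x}\leqslant1$, $0\leqslant\delta\leqslant1$, shows that a root $\hat{\mathrm{x}}\in[0,1]$ exists and is unique throughout this range, so the function being expanded is well defined.

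Next I would differentiate (\ref{transc}) implicitly in $\delta$ at fixed $\mathrm{x}$; this reproduces the differential (\ref{diferencijal}), in particular $\partial_{\delta}\hat{\mathrm{x}}=e^{\mathrm{x}}(\mathrm{x}-1)\,e^{-\hat{\mathrm{x}}}/\hat{\mathrm{x}}$, and the induction behind (\ref{izvod}) carries over verbatim to give
\[
\frac{\partial^{n}\hat{\mathrm{x}}}{\partial\delta^{n}}=e^{n\mathrm{x}}(\mathrm{x}-1)^{n}\left(\frac{e^{-\hat{\mathrm{x}}}}{\hat{\mathrm{x}}}\frac{\de}{\de\hat{\mathrm{x}}}\right)^{n-1}\frac{e^{-\hat{\mathrm{x}}}}{\hat{\mathrm{x}}}.
\]
Setting $\delta=0$ is the substitution $\hat{\mathrm{x}}=1$ on the right; the nested operator at $\hat{\mathrm{x}}=1$ is exactly the Rodrigues expression (\ref{P_poli}) with its variable put equal to $1$, so it equals $(-1)^{n-1}e^{-n}\mathrm{P}_{n-1}(1)$. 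Using $(-1)^{n-1}(\mathrm{x}-1)^{n}=-(1-\mathrm{x})^{n}$ we get $\partial^{n}_{\delta}\hat{\mathrm{x}}\big|_{\delta=0}=-e^{n(\mathrm{x}-1)}(1-\mathrm{x})^{n}\mathrm{P}_{n-1}(1)$, and inserting this, together with $\hat{\mathrm{x}}(\mathrm{x},0)=1$, into the Taylor series $\hat{\mathrm{x}}=\hat{\mathrm{x}}(\mathrm{x},0)+\sum_{n\geqslant1}\frac{\delta^{n}}{n!}\partial^{n}_{\delta}\hat{\mathrm{x}}\big|_{\delta=0}$ reproduces (\ref{x<1resenje}) term by term.

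What needs real care is convergence, since (\ref{x<1resenje}) is so far only the formal Taylor series and must be shown to represent the root on the whole rectangle $0\leqslant\mathrm{x}\leqslant1$, $0\leqslant\delta\leqslant1$. For fixed $\mathrm{x}$ the radius of convergence in $\delta$ is at least $1$: $g$ is entire and $g'(\hat{\mathrm{x}})=e^{\hat{\mathrm{x}}}\hat{\mathrm{x}}$ vanishes only at $\hat{\mathrm{x}}=0$, so by the implicit function theorem the branch with $\hat{\mathrm{x}}(\mathrm{x},0)=1$ stays analytic in $\delta$ until $\hat{\mathrm{x}}$ could reach $0$, i.e. until $\delta=[e^{\mathrm{x}}(1-\mathrm{x})]^{-1}\geqslant1$. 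For uniform convergence in $\mathrm{x}$ (needed in the main text to exchange the sum with integrals and limits) I would run the Weierstrass criterion as in Theorem~\ref{T2}: $\mathrm{x}\mapsto e^{\mathrm{x}-1}(1-\mathrm{x})$ is decreasing on $[0,1]$, so $e^{n(\mathrm{x}-1)}(1-\mathrm{x})^{n}\leqslant e^{-n}$ and the $n$-th term is dominated by $\mathrm{P}_{n-1}(1)/(n!\,e^{n})$; using $\mathrm{P}_{n-1}(1)=n^{n-1}$ (which one verifies directly, e.g. $\mathrm{P}_{0}(1)=1$, $\mathrm{P}_{1}(1)=2$, $\mathrm{P}_{2}(1)=9$, $\mathrm{P}_{3}(1)=64$), Stirling gives the summable, $\mathrm{x}$-independent majorant $\mathrm{P}_{n-1}(1)/(n!\,e^{n})\sim(2\pi)^{-1/2}n^{-3/2}$. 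I expect this convergence bookkeeping, not the derivative calculation, to be the main obstacle; it is sharpest at the corner $\mathrm{x}=0$, $\delta=1$, where $\hat{\mathrm{x}}=0$, the radius of convergence is exactly $1$, and there is no slack.
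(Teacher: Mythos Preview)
Your argument is correct and follows essentially the same route as the paper: the paper packages the $\mathrm{x}$-dependence into $\tilde{\delta}=\delta\,e^{\mathrm{x}}(1-\mathrm{x})$ and Taylor-expands $\hat{\mathrm{x}}(\tilde{\delta})$ about $\tilde{\delta}=0$, which is exactly your expansion about $\delta=0$ up to the trivial rescaling $\tilde{\delta}=c(\mathrm{x})\delta$, and then evaluates the same Rodrigues operator at $\hat{\mathrm{x}}=1$. Your convergence discussion in fact anticipates (and slightly sharpens) the paper's separate Theorem~\ref{T5}: the paper uses the ratio test with Lemma~\ref{T6} to get convergence for $\delta<1$, whereas your Stirling bound $\mathrm{P}_{n-1}(1)/(n!\,e^{n})\sim(2\pi)^{-1/2}n^{-3/2}$ also covers the boundary $\delta=1$.
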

\begin{proof}
    If $0\leqslant\mathrm{x}\leqslant1$, then $0\leqslant e^{\mathrm{x}}(1-\mathrm{x})\leqslant1$, which means that we can choose $\Tilde{\delta}=\delta e^{\mathrm{x}}(1-\mathrm{x})$ as a small parameter, and expand the solution as a Taylor series with respect to $\Tilde{\delta}$ around $\Tilde{\delta}=0$, 
    \begin{equation}
        \hat{\mathrm{x}}(\Tilde{\delta})=\hat{\mathrm{x}}(0)+\sum_{n=1}^{\infty}\frac{\Tilde{\delta}^{n}}{n!}\frac{\de^{n}\hat{\mathrm{x}}}{\de\Tilde{\delta}^{n}}\bigg{|}_{\Tilde{\delta}=0}.\label{i0}
    \end{equation} The equation we are solving, in terms of $\hat{\mathrm{x}}=\hat{\mathrm{x}}(\Tilde{\delta})$, is given by:
    \begin{equation}
        e^{\hat{\mathrm{x}}(\Tilde{\delta})}(1-\hat{\mathrm{x}}(\Tilde{\delta}))=\Tilde{\delta}.\label{i1}
    \end{equation}
    With the help of the equation (\ref{i1}) the $n$-th derivative of the function $\hat{\mathrm{x}}(\Tilde{\delta})$ is given by
    \begin{equation}
        \frac{\de^{n}\hat{\mathrm{x}}}{\de\Tilde{\delta}^{n}}=(-1)^{n}\left(\frac{e^{-\hat{\mathrm{x}}}}{\hat{\mathrm{x}}}\frac{\de}{\de\hat{\mathrm{x}}}\right)^{n-1}\frac{e^{-\hat{\mathrm{x}}}}{\hat{\mathrm{x}}}=-\frac{e^{-n\hat{\mathrm{x}}}}{\hat{\mathrm{x}}^{n}}\mathrm{P}_{n-1}\left(\frac{1}{\hat{\mathrm{x}}}\right),
    \end{equation}
    where equation (\ref{P_poli}) has been used. Using the fact that $\hat{\mathrm{x}}=1$ when $\Tilde{\delta}=0$ we arrive at the following expression:
    \begin{equation}
        \frac{\de^{n}\hat{\mathrm{x}}}{\de\Tilde{\delta}^{n}}=-e^{-n}\mathrm{P}_{n-1}(1).
    \end{equation}
    Changing this result in formula (\ref{i0}) we find $\Tilde{\delta}=\delta e^{\mathrm{x}}(1-\mathrm{x})$ and we have derived formula (\ref{x<1resenje}).
\end{proof}
To be able to use this result (\ref{x<1resenje}), once again we need to prove that the series appearing in equation (\ref{x<1resenje}) is uniformly convergent.
\begin{theorem}\label{T5}
    The series $\sum_{n=1}^{\infty}\frac{\delta^{n}}{n!}\mathrm{P}_{n-1}(1)e^{n(x-1)}(1-x)^{n}$ is uniformly convergent when $0\leqslant x\leqslant1$. 
\end{theorem}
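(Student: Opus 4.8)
The plan is to prove uniform convergence on $[0,1]$ with the Weierstrass $M$-test, as was done for Theorem~\ref{T2}; the whole task then reduces to exhibiting an $x$-independent majorant $M_n$ of the $n$-th term whose sum is finite. I would first dispose of the $x$-dependence. Writing the $n$-th term as $\dfrac{\delta^{n}}{n!}\,\mathrm{P}_{n-1}(1)\,\big(e^{x-1}(1-x)\big)^{n}$ and using the elementary inequality $1-x\leq e^{-x}$ (equality only at $x=0$), one gets $0\leq e^{x-1}(1-x)\leq e^{-1}$ for $x\in[0,1]$, with the supremum $e^{-1}$ attained at $x=0$. Together with $\delta\leq1$, this bounds the $n$-th term in modulus by $M_n:=\dfrac{\mathrm{P}_{n-1}(1)}{n!\,e^{n}}$, uniformly in $x\in[0,1]$ (and in $\delta\in[0,1]$).

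The heart of the argument is then to pin down $\mathrm{P}_{n-1}(1)$ sharply enough that $\sum_{n}M_n<\infty$; a crude recursive estimate of the form $\mathrm{P}_{n-1}(1)\lesssim C^{n}(n-1)!$ is useless unless $C<e$, so the exact value is genuinely needed. I would extract it from the setup already used in the proof of Theorem~\ref{T4}: there the series in question is the analytic-at-zero solution of $e^{\hat{\mathrm{x}}}(1-\hat{\mathrm{x}})=\tilde{\delta}$ with $\hat{\mathrm{x}}(0)=1$. Putting $u=1-\hat{\mathrm{x}}$ turns this into $u\,e^{-u}=\tilde{\delta}/e$, that is $u=(\tilde{\delta}/e)\,e^{u}$, to which Lagrange inversion applies directly: $u=\sum_{n\geq1}\frac{1}{n!}\big[\tfrac{d^{n-1}}{du^{n-1}}e^{nu}\big]_{u=0}(\tilde{\delta}/e)^{n}=\sum_{n\geq1}\frac{n^{n-1}}{n!}(\tilde{\delta}/e)^{n}$ (the tree-function / Lambert-$W$ expansion). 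Since, by Theorem~\ref{T4}, one also has $1-\hat{\mathrm{x}}=\sum_{n\geq1}\frac{\tilde{\delta}^{n}}{n!\,e^{n}}\mathrm{P}_{n-1}(1)$, matching coefficients gives $\mathrm{P}_{n-1}(1)=n^{n-1}$, so $M_n=\dfrac{n^{n-1}}{n!\,e^{n}}$.

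It remains to check convergence of $\sum_{n\geq1}M_n$: by Stirling, $M_n\sim(2\pi)^{-1/2}n^{-3/2}$, so the series converges (indeed $\sum_{n}M_n=1$, being the value at $\tilde{\delta}=1$ of the solution of $u\,e^{-u}=\tilde{\delta}/e$). The Weierstrass criterion then delivers uniform convergence of $\sum_{n\geq1}\frac{\delta^{n}}{n!}\mathrm{P}_{n-1}(1)e^{n(x-1)}(1-x)^{n}$ on $0\leq x\leq1$. The only genuinely non-routine point is the identification $\mathrm{P}_{n-1}(1)=n^{n-1}$ through Lagrange inversion: without the exact value the elementary majorization is exactly borderline and fails to close, whereas once $n^{n-1}$ is in hand the rest is a one-line estimate.
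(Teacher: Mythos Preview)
Your proof is correct and follows essentially the same route as the paper: the Weierstrass $M$-test with the bound $e^{x-1}(1-x)\leq e^{-1}$, the identity $\mathrm{P}_{n-1}(1)=n^{n-1}$ (stated in the paper as Lemma~\ref{T6} without proof, which you derive cleanly via Lagrange inversion for the tree function), and then convergence of the majorant series. Your Stirling estimate $M_n\sim(2\pi)^{-1/2}n^{-3/2}$ is in fact slightly sharper than the paper's ratio test, since it also covers the boundary case $\delta=1$ where the ratio test is inconclusive.
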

\begin{proof}
    Since $0\leqslant x\leqslant1$, we have $e^{nx}(1-x)^{n}<1$, which means
    \begin{equation}
        \bigg{|}\frac{\delta^{n}}{n!}\mathrm{P}_{n-1}(1)e^{n(x-1)}(1-x)^{n}\bigg{|}<\frac{\delta^{n}}{n!}\mathrm{P}_{n-1}(1)e^{-n}\label{i2},
    \end{equation}
    $\forall 0\leqslant x\leqslant1$ and $\forall n\geqslant1$. The next step is to prove that the series $\sum_{n=1}^{\infty}\frac{\delta^{n}}{n!}\mathrm{P}_{n-1}(1)e^{-n}$ is convergent. To be able to do this, we need a closed form for $\mathrm{P}_{n-1}(1)$, which will be derived in the following lemma, without proof.
    \begin{lemma}\label{T6}
        The polynomials defined in theorem \ref{T1} have the following property: $\mathrm{P}_{n}(1)=(n+1)^{n},\hspace{1mm}\forall n\geqslant0$.
    \end{lemma}
    Now we can use the comparison test to check if the series $\sum_{n=1}^{\infty}\frac{\delta^{n}}{n!}\mathrm{P}_{n-1}(1)e^{-n}$ is convergent. Using lemma \ref{T6} we get
    \begin{equation}
        \frac{\frac{\delta^{n+1}}{(n+1)!}(n+1)^{n}e^{-(n+1)}}{\frac{\delta^{n}}{n!}n^{n-1}e^{-n}}=\frac{\delta}{e}\left(1+\frac{1}{n}\right)^{n-1}\underset{n\to\infty}{\longrightarrow}\delta,
    \end{equation}
    which means that the series is convergent when $\delta<1$. This requirement is satisfied by the theorem statement. Then, according to the Weierstrass criterion, the series $\sum_{n=1}^{\infty}\frac{\delta^{n}}{n!}\mathrm{P}_{n-1}(1)e^{n(x-1)}(1-x)^{n}$ converges uniformly.
\end{proof}
Sometimes functions of $\hat{\mathrm{x}}$ will appear in our calculations (for example, $\ln{\hat{\mathrm{x}}}$). Similar expressions to those of theorems \ref{T1} and \ref{T4} are needed for these functions.
\begin{theorem}\label{T7}
    The differentiable function $\mathrm{F}(\hat{\mathrm{x}}(\mathrm{x},\delta))$, where $\hat{\mathrm{x}}$ is a solution of the equation (\ref{transc}), when $\delta\leqslant1$, $\hat{\mathrm{x}}\geqslant0$ and $\mathrm{x}\geqslant1$, is given by the following functional series:
    \begin{equation}
    \mathrm{F}(\hat{\mathrm{x}})=\mathrm{F}(\mathrm{x})-\sum_{n=1}^{\infty}\frac{(1-\delta)^{n}}{n!}\left(1-\frac{1}{\mathrm{x}}\right)^{n}\mathrm{F}_{n-1}\left(\frac{1}{\mathrm{x}}\right),
\end{equation}
    where $\mathrm{F}_{n}(x)$ is an array of functions, defined by the following recurrence relation,
    \begin{equation}
    \mathrm{F}_{n}(x)=\left[n(x+1)+x^{2}\frac{\rm{d}}{\rm{d} x}\right]\mathrm{F}_{n-1}(x),
\end{equation}
    and the initial condition $\mathrm{F}_{0}(x)=-x^{2}\frac{\de \mathrm{F}(1/x)}{\de x}$.
\end{theorem}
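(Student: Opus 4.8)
The plan is to repeat, for a general differentiable $\mathrm{F}$, the argument used for $\hat{\mathrm{x}}$ itself in Theorem~\ref{T1}. Since $\hat{\mathrm{x}}(\mathrm{x},1)=\mathrm{x}$, I would first Taylor-expand $\mathrm{F}(\hat{\mathrm{x}})$ in powers of $\delta-1$ about $\delta=1$,
\begin{equation}
    \mathrm{F}(\hat{\mathrm{x}})=\mathrm{F}(\mathrm{x})+\sum_{n=1}^{\infty}\frac{(\delta-1)^{n}}{n!}\frac{\partial^{n}\mathrm{F}(\hat{\mathrm{x}})}{\partial\delta^{n}}\bigg{|}_{\delta=1},
\end{equation}
and then establish a closed form for $\partial_{\delta}^{n}\mathrm{F}(\hat{\mathrm{x}})$. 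From the differential~(\ref{diferencijal}) one has $\partial_{\delta}\hat{\mathrm{x}}=e^{\mathrm{x}}(\mathrm{x}-1)\,\frac{e^{-\hat{\mathrm{x}}}}{\hat{\mathrm{x}}}$ at fixed $\mathrm{x}$, so on any function of $\hat{\mathrm{x}}$ alone the operator $\partial_{\delta}$ acts as $e^{\mathrm{x}}(\mathrm{x}-1)\,\frac{e^{-\hat{\mathrm{x}}}}{\hat{\mathrm{x}}}\frac{\de}{\de\hat{\mathrm{x}}}$; since the prefactor is $\delta$-independent, a one-line induction (precisely the one behind~(\ref{izvod})) gives
\begin{equation}
    \frac{\partial^{n}\mathrm{F}(\hat{\mathrm{x}})}{\partial\delta^{n}}=e^{n\mathrm{x}}(\mathrm{x}-1)^{n}\left(\frac{e^{-\hat{\mathrm{x}}}}{\hat{\mathrm{x}}}\frac{\de}{\de\hat{\mathrm{x}}}\right)^{n-1}\left(\frac{e^{-\hat{\mathrm{x}}}}{\hat{\mathrm{x}}}\mathrm{F}'(\hat{\mathrm{x}})\right).
\end{equation}
Setting $\delta=1$ amounts to substituting $\hat{\mathrm{x}}=\mathrm{x}$ on the right.

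Next I would define the array $\mathrm{F}_{n}$ through the Rodrigues-type relation
\begin{equation}
    \mathrm{F}_{n-1}\left(\frac{1}{\mathrm{x}}\right)=(-1)^{n-1}\mathrm{x}^{n}e^{n\mathrm{x}}\left(\frac{e^{-\mathrm{x}}}{\mathrm{x}}\frac{\de}{\de\mathrm{x}}\right)^{n-1}\left(\frac{e^{-\mathrm{x}}}{\mathrm{x}}\mathrm{F}'(\mathrm{x})\right),
\end{equation}
so that writing $(\mathrm{x}-1)^{n}=\mathrm{x}^{n}(1-1/\mathrm{x})^{n}$ and $(\delta-1)^{n}=(-1)^{n}(1-\delta)^{n}$ turns the Taylor expansion into exactly the series claimed. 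For $n=1$ this reads $\mathrm{F}_{0}(1/\mathrm{x})=\mathrm{F}'(\mathrm{x})$, that is, $\mathrm{F}_{0}(x)=\mathrm{F}'(1/x)=-x^{2}\frac{\de\mathrm{F}(1/x)}{\de x}$, the stated initial condition. The recurrence follows exactly as (\ref{P_rekrentna_1/x}) and (\ref{P_rekurentna}) were obtained in Theorem~\ref{T1}: applying one more factor of $\frac{e^{-\mathrm{x}}}{\mathrm{x}}\frac{\de}{\de\mathrm{x}}$ to the right-hand side of the Rodrigues relation, carrying out the product-rule differentiation of $e^{-n\mathrm{x}}\mathrm{x}^{-n}\mathrm{F}_{n-1}(1/\mathrm{x})$, and re-expressing the outcome in the variable $x=1/\mathrm{x}$, reproduces the operator $n(x+1)+x^{2}\frac{\de}{\de x}$, giving $\mathrm{F}_{n}(x)=\left[n(x+1)+x^{2}\frac{\de}{\de x}\right]\mathrm{F}_{n-1}(x)$. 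When $\mathrm{F}$ is itself a polynomial the $\mathrm{F}_{n}$ reduce to the $\mathrm{P}_{n}$ of Theorem~\ref{T1}; in general they are merely an array of (non-polynomial) functions, which is why the statement does not call them polynomials.

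The step that will demand the most care is not any single computation — each transcribes a step of Theorem~\ref{T1} — but the bookkeeping of the extra $\mathrm{F}'$ factor: one must check it rides through the Rodrigues formula without altering the differential operator, and one must justify the termwise Taylor manipulations. The latter requires a uniform-convergence estimate parallel to Theorem~\ref{T2} (presumably a companion statement); since the functions $\mathrm{F}$ that actually occur in the main text (notably $\mathrm{F}=\ln$) have bounded $\mathrm{F}'$ on $\mathrm{x}\geqslant1$, a Weierstrass majorant of the same type as the $(1-\delta)^{n}/n$ used there still dominates, so uniform convergence — and hence the legitimacy of the series representation — follows by the same route. Everything else is a direct copy of the Theorem~\ref{T1} argument.
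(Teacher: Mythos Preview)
Your proposal is correct and follows essentially the same route as the paper: Taylor-expand $\mathrm{F}(\hat{\mathrm{x}})$ about $\delta=1$, derive the closed form for $\partial_{\delta}^{n}\mathrm{F}(\hat{\mathrm{x}})$ via the same chain-rule/induction used for (\ref{izvod}), package the result into a Rodrigues-type definition of $\mathrm{F}_{n-1}(1/\mathrm{x})$, and read off the recurrence and initial condition exactly as in Theorem~\ref{T1}. The paper's own proof is terser (it simply declares the argument ``analogous to that of theorem~\ref{T1}'' and writes down the key formulas), while you spell out the sign bookkeeping and anticipate the uniform-convergence caveat that the paper treats separately; but there is no substantive difference in approach.
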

\begin{proof}
    The proof is analogous to that of theorem \ref{T1}. We start by writing the Taylor series of the function $\mathrm{F}(\mathrm{x},\delta)$ around the point $\delta=1$:
    \begin{equation}
        \mathrm{F}(\hat{\mathrm{x}})=\mathrm{F}(\mathrm{x})+\sum_{n=1}^{\infty}\frac{1}{n!}\frac{\partial^{n}\mathrm{F}(\hat{\mathrm{x}})}{\partial\delta^{n}}\bigg{|}_{\delta=1}(\delta-1)^{n}.
    \end{equation}
    The $n$-th derivative is now given by the following expression:
    \begin{equation}
        \frac{\partial^{n}\mathrm{F}(\hat{\mathrm{x}})}{\partial\delta^{n}}=e^{n\mathrm{x}}(\mathrm{x}-1)^{n}\left(\frac{e^{-\hat{\mathrm{x}}}}{\hat{\mathrm{x}}}\frac{\de}{\de\hat{\mathrm{x}}}\right)^{n-1}\frac{e^{-\hat{\mathrm{x}}}}{\hat{\mathrm{x}}}\frac{\de\mathrm{F}(\hat{\mathrm{x}})}{\de\hat{\mathrm{x}}}.
    \end{equation}
    The same way as in the theorem \ref{T1} we define an array of functions:
    \begin{equation}
    \mathrm{F}_{n-1}\left(\frac{1}{\mathrm{x}}\right)=(-1)^{n-1}\mathrm{x}^{n}e^{n\mathrm{x}}\left(\frac{e^{-\mathrm{x}}}{\mathrm{x}}\frac{\de}{\de\mathrm{x}}\right)^{n-1}\frac{e^{-\mathrm{x}}}{\mathrm{x}}\frac{\de\mathrm{F}(\mathrm{x})}{\de\mathrm{x}}.
\end{equation}
    It is now obvious that the recurrence relation for the functions $\mathrm{F}_{n}$ would be the same as for the polynomials defined in the theorem \ref{T1}. The only difference would be the initial condition, which is now given by $\mathrm{F}_{0}(1/\mathrm{x})=\frac{\de\mathrm{F}(\mathrm{x})}{\de\mathrm{x}}$. Applying the substitution $\mathrm{x}\mapsto1/\mathrm{x}$, we arrive at the expression given in the statement of the theorem.
\end{proof}
Once again, the question of uniform convergence must be asked. This time, it depends on the form of the function $\mathrm{F}(\hat{\mathrm{x}})$. The first requirement is that $\mathrm{F}$ does not diverge when $\hat{\mathrm{x}}\to\infty$, since the radius of convergence would depend on $\mathrm{x}$, and the discussion of uniform convergence would become meaningless. From the proof of theorem \ref{T2} we can see that the necessary condition is:
\begin{equation}
    (\forall\mathrm{x}\geqslant1)(\exists n_{0})(\forall n\geqslant n_{0})\bigg{|}\left(1-\frac{1}{\mathrm{x}}\right)^{n}\mathrm{F}_{n-1}\left(\frac{1}{\mathrm{x}}\right)\bigg{|}<n!.
\end{equation}
\noindent The function $\ln{\hat{\mathrm{x}}}$ is the only one directly involved in the calculation. Since this is the case, we will take extra care of this function.
\begin{corollary}\label{T7.1}
    The function $\ln{\hat{\mathrm{x}}}(\mathrm{x},\delta)$ when $\delta\leqslant1$, $\hat{\mathrm{x}}\geqslant0$ and $\mathrm{x}\geqslant1$, is given by the following functional series:
    \begin{equation}
    \ln{\hat{\mathrm{x}}}=\ln{\mathrm{x}}-\sum_{n=1}^{\infty}\frac{(1-\delta)^{n}}{n!}\left(1-\frac{1}{\mathrm{x}}\right)^{n}\mathrm{Q}_{n-1}\left(\frac{1}{\mathrm{x}}\right),\label{ln_x>1_resenje}
\end{equation}
    where $\mathrm{Q}_{n}(x)$ are polynomials of degree $n+1$, defined by the following recurrence relation:
    \begin{equation}
    \mathrm{Q}_{n}(x)=\left[n(x+1)+x^{2}\frac{\rm{d}}{\mathrm{d} x}\right]\mathrm{Q}_{n-1}(x),\label{Q_rekurentna}
\end{equation}
    and the initial condition $\mathrm{Q}_{0}(x)=x$.
\end{corollary}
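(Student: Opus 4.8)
The statement is a specialization of Theorem \ref{T7} to the choice $\mathrm{F}(\cdot)=\ln(\cdot)$, supplemented by a degree count and a convergence check, so the plan is to carry out these three ingredients in turn. First I would confirm that the hypotheses of Theorem \ref{T7} hold for $\mathrm{F}=\ln$: the logarithm is smooth on $(0,\infty)$, and in the regime $\delta\leqslant1$, $\mathrm{x}\geqslant1$ the transcendental equation (\ref{transc}), combined with the monotonicity of $t\mapsto e^{t}(t-1)$ already exploited in Theorems \ref{T1} and \ref{T7}, forces $1\leqslant\hat{\mathrm{x}}\leqslant\mathrm{x}$, so $\hat{\mathrm{x}}$ never leaves the domain of $\ln$ and $\ln\hat{\mathrm{x}}$ is differentiable there.

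Next I would evaluate the initial datum prescribed by Theorem \ref{T7}: since $\mathrm{F}(1/x)=-\ln x$, one gets $\mathrm{F}_{0}(x)=-x^{2}\frac{\de}{\de x}(-\ln x)=x$, which coincides with the initial condition $\mathrm{Q}_{0}(x)=x$ in (\ref{Q_rekurentna}). As the recurrence for the auxiliary functions $\mathrm{F}_{n}$ in Theorem \ref{T7} is the very operator $n(x+1)+x^{2}\de/\de x$ that appears in (\ref{Q_rekurentna}), a one-line induction gives $\mathrm{F}_{n}=\mathrm{Q}_{n}$ for all $n$, and substituting this into the expansion furnished by Theorem \ref{T7} yields exactly (\ref{ln_x>1_resenje}). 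Then I would settle the degree claim by induction: $\mathrm{Q}_{0}(x)=x$ has degree $1$, and if $\mathrm{Q}_{n-1}$ is a polynomial of degree $n$ with leading coefficient $c_{n-1}\neq0$, then $n(x+1)\mathrm{Q}_{n-1}(x)$ and $x^{2}\mathrm{Q}_{n-1}'(x)$ are both of degree $n+1$ with leading coefficient $nc_{n-1}$, so $\mathrm{Q}_{n}$ has degree $n+1$ with leading coefficient $2nc_{n-1}\neq0$ for $n\geqslant1$; in particular the top-degree terms never cancel.

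The delicate part — and the step I expect to be the main obstacle — is justifying that (\ref{ln_x>1_resenje}) is a genuine, uniformly convergent representation on the whole range $\delta\leqslant1$, $\mathrm{x}\geqslant1$, rather than merely the formal Taylor series in $1-\delta$ about $\delta=1$. Here my plan is to mirror the proof of Theorem \ref{T2} for the polynomials $\mathrm{P}_{n}$: derive from (\ref{Q_rekurentna}) the recurrence for the coefficients of $\mathrm{Q}_{n}$, use it to show that $\left(1-1/\mathrm{x}\right)^{n}\mathrm{Q}_{n-1}\left(1/\mathrm{x}\right)$ is monotone in $\mathrm{x}$ on $[1,\infty)$ (the analogue of Lemma \ref{T3}), hence bounded above by its limit $\mathrm{Q}_{n-1}(0)$, and then bound $\mathrm{Q}_{n-1}(0)$ by a constant multiple of $n!$ (the analogue of the estimate $f_{n}(x)<(n-1)!$ used in Theorem \ref{T2}); a Weierstrass $M$-test against a majorant summable for $|1-\delta|<1$ then gives uniform convergence. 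Identifying the limit with $\ln\hat{\mathrm{x}}$ finally follows because $\hat{\mathrm{x}}$ is real-analytic in $\delta$ — the implicit function theorem applies to $e^{\hat{\mathrm{x}}}(\hat{\mathrm{x}}-1)-\delta e^{\mathrm{x}}(\mathrm{x}-1)=0$ since its $\hat{\mathrm{x}}$-derivative $\hat{\mathrm{x}}e^{\hat{\mathrm{x}}}$ is nonzero for $\hat{\mathrm{x}}>0$ — so its Taylor series, wherever it converges, reproduces the function and may be differentiated and integrated termwise. The genuinely nontrivial content is thus entirely in the one-variable coefficient estimate for $\mathrm{Q}_{n}$, exactly as it was for $\mathrm{P}_{n}$ in Lemma \ref{T3} and Theorem \ref{T2}.
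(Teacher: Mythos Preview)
Your core argument---apply Theorem \ref{T7} with $\mathrm{F}=\ln$, compute $\mathrm{F}_{0}(x)=-x^{2}\frac{\de}{\de x}\ln(1/x)=x=\mathrm{Q}_{0}(x)$, identify the recurrences, and verify by induction that the operator $n(x+1)+x^{2}\de/\de x$ raises the degree by one so that $\deg\mathrm{Q}_{n}=n+1$---is exactly the paper's proof. The paper's argument consists of only these three observations and stops; convergence is not treated as part of the corollary but is deferred to the separate Theorem \ref{T8}.

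On the convergence sketch you add, there is a concrete slip worth flagging. The recurrence gives $\mathrm{Q}_{n}(0)=n\,\mathrm{Q}_{n-1}(0)$ (the term $x^{2}\de/\de x$ contributes nothing at $x=0$), and since $\mathrm{Q}_{0}(0)=0$ one has $\mathrm{Q}_{n}(0)=0$ for all $n$. Hence $\left(1-1/\mathrm{x}\right)^{n}\mathrm{Q}_{n-1}(1/\mathrm{x})$ vanishes both at $\mathrm{x}=1$ and as $\mathrm{x}\to\infty$, so it is \emph{not} monotone on $[1,\infty)$ and your proposed bound ``by its limit $\mathrm{Q}_{n-1}(0)$'' is vacuous. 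The paper's Theorem \ref{T8} avoids this by proving monotonicity for the modified functions $f_{n}(\mathrm{x})=\mathrm{x}\left(1-1/\mathrm{x}\right)^{n+1}\mathrm{Q}_{n-1}(1/\mathrm{x})$, whose limit at infinity is the linear coefficient of $\mathrm{Q}_{n-1}$; the recurrence yields that this coefficient equals $(n-1)!$, which then feeds the Weierstrass $M$-test. So your overall strategy for convergence is the paper's, but the precise choice of auxiliary function needs this adjustment.
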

\begin{proof}
    This corollary is a direct consequence of theorem \ref{T7}. The initial condition is obvious since $-x^{2}(\ln{(1/x)})'=x$. We only need to prove that $\mathrm{Q}_{n}$ are polynomials of degree $n+1$, which follows directly from the recurrence relation (\ref{Q_rekurentna}) since it increases the degree of the polynomial by one, and the degree of the zeroth polynomial is 1. This concludes the proof.
\end{proof}
\begin{theorem}\label{T8}
    The series $\sum_{n=1}^{\infty}\frac{(1-\delta)^{n}}{n!}f_{n}(x)$ is uniformly convergent when $\mathrm{x}\geqslant1$. The functional array $f_{n}(x)$ is defined by the following equation:
    \begin{equation}
        f_{n}(x)=x\left(1-\frac{1}{x}\right)^{n+1}\mathrm{Q}_{n-1}\left(\frac{1}{x}\right).
    \end{equation}
\end{theorem}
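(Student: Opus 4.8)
The plan is to mirror the proof of Theorem~\ref{T2}: (i) show that each $f_n$ is nondecreasing on $[1,\infty)$, (ii) identify $\lim_{x\to\infty}f_n(x)$, and (iii) conclude with the Weierstrass criterion. Throughout it is convenient to write $\mathrm{Q}_n(x)=\sum_{k}q_n^{(k)}x^{k}$ and $u=1/x$.

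\textbf{Monotonicity lemma.} First I would establish the analog of Lemma~\ref{T3}, namely that $f_n$ is a nondecreasing function of $x$ for $x\geqslant1$. Differentiating $f_n(x)=x\left(1-\tfrac1x\right)^{n+1}\mathrm{Q}_{n-1}\!\left(\tfrac1x\right)$ and using the recurrence (\ref{Q_rekurentna}) to eliminate $\mathrm{Q}_{n-1}'$, the derivative collapses to
\[
\frac{\de f_n}{\de x}=x\left(1-\tfrac1x\right)^{n}\left[\left(n+\tfrac1x\right)\mathrm{Q}_{n-1}\!\left(\tfrac1x\right)-\left(1-\tfrac1x\right)\mathrm{Q}_{n}\!\left(\tfrac1x\right)\right],
\]
so that $f_n'\geqslant0$ on $x\geqslant1$ is equivalent to an inequality of the same type as (\ref{j1}), $\left(n+u\right)\mathrm{Q}_{n-1}(u)\geqslant\left(1-u\right)\mathrm{Q}_{n}(u)$ for $u\in(0,1]$. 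The coefficients $q_n^{(k)}$ obey the same recurrence as the $a_n^{(k)}$ of (\ref{rel1})--(\ref{rel4}), $q_n^{(k)}=(n+k-1)q_{n-1}^{(k-1)}+nq_{n-1}^{(k)}$, but now with seed $q_0^{(0)}=0$, $q_0^{(1)}=1$; from this one reads off at once $q_n^{(0)}=0$, $q_n^{(1)}=n!$, and $q_n^{(k)}\geqslant0$. Expanding $(n+u)\mathrm{Q}_{n-1}(u)-(1-u)\mathrm{Q}_n(u)$ one finds the $u^{0}$ and $u^{1}$ coefficients vanish, while the coefficient of $u^{m+2}$ equals $(n+m)q_{n-1}^{(m)}-m\,q_{n-1}^{(m+1)}$ for $m=0,\dots,n$; this is $0$ at $m=0$, is $2n\,q_{n-1}^{(n)}\geqslant0$ at $m=n$, and for $1\leqslant m\leqslant n-1$ reduces uniformly to the single family $(n+m)q_{n-1}^{(m)}\geqslant m\,q_{n-1}^{(m+1)}$, which is the $\mathrm{Q}$-analog of (\ref{j4}) and is to be proven by the same double induction as in (\ref{j5})--(\ref{j6}).

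\textbf{The limit and the Weierstrass bound.} Since $q_{n-1}^{(0)}=0$ and $q_{n-1}^{(1)}=(n-1)!$, and $x\left(1-\tfrac1x\right)^{n+1}=(x-1)^{n+1}/x^{n}$, we get $\lim_{x\to\infty}f_n(x)=(n-1)!$, exactly as in (\ref{j7}). The monotonicity lemma then gives $0\leqslant f_n(x)\leqslant(n-1)!$ for all $x\geqslant1$, hence $\bigl|\tfrac{(1-\delta)^{n}}{n!}f_n(x)\bigr|\leqslant\tfrac{(1-\delta)^{n}}{n}$. The numerical series $\sum_n\tfrac{(1-\delta)^{n}}{n}$ converges because $0\leqslant1-\delta<1$ (comparison test, as in Theorem~\ref{T2}), so the Weierstrass criterion yields uniform convergence of $\sum_n\tfrac{(1-\delta)^{n}}{n!}f_n(x)$ on $x\geqslant1$.

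\textbf{Main obstacle.} The hard part is the coefficient-level double induction in the monotonicity lemma. Although $\mathrm{Q}_n$ satisfies the same coefficient recurrence as $\mathrm{P}_n$, it has degree $n+1$ (one more than $\mathrm{P}_n$) and its seed $q_0^{(1)}=1$ differs from $a_0^{(0)}=1$, so the extreme coefficients have different closed forms (in particular $a_n^{(n)}=(2n-1)!!$ is replaced by another product), and Lemma~\ref{T3} cannot be quoted verbatim; the bookkeeping of (\ref{j2})--(\ref{j6}) must be redone for the $q_n^{(k)}$, even though its logical structure is unchanged.
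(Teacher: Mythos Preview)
Your proposal is correct and follows essentially the same route as the paper: prove an analog of Lemma~\ref{T3} (monotonicity of $f_n$ on $[1,\infty)$), read off $\lim_{x\to\infty}f_n(x)=(n-1)!$ from $q_{n-1}^{(0)}=0$, $q_{n-1}^{(1)}=(n-1)!$, and then apply the Weierstrass $M$-test with $M_n=(1-\delta)^n/n$. The paper's own proof is only a sketch that defers to Lemma~\ref{T3} and Theorem~\ref{T2}; your write-up simply supplies the bookkeeping that the paper omits, and your identification of the ``main obstacle'' (the coefficient-level induction must be rerun for the $q_n^{(k)}$ because the seed and degree of $\mathrm{Q}_n$ differ from $\mathrm{P}_n$) is exactly the point.
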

\begin{proof}
    This proof will not be as detailed as the previous proofs since it is based on the same ideas. By mathematical induction, it is easy to prove that the functions $f_{n}(x)$ are monotonically increasing, the same way as in lemma \ref{T3}. This fact implies that the functions $f_{n}(x)$ take their maximum value when $x\to\infty$. Then, the recurrence relations (\ref{Q_rekurentna}) tell us that the maximal value is $(n-1)!$. The rest of the proof is exactly the same as in theorem \ref{T2}.
\end{proof}
\noindent Now we examine the $x<1$ case.
\begin{theorem}\label{T9}
    The function $\ln{\hat{\mathrm{x}}}(\mathrm{x},\delta)$, when $\delta\leqslant1$, $\hat{\mathrm{x}}\geqslant0$ and $0\leqslant\mathrm{x}\leqslant1$ is given by the following functional series:
    \begin{equation}
        \ln{\hat{\mathrm{x}}}=-\sum_{n=1}^{\infty}\frac{\delta^{n}}{n!}\mathrm{Q}_{n-1}(1)e^{n(\mathrm{x}-1)}(1-\mathrm{x})^{n},\label{ln_x<1resenje}
    \end{equation}
    where $\mathrm{Q}_{n}(x)$ are the polynomials defined in corollary \ref{T7.1}.
\end{theorem}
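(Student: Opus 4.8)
The plan is to follow the proof of Theorem \ref{T4} essentially verbatim, with the role of the identity function taken over by $\ln$, and with the polynomials $\mathrm{Q}_{n}$ of Corollary \ref{T7.1} (whose Rodrigues-type construction and initial datum $\mathrm{Q}_{0}(x)=x$ come from Theorem \ref{T7} with $\mathrm{F}=\ln$) replacing the $\mathrm{P}_{n}$. Since $0\leqslant\mathrm{x}\leqslant1$, the quantity $\Tilde{\delta}=\delta e^{\mathrm{x}}(1-\mathrm{x})$ satisfies $0\leqslant\Tilde{\delta}\leqslant1$ and serves as the small parameter; equation (\ref{transc}) then reads $e^{\hat{\mathrm{x}}(\Tilde{\delta})}(1-\hat{\mathrm{x}}(\Tilde{\delta}))=\Tilde{\delta}$, with $\hat{\mathrm{x}}(0)=1$. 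First I would Taylor-expand $\ln{\hat{\mathrm{x}}}$ in $\Tilde{\delta}$ about $\Tilde{\delta}=0$; because $\hat{\mathrm{x}}(0)=1$, the constant term $\ln{\hat{\mathrm{x}}(0)}=0$ drops out, so the series begins at $n=1$, already matching the shape of (\ref{ln_x<1resenje}).

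The computational core is to evaluate $\partial^{n}_{\Tilde{\delta}}\ln{\hat{\mathrm{x}}}$ at $\Tilde{\delta}=0$. Differentiating the defining relation gives $\de\hat{\mathrm{x}}/\de\Tilde{\delta}=-e^{-\hat{\mathrm{x}}}/\hat{\mathrm{x}}$, so $\de/\de\Tilde{\delta}$ acts on functions of $\hat{\mathrm{x}}$ as the operator $-(e^{-\hat{\mathrm{x}}}/\hat{\mathrm{x}})\,\de/\de\hat{\mathrm{x}}$. Iterating this $n$ times on $\ln{\hat{\mathrm{x}}}$ and using $(\ln{\hat{\mathrm{x}}})'=1/\hat{\mathrm{x}}$ produces
\begin{equation}
    \frac{\de^{n}\ln{\hat{\mathrm{x}}}}{\de\Tilde{\delta}^{n}}=(-1)^{n}\left(\frac{e^{-\hat{\mathrm{x}}}}{\hat{\mathrm{x}}}\frac{\de}{\de\hat{\mathrm{x}}}\right)^{n-1}\frac{e^{-\hat{\mathrm{x}}}}{\hat{\mathrm{x}}^{2}}=-\frac{e^{-n\hat{\mathrm{x}}}}{\hat{\mathrm{x}}^{n}}\mathrm{Q}_{n-1}\!\left(\frac{1}{\hat{\mathrm{x}}}\right),
\end{equation}
where the second equality is exactly the Rodrigues formula for $\mathrm{Q}_{n-1}$ supplied by Theorem \ref{T7} with $\mathrm{F}=\ln$ (the initial datum being $\mathrm{Q}_{0}(x)=-x^{2}(\ln(1/x))'=x$). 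Setting $\Tilde{\delta}=0$, i.e. $\hat{\mathrm{x}}=1$, gives $\partial^{n}_{\Tilde{\delta}}\ln{\hat{\mathrm{x}}}|_{0}=-e^{-n}\mathrm{Q}_{n-1}(1)$. Substituting into the Taylor series and restoring $\Tilde{\delta}^{n}=\delta^{n}e^{n\mathrm{x}}(1-\mathrm{x})^{n}$, so that $\Tilde{\delta}^{n}e^{-n}=\delta^{n}e^{n(\mathrm{x}-1)}(1-\mathrm{x})^{n}$, reproduces (\ref{ln_x<1resenje}).

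The only genuine obstacle, I expect, is justifying that this Taylor expansion actually represents $\ln{\hat{\mathrm{x}}}$ throughout the region $0\leqslant\mathrm{x}\leqslant1$, $\delta\leqslant1$, rather than being a merely formal series — the analogue, for the present case, of Theorem \ref{T5}. The inverse function $\hat{\mathrm{x}}(\Tilde{\delta})$ of $\hat{\mathrm{x}}\mapsto e^{\hat{\mathrm{x}}}(1-\hat{\mathrm{x}})$ has its nearest singularity at $\Tilde{\delta}=1$ (where the derivative $-\hat{\mathrm{x}}e^{\hat{\mathrm{x}}}$ vanishes), so the radius of convergence in $\Tilde{\delta}$ is exactly $1$ and we stay within it whenever $\delta<1$. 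To upgrade this to uniform convergence in $\mathrm{x}$, I would apply the Weierstrass $M$-test exactly as in Theorem \ref{T5}: using $e^{n\mathrm{x}}(1-\mathrm{x})^{n}\leqslant1$ on $[0,1]$ one bounds the $n$-th term by $\delta^{n}\mathrm{Q}_{n-1}(1)e^{-n}/n!$, and then one must show $\sum_{n}\delta^{n}\mathrm{Q}_{n-1}(1)e^{-n}/n!$ converges. This last step needs a growth estimate for $\mathrm{Q}_{n}(1)$ derived from the recurrence (\ref{Q_rekurentna}) — the counterpart of $\mathrm{P}_{n}(1)=(n+1)^{n}$ from Lemma \ref{T6} — after which the ratio test gives a limiting ratio $\delta<1$, just as in the $\mathrm{P}$ case. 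Pinning down that estimate for $\mathrm{Q}_{n}(1)$ (most cleanly via an explicit closed form, or at least the bound $\mathrm{Q}_{n}(1)=\mathcal{O}((n+1)^{n})$) is the technical heart of the argument; the rest is bookkeeping identical to Theorems \ref{T4} and \ref{T5}.
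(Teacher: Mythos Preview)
Your proposal is correct and follows essentially the same approach as the paper, which simply states that the proof mirrors that of Theorem~\ref{T4} with the polynomials $\mathrm{Q}_{n}$ in place of $\mathrm{P}_{n}$. The convergence argument you outline is handled separately in the paper as Theorem~\ref{T10}, where the needed growth estimate is supplied by the closed form $\mathrm{Q}_{n-1}(1)=(n-1)!\sum_{k=0}^{n-1}n^{k}/k!$ of Lemma~\ref{T11}.
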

\begin{proof}
    The proof of this theorem is essentially the same as the proof of theorem \ref{T4}. The only difference is the appearance of the polynomials $\mathrm{Q}_{n}$ instead of the polynomials $\mathrm{P}_{n}$.
\end{proof}
\noindent Finally, we need to prove the uniform convergence of the series appearing in equation \ref{ln_x<1resenje}.
\begin{theorem}\label{T10}
    When $0\leqslant x\leqslant1$, the functional series $\sum_{n=1}^{\infty}\frac{\delta^{n}}{n!}\mathrm{Q}_{n-1}(1)e^{n(x-1)}(1-x)^{n}$ is uniformly convergent. 
\end{theorem}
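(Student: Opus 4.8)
The plan is to mirror the proof of Theorem~\ref{T5} and invoke the Weierstrass $M$-test, the only genuinely new ingredient being a suitable bound on $\mathrm{Q}_{n-1}(1)$. First I would record two structural facts about the polynomials $\mathrm{Q}_{n}$ of Corollary~\ref{T7.1}: they have only non-negative coefficients and $\deg\mathrm{Q}_{n}=n+1$. Both follow by an immediate induction from the recurrence~(\ref{Q_rekurentna}), since $\mathrm{Q}_{0}(x)=x$ and each of $f\mapsto n(x+1)f$ and $f\mapsto x^{2}f'$ preserves non-negativity of coefficients and raises the degree by one; in particular $\mathrm{Q}_{n-1}(1)>0$. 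Next I would use the elementary inequality $e^{x}(1-x)\le1$ for $0\le x\le1$ (with equality only at $x=0$; one just checks $\frac{\de}{\de x}[e^{x}(1-x)]=-xe^{x}\le0$), which gives $0\le e^{n(x-1)}(1-x)^{n}\le e^{-n}$ for all $n\ge1$ and all $x\in[0,1]$, hence the uniform estimate
\begin{equation}
    \left|\frac{\delta^{n}}{n!}\mathrm{Q}_{n-1}(1)\,e^{n(x-1)}(1-x)^{n}\right|\le\frac{\delta^{n}}{n!}\mathrm{Q}_{n-1}(1)\,e^{-n},\qquad 0\le x\le1 .
\end{equation}
By the Weierstrass criterion the claimed uniform convergence then follows from convergence of the numerical series $\sum_{n\ge1}\frac{\delta^{n}}{n!}\mathrm{Q}_{n-1}(1)e^{-n}$ (for $\delta<1$, as in Theorem~\ref{T5}).

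The substantive step is therefore to estimate $\mathrm{Q}_{n}(1)$, in the spirit of Lemma~\ref{T6} (which played exactly this role for $\mathrm{P}_{n}$). Evaluating~(\ref{Q_rekurentna}) at $x=1$ gives $\mathrm{Q}_{n}(1)=2n\,\mathrm{Q}_{n-1}(1)+\mathrm{Q}_{n-1}'(1)$; writing $\mathrm{Q}_{n-1}(x)=\sum_{k}b^{(k)}_{n-1}x^{k}$ with $b^{(k)}_{n-1}\ge0$ and top degree $n$, one has $\mathrm{Q}_{n-1}'(1)=\sum_{k}k\,b^{(k)}_{n-1}\le n\,\mathrm{Q}_{n-1}(1)$, whence the crude bound $\mathrm{Q}_{n}(1)\le3n\,\mathrm{Q}_{n-1}(1)$ and $\mathrm{Q}_{n}(1)\le3^{n}n!$. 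This already settles the theorem for $\delta<e/3$, but not on the full range, so I would sharpen it along one of two lines. The cleaner route is a comparison with the polynomials of Theorem~\ref{T1}: since $\mathrm{Q}_{n}$ and $\mathrm{P}_{n}$ are generated by iterating the \emph{same} operator $n(x+1)+x^{2}\frac{\de}{\de x}$, one expects an inequality of the form $\mathrm{Q}_{n}(1)\le c\,(n+1)\,\mathrm{P}_{n}(1)$, which I would try to establish by induction (tracking $\mathrm{Q}_{n}(1)$ together with $\mathrm{Q}_{n}'(1)$); granting it, $\sum_{n}\frac{\delta^{n}}{n!}\mathrm{Q}_{n-1}(1)e^{-n}$ is dominated, up to the harmless polynomial factor, by $\sum_{n}\frac{\delta^{n}}{n!}\mathrm{P}_{n-1}(1)e^{-n}$, whose convergence for $\delta<1$ was obtained inside the proof of Theorem~\ref{T5} via Lemma~\ref{T6} and the ratio test, and the extra $(n+1)$ factor does not affect the ratio limit $\delta<1$. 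Alternatively, one can work directly from the Rodrigues-type formula $\mathrm{Q}_{n-1}(1/\mathrm{x})=(-1)^{n-1}\mathrm{x}^{n}e^{n\mathrm{x}}\big(\tfrac{e^{-\mathrm{x}}}{\mathrm{x}}\tfrac{\de}{\de\mathrm{x}}\big)^{n-1}\tfrac{e^{-\mathrm{x}}}{\mathrm{x}^{2}}$ obtained in the proof of Theorem~\ref{T7}/Corollary~\ref{T7.1}, evaluate at $\mathrm{x}=1$, and extract the leading asymptotics of the iterated operator, which should yield $\mathrm{Q}_{n}(1)/\big((n+1)\mathrm{Q}_{n-1}(1)\big)\to e$ and hence, by the ratio test, convergence of the numerical series for $\delta<1$.

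The main obstacle is precisely this sharp control of $\mathrm{Q}_{n}(1)$: in contrast to $\mathrm{P}_{n}(1)=(n+1)^{n}$ there is no evident closed form, and the naive coefficient bound loses a factor and only covers $\delta<e/3$, so one genuinely needs either the comparison inequality $\mathrm{Q}_{n}(1)\le c\,(n+1)\,\mathrm{P}_{n}(1)$ or the asymptotic analysis of the iterated differential operator. Once that ingredient is in place, the remainder of the argument — the bound $e^{n(x-1)}(1-x)^{n}\le e^{-n}$, the majorant series, and the Weierstrass criterion — is a line-by-line transcription of the proof of Theorem~\ref{T5}, now with $\mathrm{Q}_{n}$ in place of $\mathrm{P}_{n}$.
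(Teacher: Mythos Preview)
Your framework is exactly the paper's: bound $e^{n(x-1)}(1-x)^{n}\le e^{-n}$ uniformly on $[0,1]$, then apply the Weierstrass criterion to the numerical majorant $\sum_{n}\frac{\delta^{n}}{n!}\mathrm{Q}_{n-1}(1)e^{-n}$. The only point of divergence is the estimate on $\mathrm{Q}_{n-1}(1)$, and here the paper has a cleaner route that you are missing. It states (as Lemma~\ref{T11}, without proof, exactly parallel to Lemma~\ref{T6}) the closed form
\[
\mathrm{Q}_{n-1}(1)=(n-1)!\sum_{k=0}^{n-1}\frac{n^{k}}{k!},
\]
from which $\mathrm{Q}_{n-1}(1)<(n-1)!\,e^{n}$ is immediate, and the majorant collapses to $\sum_{n}\delta^{n}/n$, convergent for $\delta<1$. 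So in fact there \emph{is} an explicit closed form for $\mathrm{Q}_{n}(1)$, contrary to your remark, and it makes the endgame a one-liner.

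Your two proposed alternatives are both plausible but noticeably heavier. The comparison $\mathrm{Q}_{n}(1)\le c\,(n+1)\mathrm{P}_{n}(1)$ is true (and even with $c=1$, as one can check numerically and asymptotically), but proving it by the induction you sketch requires carrying $\mathrm{Q}_{n}'(1)$ alongside $\mathrm{Q}_{n}(1)$ and controlling it against $\mathrm{P}_{n}'(1)$, which is real work you have not done. The Rodrigues-formula asymptotics is vaguer still. Neither route has a genuine obstruction, but both are more circuitous than simply quoting the closed form above; once you have that, the rest of your write-up is line-for-line the paper's proof.
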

\begin{proof}
    As in theorem \ref{T5}, we can write:
    \begin{equation}
        \bigg{|}\frac{\delta^{n}}{n!}\mathrm{Q}_{n-1}(1)e^{n(x-1)}(1-x)^{n}\bigg{|}<\frac{\delta^{n}}{n!}\mathrm{Q}_{n-1}(1)e^{-n}.\label{k0}
    \end{equation}
    Now we need the closed form of $\mathrm{Q}_{n-1}(1)$, which will be given by the following lemma without proof.
    \begin{lemma}\label{T11}
        The polynomials defined in corollary \ref{T7.1} have the following property: $(\forall n\geqslant1)\hspace{1mm}\mathrm{Q}_{n-1}(1)=(n-1)!\sum_{k=0}^{n-1}\frac{n^{k}}{k!}$.
    \end{lemma}
    \noindent Using lemma \ref{T11}, the following inequality holds:
    \begin{equation}
        \mathrm{Q}_{n-1}(1)=(n-1)!\sum_{k=1}^{n-1}\frac{n^{k}}{k!}<(n-1)!e^{n}.\label{k1}
    \end{equation}
    Using (\ref{k1}), the inequality (\ref{k0}) becomes:
    \begin{equation}
        \bigg{|}\frac{\delta^{n}}{n!}\mathrm{Q}_{n-1}(1)e^{n(x-1)}(1-x)^{n}\bigg{|}<\frac{\delta^{n}}{n}.
    \end{equation}
    The next step would be to prove the convergence of the series $\sum_{n=1}^{\infty}\frac{\delta^{n}}{n}$. This series is convergent by the comparison test when $\delta<1$, which is satisfied.
\end{proof}
\section{More suitable form of the solution}\label{app_C}
Here we will rewrite equations (\ref{x_resenje_>}) and (\ref{x_resenje_<}) in a more suitable form so that they are well defined when $\mathrm{x}=1+\mathcal{O}(\tilde{\varepsilon})$ or $\hat{\mathrm{x}}=1+\mathcal{O}(\tilde{\varepsilon})$, where we have defined $\tilde{\varepsilon}=\frac{\varepsilon}{(\lambda a)^{2}}$. This is important since many hypersurfaces of interest are given by expressions of this kind (apparent horizon, horizon, island, QES, etc.). Looking at the equations of motion (\ref{ppmm_q}-\ref{chi2_q}) and the way they are solved (\ref{jna_pp}-\ref{theta_sol}) one can expect that the expressions for $\partial_{\pm}\mathrm{x}$ (\ref{theta_izvod_plus}-\ref{theta_izvod_minus}) are well defined when $\mathrm{x}=1+\mathcal{O}(\tilde{\varepsilon})$ since they are derived by simple integration of the original equations of motion. Indeed, this is the case:
\begin{widetext}
    \begin{align}
        \partial_{+}\mathrm{x}&=\frac{1}{2a}\left\{1-\frac{1+\frac{\tilde{\varepsilon}}{4}\ln{\delta}}{\sqrt{\mathrm{x}^{2}-\tilde{\varepsilon}}}\left[1-\frac{\tilde{\varepsilon}}{8}\left(2(\mathrm{x}-2)\ln{\mathrm{x}}-8+5\mathrm{x}+\frac{3}{\mathrm{x}}-(\mathrm{x}-1)\sum_{n=1}^{\infty}\frac{\delta^{n}}{n!}\left(\frac{\mathrm{x}-1}{\mathrm{x}}\frac{\mathrm{d}\mathrm{S}_{n}^{<}}{\mathrm{dx}}-n\mathrm{S}_{n}^{<}\right)\right)\right]\right\},\label{d_+x}\\
        \partial_{-}\mathrm{x}&=-\frac{1}{2a\mathrm{F}^{-}}\left\{1-\frac{1+\frac{\tilde{\varepsilon}}{4}\ln{\delta}}{\sqrt{\mathrm{x}^{2}-\tilde{\varepsilon}}}\left[1-\frac{\tilde{\varepsilon}}{8}\left(2(\mathrm{x}-2)\ln{\mathrm{x}}-2(\mathrm{x}-1)\ln{(1-\mathrm{x})}+1-2\mathrm{x}+\frac{3}{\mathrm{x}}-\frac{(\mathrm{x}-1)^{2}}{\mathrm{x}}\sum_{n=1}^{\infty}\frac{\delta^{n}}{n!}\frac{\mathrm{d}\mathrm{S}_{n}^{<}}{\mathrm{dx}}\right)\right]\right\}.\label{d_-x}
    \end{align}
\end{widetext}
This suggests that the issue arises during the execution of the final integration (\ref{theta_sol_1}-\ref{theta_sol}). Note that when taking the limit $\delta\to1$, both equations (\ref{d_+x}) and (\ref{d_-x}) reduce to the exact solutions (\ref{x_izvod_delta=1}). Now we define a new coordinate $\mathrm{y}=\frac{\mathrm{x}}{\tilde{a}}$, where $\tilde{a}=1+\frac{\tilde{\varepsilon}}{4}\ln{\delta}$, so that $a\tilde{a}$ can be seen as a quantum-corrected constant $a$. Applying this, equations (\ref{d_+x}-\ref{d_-x}) become:
\begin{widetext}
    \begin{align}
        \partial_{+}\mathrm{x}&=\frac{1}{2a}\left\{1-\frac{1}{\sqrt{\mathrm{y}^{2}-\tilde{\varepsilon}}}\left[1-\frac{\tilde{\varepsilon}}{8}\left(2(\mathrm{y}-2)\ln{\mathrm{y}}-8+5\mathrm{y}+\frac{3}{\mathrm{y}}-(\mathrm{y}-1)\sum_{n=1}^{\infty}\frac{\delta^{n}}{n!}\left(\frac{\mathrm{y}-1}{\mathrm{y}}\frac{\mathrm{d}\mathrm{S}_{n}^{<}}{\mathrm{dy}}-n\mathrm{S}_{n}^{<}\right)\right)\right]\right\},\label{d_+xy}\\
        \partial_{-}\mathrm{x}&=-\frac{1}{2a\mathrm{F}^{-}}\left\{1-\frac{1}{\sqrt{\mathrm{y}^{2}-\tilde{\varepsilon}}}\left[1-\frac{\tilde{\varepsilon}}{8}\left(2(\mathrm{y}-2)\ln{\mathrm{y}}-2(\mathrm{y}-1)\ln{(1-\mathrm{y})}+1-2\mathrm{y}+\frac{3}{\mathrm{y}}-\frac{(\mathrm{y}-1)^{2}}{\mathrm{y}}\sum_{n=1}^{\infty}\frac{\delta^{n}}{n!}\frac{\mathrm{d}\mathrm{S}_{n}^{<}}{\mathrm{dy}}\right)\right]\right\}.\label{d_-xy}
    \end{align}
\end{widetext}
For a moment let us exclude the $\mathcal{O}(\tilde{\varepsilon})$ part from the square brackets within (\ref{d_+xy}) and define: 
\begin{equation}
    \partial_{+}\mathrm{x}^{(0)}=\frac{1}{2a}\left(1-\frac{1}{\sqrt{\mathrm{y}^{2}-\tilde{\varepsilon}}}\right).\label{d_pmx^0}
\end{equation}
And then, let us define a function $\mathcal{J}(\mathrm{x})$ so that it satisfies:
\begin{equation}
    \tilde{a}\mathcal{J}(\mathrm{y})=\frac{\sigma^{+}}{2a}.\label{J_deff}
\end{equation}
After taking the derivative of (\ref{J_deff}) with respect to $\sigma^{+}$ and substituting the expression (\ref{d_pmx^0}) one arrives at the following differential equation for $\mathcal{J}$:
\begin{widetext}
    \begin{equation}
        \frac{\mathrm{d}\mathcal{J}}{\mathrm{d\mathrm{y}}}-\frac{\tilde{\varepsilon}}{4}\frac{\sqrt{\mathrm{y}^{2}-\tilde{\varepsilon}}}{(1+\frac{\tilde{\varepsilon}}{4}\mathrm{y})\sqrt{\mathrm{y}^{2}-\tilde{\varepsilon}}-1}\mathcal{J}=\frac{\sqrt{\mathrm{y}^{2}-\tilde{\varepsilon}}}{(1+\frac{\tilde{\varepsilon}}{4}\mathrm{y})\sqrt{\mathrm{y}^{2}-\tilde{\varepsilon}}-1}.\label{J_jednacina}
    \end{equation}
\end{widetext}
The solution to equation (\ref{J_jednacina}) is then given by:
\begin{equation}
    \mathcal{J}(\mathrm{y})=-\frac{4}{\tilde{\varepsilon}}+S\exp{\left(\frac{\tilde{\varepsilon}}{4}\tilde{f}(\mathrm{y})\right)},\label{J_resenje}
\end{equation}
where $S$ is an unknown constant that needs to be determined, and $\tilde{f}(\mathrm{y})$ is given by:
\begin{equation}
    \tilde{f}(\mathrm{y})=\int\frac{\sqrt{\mathrm{y}^{2}-\tilde{\varepsilon}}}{(1+\frac{\tilde{\varepsilon}}{4}\mathrm{y})\sqrt{\mathrm{y}^{2}-\tilde{\varepsilon}}-1}\mathrm{dy}.\label{f_tilde_deff}
\end{equation}
After making a substitution $\mathrm{z}=\sqrt{\frac{\mathrm{y}-\sqrt{\tilde{\varepsilon}}}{\mathrm{y}+\sqrt{\tilde{\varepsilon}}}}$, the integral (\ref{f_tilde_deff}) becomes:
\begin{equation}
    \tilde{f}=8\tilde{\varepsilon}\int\frac{\mathrm{z}^{2}\mathrm{dz}}{(\mathrm{z}^{2}-1)R(z)},\label{integral_z}
\end{equation}
where $R(\mathrm{z})=\mathrm{z}^{4}+2\sqrt{\tilde{\varepsilon}}\left(1-\frac{\tilde{\varepsilon}^{3/2}}{4}\right)\mathrm{z}^{3}-2\mathrm{z}^{2}-2\sqrt{\tilde{\varepsilon}}\left(1+\frac{\tilde{\varepsilon}^{3/2}}{4}\right)\mathrm{z}+1$. This polynomial can be factorized as $R(\mathrm{z})=(\mathrm{z}^{2}+\alpha_{1}\mathrm{z}+\beta_{1})(\mathrm{z}^{2}+\alpha_{2}\mathrm{z}+\beta_{2})$. The constants $\alpha_{1,2}$ and $\beta_{1,2}$ satisfy the following equations:
\begin{align}
    \alpha_{1}+\alpha_{2}&=2\sqrt{\tilde{\varepsilon}}\left(1-\frac{\tilde{\varepsilon}^{\frac{3}{2}}}{4}\right),\nonumber\\
    \beta_{1}+\beta_{2}&=-2-\alpha_{1}\alpha_{2},\nonumber\\
    \alpha_{1}\beta_{2}+\alpha_{2}\beta_{1}&=-2\sqrt{\tilde{\varepsilon}}\left(1+\frac{\tilde{\varepsilon}^{\frac{3}{2}}}{4}\right),\nonumber\\
    \beta_{1}\beta_{2}&=1.\label{alpha_beta_jne}
\end{align}
The system of equations (\ref{alpha_beta_jne}) can be exactly solved, but we will solve it perturbatively up to $\mathcal{O}(\tilde{\varepsilon}^{4})$. The solution is given by:
\begin{align}
    \alpha_{1}&=2\sqrt{\tilde{\varepsilon}}\left(1-\frac{\tilde{\varepsilon}^{\frac{3}{2}}}{4}\right)\left[1-\frac{\tilde{\varepsilon}^{2}}{16}\left(1+\frac{\tilde{\varepsilon}^{\frac{3}{2}}}{2}\right)\right],\nonumber\\
    \alpha_{2}&=\frac{\tilde{\varepsilon}^{\frac{5}{2}}}{8}\left(1+\frac{\tilde{\varepsilon}^{\frac{3}{2}}}{4}\right),\nonumber\\
    \beta_{1}&=-1+\frac{\tilde{\varepsilon}^{\frac{3}{2}}}{2}\left(1-\frac{\tilde{\varepsilon}^{\frac{3}{2}}}{4}+\frac{\tilde{\varepsilon}^{2}}{8}\right),\nonumber\\
    \beta_{2}&=-1-\frac{\tilde{\varepsilon}^{\frac{3}{2}}}{2}\left(1+\frac{\tilde{\varepsilon}^{\frac{3}{2}}}{4}+\frac{\tilde{\varepsilon}^{2}}{8}\right).\label{alpha_beta_resenje}
\end{align}
The expressions for the zeros of the polynomial $R(\mathrm{z})$ are:
\begin{align}
    \mathrm{z}_{2}^{+}&=1+\frac{\tilde{\varepsilon}^{\frac{3}{2}}}{4}-\frac{\tilde{\varepsilon}^{\frac{5}{2}}}{16}+\frac{\tilde{\varepsilon}^{3}}{32}+\frac{\tilde{\varepsilon}^{\frac{7}{2}}}{32}-\frac{\tilde{\varepsilon}^{4}}{64}\label{z2+}\\
    \mathrm{z}_{2}^{-}&=-1-\frac{\tilde{\varepsilon}^{\frac{3}{2}}}{4}-\frac{\tilde{\varepsilon}^{\frac{5}{2}}}{16}-\frac{\tilde{\varepsilon}^{3}}{32}-\frac{\tilde{\varepsilon}^{\frac{7}{2}}}{32}-\frac{\tilde{\varepsilon}^{4}}{64}\label{z2-}\\
    \mathrm{z}_{1}^{+}&=1-\tilde{\varepsilon}^{\frac{1}{2}}+\frac{\tilde{\varepsilon}}{2}-\frac{\tilde{\varepsilon}^{\frac{3}{2}}}{4}+\frac{\tilde{\varepsilon}^{2}}{8}-\frac{\tilde{\varepsilon}^{\frac{5}{2}}}{16}+\frac{\tilde{\varepsilon}^{3}}{32}-\frac{\tilde{\varepsilon}^{4}}{128},\label{z1+}\\
    \mathrm{z}_{1}^{-}&=-1-\tilde{\varepsilon}^{\frac{1}{2}}-\frac{\tilde{\varepsilon}}{2}+\frac{\tilde{\varepsilon}^{\frac{3}{2}}}{4}+\frac{3\tilde{\varepsilon}^{2}}{8}+\frac{3\tilde{\varepsilon}^{\frac{5}{2}}}{16}-\frac{\tilde{\varepsilon}^{3}}{32}+\frac{5\tilde{\varepsilon}^{4}}{128}.\label{z1-}
\end{align}
Now, the integral (\ref{integral_z}) can be calculated directly. The result is given by the following expression:
\begin{widetext}
    \begin{align}
        \tilde{f}(\mathrm{z})&=\frac{4}{\tilde{\varepsilon}}\ln{\bigg{|}\frac{\mathrm{z}^{2}+\alpha_{2}\mathrm{z}+\beta_{2}}{1-\mathrm{z}^{2}}\bigg{|}}\label{f_tilde_resenje}\\
        &\hspace{5mm}+\left(1-\frac{5}{4}\tilde{\varepsilon}\right)\ln{|\mathrm{z}-\mathrm{z}_{1}^{+}|}-\left(1-\frac{1}{4}\tilde{\varepsilon}\right)\ln{|\mathrm{z}-\mathrm{z}_{1}^{-}|}-\left(1-\frac{3}{4}\tilde{\varepsilon}\right)\ln{|\mathrm{z}-\mathrm{z}_{2}^{+}|}+\left(1+\frac{3}{4}\tilde{\varepsilon}\right)\ln{|\mathrm{z}-\mathrm{z}_{2}^{-}|}+\frac{\tilde{\varepsilon}}{4}\ln{\frac{\tilde{\varepsilon}}{16}}.\nonumber
    \end{align}
\end{widetext}
The appropriate choice for the constant $S$ appearing in (\ref{J_resenje}) is $S=\frac{4}{\tilde{\varepsilon}}+\frac{9}{2}$, which guarantees that the divergent part of equations (\ref{x_resenje_>}) and (\ref{x_resenje_<}) can be adsorbed within the function $\mathcal{J}(\mathrm{y})$. Expanding the function $\mathcal{J}$, we arrive at the following expression:
\begin{widetext}
    \begin{equation}
        \mathcal{J}(\mathrm{y})=\mathrm{y}+\ln{|\mathrm{y}-1|}-\frac{\tilde{\varepsilon}}{8}\left[-4\ln{\mathrm{y}}-(2\mathrm{y}-1)\ln{|\mathrm{y}-1|}-\ln^{2}{|\mathrm{y}-1|}+\frac{2}{\mathrm{y}-1}-5\mathrm{y}\right]+\frac{9}{2}.\label{J_razvoj}
    \end{equation}
\end{widetext}
From (\ref{J_razvoj}) we can conclude that the function $\mathcal{J}$ has absorbed all the divergent terms, as well as $\ln{\delta}$ (after substituting $\mathrm{y}=\frac{\mathrm{x}}{\tilde{a}}$) within equations (\ref{x_resenje_>}) and (\ref{x_resenje_<}). Those equations can now be rewritten as:
\begin{widetext}
    \begin{align}
        &\left(1+\frac{\varepsilon}{4(\lambda a)^{2}}\ln{\delta}\right)\mathcal{J}(\mathrm{y})-\frac{\varepsilon}{8(\lambda a)^{2}}\left[(2\mathrm{y}+3)\ln{\mathrm{y}}+5\mathrm{y}-2\mathrm{L}(\mathrm{y})-\sum_{n=1}^{\infty}\frac{(1-\delta)^{n}}{n!}\mathrm{S}_{n}^{>}(\mathrm{y})\right]\nonumber\\
        &\hspace{6cm}=-\ln{\delta}+\mathcal{J}(\hat{\mathrm{x}})-\frac{\varepsilon}{8(\lambda a)^{2}}\left[(2\hat{\mathrm{x}}+3)\ln{\hat{\mathrm{x}}}+5\hat{\mathrm{x}}-2\mathrm{L}(\hat{\mathrm{x}})\right],\hspace{4mm}\text{when }\mathrm{y}>1\label{y_>_resenje}\\
        &\left(1+\frac{\varepsilon}{4(\lambda a)^{2}}\ln{\delta}\right)\mathcal{J}(\mathrm{y})-\frac{\varepsilon}{8(\lambda a)^{2}}\left[\frac{2\mathrm{y}^{2}}{\mathrm{y}-1}\ln{\mathrm{y}}+3\mathrm{y}-\sum_{n=1}^{\infty}\frac{\delta^{n}}{n!}\mathrm{S^{<}_{n}}(\mathrm{y})\right]\nonumber\\
        &\hspace{6cm}=-\ln{\delta}+\mathcal{J}(\hat{\mathrm{x}})-\frac{\varepsilon}{8(\lambda a)^{2}}\left[(2\hat{\mathrm{x}}+3)\ln{\hat{\mathrm{x}}}+5\hat{\mathrm{x}}-2\mathrm{L}(\hat{\mathrm{x}})\right],\hspace{4mm}\text{when }\mathrm{y}<1.\label{y_<_resenje}
    \end{align}
\end{widetext}
Note that both equations are well defined around $\mathrm{y}=1+\mathcal{O}(\tilde{\varepsilon})$, which has been the main goal of this section. Let us find the general solution when $\mathrm{y}=1+\frac{\varepsilon}{(\lambda a)^{2}}\eta$ and $\tilde{a}\gg0$. This implies $\hat{\mathrm{x}}=1+\frac{\varepsilon}{(\lambda a)^{2}}\mu$. Both equations (\ref{y_>_resenje}) and (\ref{y_<_resenje}) become:
\begin{equation}
    \tilde{a}\mathcal{J}\left(1+\tilde{\varepsilon}\eta\right)=-\ln{\delta}+\mathcal{J}\left(1+\tilde{\varepsilon}\mu\right).\label{sns1}
\end{equation}
This formula (\ref{sns1}) is further reduced to:
\begin{equation}
    \tilde{a}\exp{\left(\frac{\tilde{\varepsilon}}{4}\tilde{f}(1+\tilde{\varepsilon}\eta)\right)}=\exp{\left(\frac{\tilde{\varepsilon}}{4}\tilde{f}(1+\tilde{\varepsilon}\mu)\right)}.\label{sns2}
\end{equation}
Careful examination of the $\tilde{f}(\mathrm{z})$ function leads to the following:
\begin{equation}
    \tilde{a}^{\frac{4}{\tilde{\varepsilon}}}=\bigg{|}\frac{4\mu-1}{4\eta-1}\bigg{|}.\label{resenje_oko_1_opste}
\end{equation}
Since the solution is given up to the first-order of perturbation theory, we have $\tilde{a}^{4/\tilde{\varepsilon}}=\delta$. We expect that this relation holds in higher orders of perturbation theory as well. Then, equation (\ref{resenje_oko_1_opste}) simplifies to:
\begin{equation}
    \delta=\bigg{|}\frac{4\mu-1}{4\eta-1}\bigg{|}.\label{resenje_oko_1}
\end{equation}
\end{document}